\newcommand{\old}[1]{{}}
\title{Minimum Link Fencing}
\author{Sujoy {Bhore}}{Department of Computer Science \& Engineering, Indian Institute of Technology Bombay, India. \and\url{https://sites.google.com/view/homepage-of-sujoy-bhore/home}}{sujoy.bhore@gmail.com}{https://orcid.org/0000-0003-0104-1659}{}
\author{Fabian {Klute}}
{Department of Information and Computing Sciences, Utrecht University, Netherlands
\and \url{https://www.uu.nl/staff/FMKlute}}
{f.m.klute@uu.nl}{https://orcid.org/0000-0002-7791-3604}
{Supported by the Netherlands Organisation for Scientific Research (NWO) under
project no. 612.001.651 and the Austrian Science Foundation (FWF) grant J4510.}
\author
{Maarten L\"{o}ffler}
{Department of Information and Computing Sciences, Utrecht University, Netherlands
\and \url{https://www.uu.nl/staff/MLoffler/Profile}}
{m.loffler@uu.nl}
{}{}
\author{Martin {N\"ollenburg}}{Algorithms and Complexity Group, TU Wien, Austria \and \url{https://www.ac.tuwien.ac.at/people/noellenburg/}}{noellenburg@ac.tuwien.ac.at}{https://orcid.org/0000-0003-0454-3937}{}
\author{Soeren {Terziadis}}{Algorithms and Complexity Group, TU Wien, Austria \and \url{https://www.ac.tuwien.ac.at/people/sterziadis/}}{sterziadis@ac.tuwien.ac.at}{https://orcid.org/0000-0001-5161-3841}{}
\author{Ana\"is {Villedieu}}{Algorithms and Complexity Group, TU Wien, Austria \and \url{https://www.ac.tuwien.ac.at/people/avilledieu/}}{avilledieu@ac.tuwien.ac.at}{https://orcid.org/0000-0001-6196-8347}{Supported by the Austrian Science Fund (FWF) under grant P31119.}
\authorrunning{S. Bhore, F. Klute, M. L\"offler, M. N\"ollenburg, S. Terziadis and A. Villedieu}
\keywords{computational geometry, polygon nesting, polygon separation} %
\@nx\else[{#1}]\fi}%
\newtheorem{remarkenv}{Remark}
\newenvironment{sketch}{%
  \proof}{\endproof}
\newcommand{\MLF}{Minimum Link Fencing\xspace}
\newcommand{\BMLF}{Bounded Minimum Link Fencing\xspace}
\newcommand{\SMLF}{Simply Bounded Minimum Link Fencing\xspace}
\newcommand{\CMLF}{Convex Bounded Minimum Link Fencing\xspace}
\newcommand{\fabianchange}[1]{{\color{black} #1}}
\newcommand{\soerenchange}[1]{{\color{black} #1}}
\newcommand{\isaacrev}[1]{{#1}}
\newif\ifAppendixProofs
\newif\ifArxiv
\newcommand{\prooffromappendix}[1]{\ifAppendixProofs {#1} \fi}
\newcommand{\sublab}[1]{\textbf{\textsf{(#1)}}}
\newcommand{\congruent}{congruent}
\newcommand{\colorletter}{\ensuremath{\kappa}\xspace}
\begin{document}

\maketitle

\begin{abstract}
    We study a variant of the geometric multicut problem, where we are given a set $\mathcal{P}$ of colored and pairwise interior-disjoint polygons in the plane.
    The objective is to compute a set of simple closed polygon boundaries (\emph{fences}) that separate the polygons in such a way that any two polygons that are enclosed by the same fence have the same color, and the total number of links
    of all fences is minimized.
    We call this the \emph{minimum link fencing} (\textsf{MLF}) problem
    and consider the natural case of
    \emph{bounded minimum link fencing} (\textsf{BMLF}), where $\mathcal{P}$ contains a polygon $Q$ that is unbounded in all directions and can be seen as an outer polygon. 
    We show that \textsf{BMLF} is \textsf{NP}-hard in general and
    \isaacrev{that it is \XP-time solvable 
    when each fence contains at most two polygons and 
    the number of segments per fence is the parameter.}
    \isaacrev{Finally, we present an $O(n \log n)$-time algorithm
    for the case that} the convex hull of $\mathcal{P} \setminus \{Q\}$ does not intersect $Q$.
\end{abstract}

\section{Introduction}

\begin{figure}
\centering

\begin{minipage}[t]{0.3\textwidth}
        \centering
        \includegraphics[page=2, width=\linewidth]{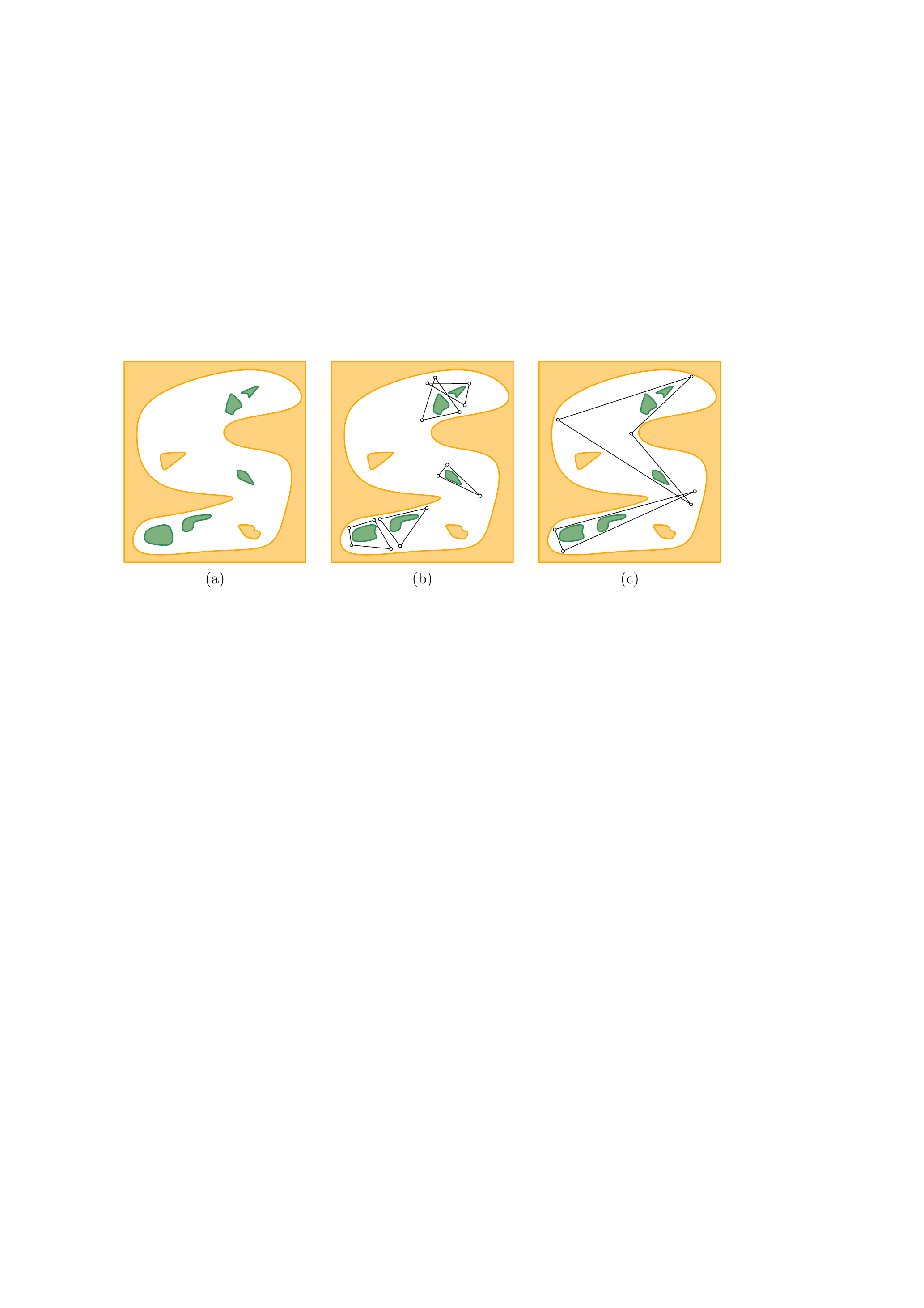}
    \end{minipage}
    \quad
    \begin{minipage}[t]{0.3\textwidth}
        \centering
        \includegraphics[page=3, width=\linewidth]{figures/example2.pdf}
    \end{minipage}
    \quad
    \begin{minipage}[t]{0.3\textwidth}
        \centering
        \includegraphics[page=4, width=\linewidth]{figures/example2.pdf}
    \end{minipage}

\caption
{
  Two sets of polygons in the plane (left) with different colors (green and yellow). The yellow set effectively acts as an outer polygon with holes. 
  Separating the two sets with, possibly intersecting, individual fences (middle) can lead to significantly more links in the fences (here 16) than grouping same-colored polygons (right), which achieves this with just seven links.
}
\label{fig:example}
\end{figure}

In the {\em geometric multicut} problem~\cite{aglr-gm-icalp19}, we are given $\colorletter$ disjoint sets of polygons in the plane, each with a different {\em color}, and are asked for a subdivision of the plane such that no cell of the subdivision contains multiple colors.
The goal is to minimize the total length of the subdivision edges.

A different kind of separation is achieved in the {\em polygon nesting} problem~\cite{DBLP:journals/iandc/YapABO89}, where for two polygons $P$ and $Q$ with $P\subset Q$ one asks for a  polygon $P'$ with the smallest number of links, s.t.\ $P\subset P'\subset Q$.
There exists a series of work that addressed the algorithmic complexity of nesting problems for various polygon families~\cite{DBLP:journals/iandc/YapABO89, ghosh1991computing, wang_finding_1991, das1991approximation,
mitchell1995separation}. 
See Section~\ref{relatedwork} for more detail.

In this paper, we consider a variant of geometric multicut inspired by polygon nesting, where we separate the sets from each other with a set of closed polygon \isaacrev{boundaries} called \emph{fences}, which enclose only polygons of one color and have the smallest possible number of links.
If one or more sets are not connected, %
we need to solve the combinatorial problem of choosing which polygons should be grouped in each fence.
\cref {fig:example} illustrates the problem.
Some variants of the fencing problem already become \textsf{NP}-hard for point objects with two colors, e.g., if we require the fence to be a single closed curve~\cite{DBLP:journals/prl/EadesR93}.

In this paper, we assume the input sets are collections of polygons, one color covers the plane minus a single polygonal hole (the outer polygon, a parallel to polygon nesting), and we will focus on the case $\colorletter=2$ of two colors.
We use $n$ to denote the total number of corners of the input polygons.
Even in this simple setting the problem turns out to be non-trivial.
If both sets are connected, then the problem is equivalent to finding a minimal {\em nested} polygon, which can be solved in $O(n \log n)$ time~\cite {AGGARWAL198998}.
If both sets are not connected we show this problem to be \textsf{NP}-hard in \cref{sec:nphard}.
Note the contrast to the geometric multicut problem, which is polynomially solvable for $\colorletter=2$~\cite{abrahamsen2018fast} but becomes \NP-hard when $\colorletter=3$~\cite{aglr-gm-icalp19}.
In \cref{sec:XPalgo} we show that, when restricting every fence to contain at most two polygons, the problem admits an \XP-algorithm when parameterized by the maximal number of segments per fence,
a result which holds for any $\colorletter$.
Finally, in \cref{sec:algorithm}, we show that the problem is polynomial-time solvable 
if the convex hull of the second color (the {\em inner polygons}) is contained in the outer polygon and the first color is connected.

\subsection {Problem Definition}
Throughout this paper we consider polygons in $\mathbb R^2$ without
self-intersections but potentially with holes.
Moreover, we consider a polygon as the boundary together with its interior, unless stated otherwise.
We consider the following problem.

\begin{definition}[\MLF (MLF)]\label{def:multicut_adapt}
We are given $n$ pairwise interior-disjoint polygons $\mathcal{P} = \{P_1,\ldots,P_\soerenchange{{|\mathcal{P}|}}\}$
in the plane, with a coloring function $f: \mathcal{P} \rightarrow \{1,\dots,\colorletter\}$, which assigns a color to every input polygon.
We write $\mathcal{P}_i = \{P \mid f(P) = i\}$.
We want to find a set of simple closed polygon boundaries $\mathcal{F} = \{{F_1}, \ldots, {F_m}\}$ 
such that the total number of links $|F|$ on the boundary of $F = \bigcup_{i=1}^k {F_i}$ is minimized
and if two polygons $P_a$ and $P_b$ are enclosed by the same fence or are both in $\mathbb{R}^2\setminus \bigcup_{i=1}^k \overline{F_i}$, where $\overline{F_i}$ is the polygon bounded by $F_i$, 
then $f(P_a)= f(P_b) $.
We  
call $F_i$ a fence and $\mathcal{F}$ a \emph{minimum link fencing} of $\mathcal{P}$.
\end{definition}

\isaacrev{Note the important difference in Definition~\ref{def:multicut_adapt} between $\mathcal{F}$, which is the set of all fences of a solution, and $F$, which is the union over all fences, i.e., one (\fabianchange{possibly} disconnected) polygon.
Thus $|\mathcal{F}|$ is the number of fences and $|F|$ is the number of all segments in these fences.}

\begin{figure}
\centering
\begin{subfigure}[t]{0.3\linewidth}
	\centering
	\includegraphics[page=8]{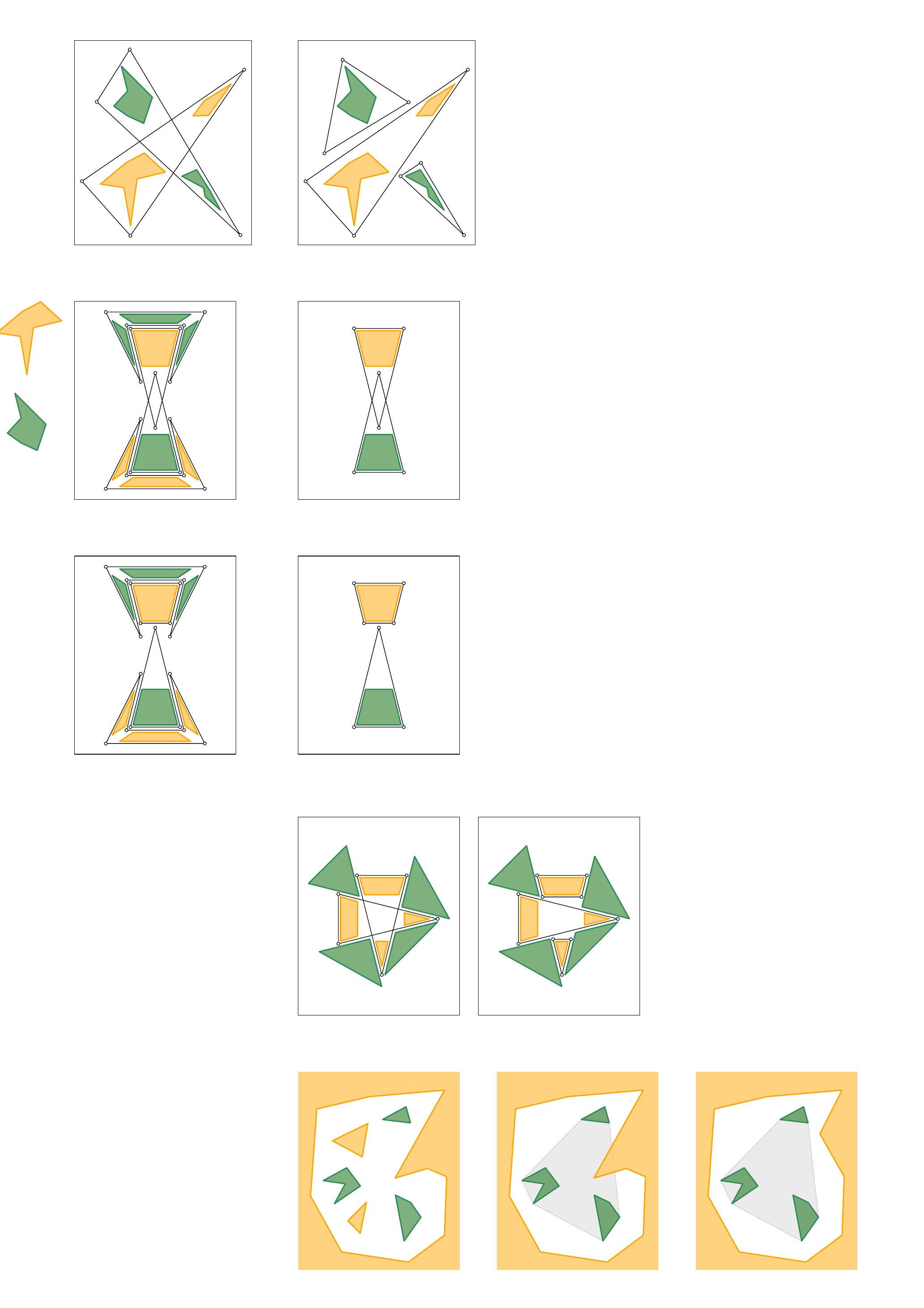}
	\subcaption{\textsf{BMLF}}
	\label{fig:setting_BMLF}
\end{subfigure}
\quad
\begin{subfigure}[t]{0.3\linewidth}
	\centering
	\includegraphics[page=5]{figures/settings.pdf}
	\subcaption{\textsf{SMLF}}
	\label{fig:setting_SMLF}
\end{subfigure}
\quad
\begin{subfigure}[t]{0.3\linewidth}
	\centering
	\includegraphics[page=6]{figures/settings.pdf}
	\subcaption{\textsf{CMLF}}
	\label{fig:setting_CMLF}
\end{subfigure}
\caption{Different problem inputs corresponding to (a) \textsf{BMLF}, (b) \textsf{SMLF} and (c) \textsf{CMLF}. In (b) and (c) the convex hull of all input polygons indicated in gray.}
\label{fig:settings}
\end{figure}

Throughout the paper we refer to $\mathbb{R}^2\setminus \bigcup_{i=1}^{|\mathcal{P}|} P_i$ as the \emph{free space} (between polygons). 
We refer to $\mathcal{P}$, which contains polygons of $\colorletter$ different colors, as $\colorletter$-colored and to the problem setting as the $\colorletter$-colored problem.
We consider several problem variations. 

If there exists a polygon $Q \in \mathcal{P}$ 
which is unbounded in every direction, i.e. $\mathbb{R}^2 \setminus Q$ is finite, this polygon $Q$ effectively acts as an outer boundary.
In this case we call the problem \BMLF (\textsf{BMLF}).
We denote the polygon $Q$ as the \emph{outer polygon}.
\isaacrev{\fabianchange{As a consequence, the size of the outer polygon automatically bounds the length of any link in a fence. Else, in general, one fence could conain a very long link}, while retaining small complexity \fabianchange{when counting the number of links only}.} 
Note that $Q$ can be emulated in an instance of \MLF, by adding a large rectangular polygon $P_c\setminus (\mathbb{R}^2 \setminus Q)$, i.e., a large rectangle, of which the area, which did not belong to $Q$ is cut out (light blue channel in Figure~\ref{fig:setting_BMLF}).
If $Q$ is the only polygon of its color $f(Q)$ we call this setting \SMLF (\textsf{SMLF}).
Moreover, if in an instance of \textsf{SMLF} we have $CH(\bigcup_{i=1}^\colorletter \mathcal P_i\setminus Q) \subset \mathbb{R}^2\setminus Q$,
i.e., the convex hull of all input polygons except $Q$ does not intersect $Q$, we speak of \CMLF (\textsf{CMLF}).
The differences are illustrated in \cref{fig:settings}.

\subsection{Related Work.}\label{relatedwork}
Despite the fact that the problem is natural and fundamental,
little previous work exists.
The problem of \emph{enclosing} a set of objects by a shortest system of
fences has recently been considered with a single set
$B_1$~\cite{abrahamsen2018fast}. The task is to ``enclose'' the
components of $B_1$ by a shortest system of fences.
This can be formulated as a special case of our problem with
$\colorletter=2$ colors: We add an additional
set $B_2$, far away from $B_1$
and large enough so that
it is never optimal to surround
$B_2$.
Thus, we have to enclose all components of $B_1$ and separate them from
the unbounded region. In this setting, there will be no nested fences.
Abrahamsen et al.~\cite{abrahamsen2018fast} gave an $O(n\;\text{polylog}\;n)$-time algorithm for inputs that consist of $n$ unit disks.

Some variations with additional constraints on the fence become
\NP-hard already for point objects with two colors. For example, if we require the fence to
be a single closed curve, it has been observed by
Eades and Rappaport~\cite{DBLP:journals/prl/EadesR93} already in 1993
that one can model the Euclidean Traveling Salesman Problem
of computing the shortest tour through a given set of sites
 by placing
two tiny objects of opposite color next to each site.
If we require the fence to be connected, the same construction will
lead to the Euclidean Steiner Tree Problem, which was shown to be \NP-hard by Garey et al.~in 1977~\cite{garey1977complexity}.

\smallskip\noindent\textbf{Polygon Nesting \& Separation.} Polygon nesting is considered to be a fundamental problem in computational geometry, and has been extensively studied since its inception. Aggarwal et al.~\cite{DBLP:journals/iandc/YapABO89} considered the problem of finding a polygon nested between two given convex polygons that has a minimal number of vertices. They gave an $O(n\log k)$ time algorithm for solving the problem, where $n$ is the total number of vertices of the given polygons, and $k$ is the number of vertices of a minimal nested polygon. Das~\cite{das1991approximation} considered a variant of \textsf{MLF} in his thesis, which restricts every fence to enclose exactly one polygon, and showed that the problem is \NP-hard.  
Given a polygon $Q$ of $m$ vertices inside another polygon $P$ of $n$ vertices,
Ghosh~\cite{ghosh1991computing} gave an $O((n+m)\log k)$ time algorithm for constructing a minimum nested convex polygon, where $k$ is the number of vertices of the output polygon, improving upon the $O((n+m)\log(n+m))$ time algorithm of Wang and Chan~\cite{WangC86}. 
However, on the other hand, given a family of disjoint polygons 
$P_1, P_2,\ldots, P_k$ in the plane, and an integer parameter $m$, it
is \textsf{NP}-complete to decide if the $P_i$’s can be pairwise separated by a polygonal family with at most $m$ edges. 
Mitchell and Suri~\cite{mitchell1995separation} presented efficient approximation algorithms for constructing separating families of near-optimal size.  

\medskip

\textit{Full proofs of statements marked by ($\star$) are found in the \ifArxiv Appendix\else full paper~\cite{arxiv}\fi.}

\section{Two-colored \textsf{BMLF} is \NP-hard}
\label{sec:nphard}
In this section we will call polygons of color 1 \emph{boundary polygons} and polygons of color 2 \emph{inner polygons}.
An instance of planar $3,4$-\SAT consists of a Boolean CNF-formula $\phi $ with a set of variables $ \mathcal{V} = \{v_1, \dots,  v_n\} $ and a set of clauses $ \mathcal{C} \subset 2^{\mathcal{V}} $, s.t.\ every clause is a disjunction of three literals and every variable occurs at most four times as a literal in a clause.
Additionally, we are given the embedded plane incidence graph $G_\phi = (\mathcal{V} \cup \mathcal{C}, E)$, where $E=\{vc \mid v \in \mathcal V, c \in \mathcal C, v \text{ occurs as a literal in } c\}$. %
\isaacrev{\fabianchange{
It is known that deciding if a $3,4$-\SAT-formula has a satisfying assignment is \NP-complete}~\cite{jansen1995minimum}.}

Given an instance of planar $3,4$-\SAT{} we create an instance of 2-colored \textsf{BMLF} $\mathcal{P}$, emulating the shape of
$G_\phi$ with one unbounded outer polygon $Q$ and multiple boundary polygons of the same color $f(Q) = 1$ (\cref{fig:schematized_construction}), s.t.\ $\phi$ is satisfiable if and only if there exists a minimum link fencing for $\mathcal{P}$ with at most a certain fixed number of total segments.

Note that each gadget is described as a basic construction of gray polygons, in which inner polygons are placed.
This is possible, because we will invert all gray polygons at the end of the reduction, s.t.\  the area of their union makes up exactly the actual free space of our entire construction, see \cref{fig:schematized_construction}.
Stating that fences are computed inside the gray polygons should be understood as fences being placed in the free space between polygons.
Throughout this reduction we distinguish fences based on the inner polygons they include.
We call two fences $F$ and $F'$ \emph{\congruent}, if and only if they enclose the same set of inner polygons.
We call two fencings $\mathcal{F}$ and $\mathcal{F}'$ \emph{\congruent} if there is a bijective mapping $f: \mathcal{F} \rightarrow \mathcal{F}'$, s.t., every $F\in \mathcal{F}$ is \congruent{} to $f(F) \in \mathcal{F}'$.

Let $\mathcal P$ be an instance of \textsf{BMLF} and
$S_1$, $S_2$, and $S_3$ disjoint connected subsets of $\mathbb R^2 \setminus \cup_{P\in\mathcal P} P$.
We call the ordered set $\mathcal S = \{S_1, S_2, S_3\}$ a \emph{non-collinear triple} 
if there \isaacrev{are} no three points $p_1\in S_1$, $p_2\in S_2$, and $p_3\in S_3$ 
such that the straight-line segment $s$ from $p_1$ to $p_3$ contains $p_2$
and $s$ \isaacrev{lies completely} inside $\mathbb R^2 \setminus \cup_{P\in\mathcal P} P$.
\isaacrev{T}he choice of $S_2$ only matters if there exists a straight-line segment in $\mathbb R^2 \setminus \cup_{P\in\mathcal P} P$
connecting points in $S_1$ and $S_3$.
\isaacrev{Therefore} we can often omit $S_2$ from the description of the triple or 
assume it as arbitrarily chosen.
We call $S_2$ the \emph{bend-set} of $\mathcal S$.
Let $\mathcal F$ be a fencing of $\mathcal P$, and $S_1$, $S_2$, and $S_3$ a non-collinear triple.
We say a fence $F \in \mathcal F$ \emph{crosses} the triple $\{S_1, S_2, S_3\}$ if
the boundary of $F$ contains at least one point $p_i$ from each set $S_i$ for $i = 1,2,3$ and
there is a cyclic traversal of the boundary of $F$ in which we see first $p_1$, then $p_2$ and finally $p_3$.
We write $]S_1,S_3[$ to denote the part of the boundary of $F$ that lies in between $p_1$ and $p_3$ and
contains $p_2$.

\begin{observation}\label{obs:onebend}
    Any fence in a fencing for an instance of \textsf{BMLF}
    crossing a non-collinear triple $\{S_1, S_2, S_3\}$ 
    contains at least one bend in the interval $]S_1, S_3[$.
\end{observation}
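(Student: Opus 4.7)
The plan is a direct proof by contradiction, reducing the claim immediately to the definition of a non-collinear triple. I would assume that some fence $F \in \mathcal F$ crosses the triple $\{S_1, S_2, S_3\}$ at points $p_1 \in S_1$, $p_2 \in S_2$, $p_3 \in S_3$ appearing in this cyclic order on the boundary of $F$, and suppose for contradiction that the boundary arc $]S_1, S_3[$ from $p_1$ to $p_3$ through $p_2$ contains no bend.

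First I would observe that any bend-free arc on the boundary of a polygonal curve is a single straight-line segment; in particular $]S_1, S_3[$ coincides with the segment $\overline{p_1 p_3}$, so $p_2$ lies on $\overline{p_1 p_3}$. Next, I would use the fact that the fences in any fencing of an instance of \textsf{BMLF} lie entirely in the free space $\mathbb R^2 \setminus \bigcup_{P \in \mathcal P} P$: if some input polygon were crossed by a fence, it would be split between the interior and exterior of that fence, violating the monochromatic enclosure requirement of Definition~\ref{def:multicut_adapt}. Consequently the entire segment $\overline{p_1 p_3}$ lies in the free space.

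But then the three points $p_1, p_2, p_3$ witness exactly the configuration that the definition of a non-collinear triple forbids, namely three points on a common straight-line segment in the free space, one from each of $S_1, S_2, S_3$ with the middle one in the bend-set. This contradiction shows that the arc $]S_1, S_3[$ must contain at least one bend, proving the observation.

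The only delicate point, and essentially the sole obstacle, is making the free-space property of fences explicit, since it is assumed rather than stated in Definition~\ref{def:multicut_adapt}; once that is justified by the separation requirement, the rest is a one-line consequence of the definition of a non-collinear triple.
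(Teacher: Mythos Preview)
Your argument is correct and is exactly the unpacking of the definition of a non-collinear triple that the paper has in mind; the paper in fact gives no proof at all, treating the observation as immediate. Your remark about fences lying in the free space is well-taken---the paper indeed assumes this implicitly (cf.\ the sentence ``fences being placed in the free space between polygons'' in the reduction setup) rather than stating it in Definition~\ref{def:multicut_adapt}.
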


For $t > 0$ let $\mathcal S_1,\ldots,\mathcal S_t$ be non-collinear triples 
that are crossed by a fence $F$ of a fencing $\mathcal F$ for some instance of \textsf{BMLF}.
Let $\mathcal S_i = \{S_j,S_{j+1},S_{j+2}\}$ for $i = 1,\ldots,t$ and $j = 3(i-1) + 1$.
We say that the triples are crossed by $F$ \emph{in-order} if there exist points $p_i$ on $F$
such that $p_i$ is in the bend-set of $\mathcal S_i$ and there
exists a cyclic traversal of $F$ in which we see the points $p_i$ in order of their indices.
Without loss of generality we will assume throughout that when $F$ crosses $\mathcal S_1,\ldots,\mathcal S_t$ in-order
it always crosses for some $\mathcal S_i$ first the set $S_i$, then the bend-set $S_{i+1}$, and finally $S_{i+2}$.
We write $]S_a,S_b[$ with $a < b$ and $a = 1,\ldots,3t-1$ for the part of the boundary of $F$ that 
lies between a point $p_a \in S_a$ and $p_b\in S_b$ such that there exist points
$p_a,\ldots,p_b$ with $p_i \in S_i$ that we see in this order in a cyclic traversal of $F$.
For a segment $s$ of $F$ we say it is \emph{completely} contained in $]S_a,S_b[$ if
the start- and endpoint of $s$ are contained in $]S_a,S_b[$ for any choice of points $p_a$ and $p_b$.

\begin{figure}
    \centering
    \includegraphics[page=2]{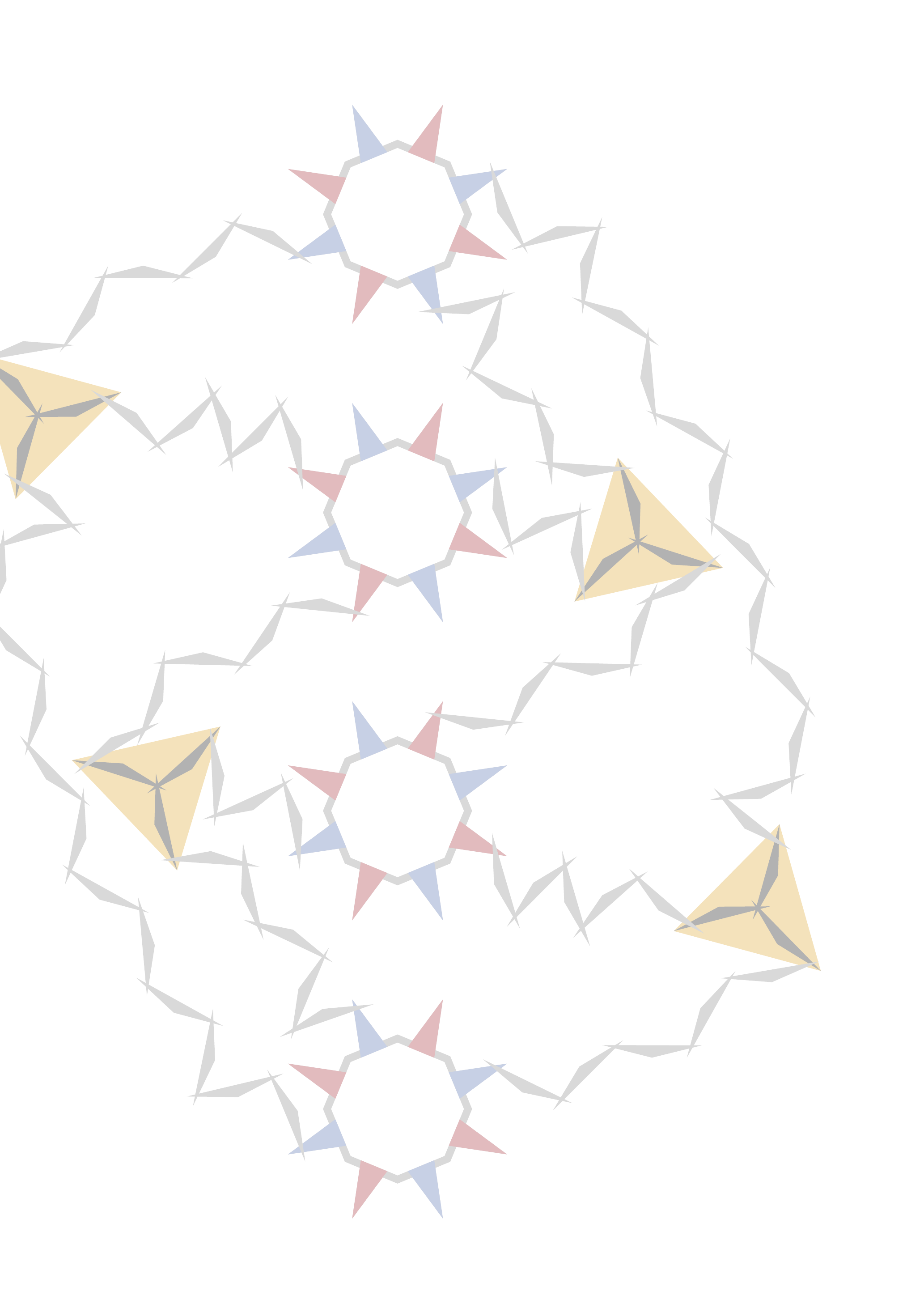}
    \caption{A (schematized) complete construction for a small instance $(v_1 \lor v_2 \lor \neg v_3) \land (v_1 \lor v_3 \lor v_4)$.
    The incidence graph is shown in the top right.
    Fences are highlighted in green.
    Also note that the boundary polygons make up most of the available area including an unbounded outer polygon $Q$ as shown in the bottom right corner.
    For better readability, we will invert these colors in all subsequent figures.
    }
    \label{fig:schematized_construction}
\end{figure}

We say two non-collinear triples 
$\mathcal S = \{S_1,S_2,S_3\}$ and $\mathcal S' = \{S_1',S_2',S_3'\}$ are \emph{non-overlapping} if
there exist no two segments that intersect all six elements of $\mathcal S \cup \mathcal S'$ in order
$S_1$, $S_2$, $S_3$, $S_1'$, $S_2'$, $S_3'$.
In other words we require at least three different straight-line segments 
to connect a point $p_1\in S_1$ with a point $p_6 \in S_3'$ and 
containing points $p_2\in S_2$, $p_3\in S_3$, $p_4\in S_1'$, and $p_5\in S_2'$ 
in order of their indices.
Observe that by this definition the non-collinear triples $\{S_1,S_2,S_3\}$ and
its reverse $\{S_3,S_2,S_1\}$ are non-overlapping.
For a sequence of non-collinear triples $\mathcal S_1,\ldots,\mathcal S_t$ we say that
the triples are non-overlapping if 
$S_i$ is non-overlapping with $S_{i+1}$ for $i = 1,\ldots,t+1 \mod t$.

Observation~\ref{obs:onebend} together with the definition of non-overlapping gives the following.
\begin{observation}\label{obs:kbends}
    Any fence in a fencing for an instance of \textsf{BMLF}
    crossing $t > 0$ non-overlapping non-collinear triples 
    $\mathcal S_i = \{S_j,S_{j+1},S_{j+2}\}$ for $i = 1,\ldots,t$ and $j = 3(i-1) + 1$ in-order 
    contains at least $t$ bends and therefore at least $t-1$ complete straight-line segments.
    in the interval $]S_1, S_{3t}[$.
\end{observation}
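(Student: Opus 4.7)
The plan is to apply Observation~\ref{obs:onebend} separately to each of the $t$ non-collinear triples $\mathcal{S}_i$, obtaining for each $i$ at least one bend of $F$ in the sub-arc $]S_{3(i-1)+1}, S_{3(i-1)+3}[$ of $F$'s boundary that passes through the bend-set $S_{3(i-1)+2}$. The in-order crossing hypothesis, together with the WLOG convention fixed just before the statement (that for each $\mathcal{S}_i$ the traversal visits $S_{3(i-1)+1}$, then $S_{3(i-1)+2}$, then $S_{3(i-1)+3}$), lets me fix a cyclic traversal of $F$ that encounters representatives of $S_1, S_2, \ldots, S_{3t}$ in this index order, so the $t$ sub-arcs are well-defined and appear consecutively along this traversal, all inside $]S_1, S_{3t}[$.

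The crucial step is to argue that these sub-arcs are genuinely pairwise disjoint on $F$, so the $t$ bends are not double-counted. Here I invoke the non-overlapping property: for two consecutive triples $\mathcal{S}_i$ and $\mathcal{S}_{i+1}$ it rules out connecting a point in $S_{3(i-1)+1}$ to a point in $S_{3i+3}$ through all six sets in order using only two straight segments. Combined with the in-order crossing assumption, this forces the six crossings to be realized at six distinct points of $F$, so $]S_{3(i-1)+1}, S_{3(i-1)+3}[$ and $]S_{3i+1}, S_{3i+3}[$ must be disjoint sub-arcs of the boundary of $F$. Iterating this pairwise argument over $i = 1, \ldots, t-1$ yields pairwise disjointness of all $t$ sub-arcs and hence $t$ distinct bends of $F$ in $]S_1, S_{3t}[$.

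For the second half, once $t$ bends are ordered along $]S_1, S_{3t}[$, the $t-1$ straight-line pieces of $F$ between consecutive bends have both endpoints strictly inside this sub-arc and therefore qualify as complete straight-line segments in the sense defined just before the statement. The main obstacle is the disjointness argument in the second paragraph: the non-overlapping condition is phrased as an existential statement about straight-line connecting segments in the free space, and turning it into a combinatorial claim that no single bend of the specific fence $F$ can be charged to two different triples requires a careful case analysis on how the cyclic traversal of $F$ can interleave its visits to the sets of two consecutive triples.
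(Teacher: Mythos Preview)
Your proposal is correct and follows exactly the approach the paper intends: the paper's entire justification for this observation is the single sentence ``Observation~\ref{obs:onebend} together with the definition of non-overlapping gives the following,'' and your write-up simply unpacks those two ingredients---one bend per triple via Observation~\ref{obs:onebend}, disjointness of the corresponding sub-arcs via the non-overlapping condition---in more detail than the paper itself provides. Your closing remark that the disjointness step hides a small case analysis is apt, but the paper does not spell it out either.
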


We can now show a lower bound for the number of links a minimum-link fence 
uses in any solution of a \textsf{BMLF} instance.
The lower bound essentially follows from Observation~\ref{obs:kbends} after 
observing that the segment closing the fence can never reuse one of 
the $t-1$ segments that lie completely inside the sequence of non-collinear triples.
\ifAppendixProofs
    \begin{restatable}{lemma}{lowerbound}\label{lem:lowerbound}
        Let $\mathcal P$ be an instance of \textsf{BMLF} and 
        $\mathcal F$ a minimum-link fencing for $\mathcal P$,
        then any fence $F \in \mathcal F$ that crosses $t > 0$ non-overlapping non-collinear triples in-order
        consists of at least $t$ straight-line segments.
    \end{restatable}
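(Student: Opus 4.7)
The plan is to derive the bound as a short corollary of Observation~\ref{obs:kbends} together with the fact that a fence is a closed curve and therefore has to close itself off after passing through the last of the triples.

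First I would apply Observation~\ref{obs:kbends} directly to $F$ and the $t$ non-overlapping non-collinear triples $\mathcal S_1,\ldots,\mathcal S_t$ that it crosses in-order. This immediately produces at least $t$ bends in the sub-boundary $]S_1, S_{3t}[$ of $F$ and, between consecutive bends, at least $t-1$ straight-line segments that are completely contained in that interval in the sense defined just above the observation (both endpoints lie in $]S_1, S_{3t}[$ for every valid choice of representative points on $F$).

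Next I would handle the closing portion of $F$. Since $F$ is a simple closed polygon boundary, after the last bend witnessed in $]S_1, S_{3t}[$ the cyclic traversal must eventually return to the point chosen in $S_1$, i.e.\ there is a non-empty arc of $F$ that leaves the interval $]S_1, S_{3t}[$ past $S_{3t}$ and re-enters it before $S_1$. This arc contains at least one straight-line segment, and by construction at least one endpoint of this segment lies outside $]S_1, S_{3t}[$, so it cannot be any of the $t-1$ segments obtained above, since those are completely contained in the interval. Adding this closing segment to the $t-1$ internal ones yields $(t-1) + 1 = t$ segments on $F$, which is the claimed bound.

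The main subtlety, which I expect to be the only thing requiring a sentence of care, is to make sure that the closing segment is genuinely distinct from the $t-1$ interior complete segments; this is exactly why the definition of ``completely contained'' in $]S_a,S_b[$ is phrased over every valid choice of boundary points $p_a,p_b$, and it rules out the degenerate possibility that a single segment simultaneously serves as one of the interior segments and as the closing link. Everything else is a direct application of Observation~\ref{obs:kbends} and the definition of non-overlapping triples, so no further case analysis is needed.
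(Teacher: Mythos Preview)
Your overall strategy matches the paper's: invoke Observation~\ref{obs:kbends} to secure $t-1$ complete segments in $]S_1,S_{3t}[$ and then argue that closing the cycle forces one additional, distinct link. The difference lies in how that extra link is pinned down, and your version of that step has a gap.

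You assert that the closing arc ``contains at least one straight-line segment'' with ``at least one endpoint \ldots\ outside $]S_1,S_{3t}[$''. Neither claim is guaranteed. If every vertex of $F$ lies in $]S_1,S_{3t}[$---which is precisely what happens when the vertices of $F$ are exactly the $t$ bends supplied by the observation---then the closing arc from $p_{3t}$ back to $p_1$ contains no vertex at all; it is a proper sub-arc of a single link whose two endpoints are the first and last bend, both inside $]S_1,S_{3t}[$. Your appeal to the ``for every choice of $p_a,p_b$'' clause does not rescue this: such a segment can still be completely contained in the sense of the definition, since its endpoints sit strictly between every admissible $p_1$ and $p_{3t}$. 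So the distinctness of the closing link from the $t-1$ interior ones is not established by your argument.

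The paper closes this differently. It assumes for contradiction that $F$ has exactly $t-1$ links, charges each of them to the connection $S_{3i-1}\!\to\! S_{3i}$ of one of the first $t-1$ triples, and then uses the non-overlapping of $\mathcal S_{t-1}$ with $\mathcal S_t$ (and symmetrically of $\mathcal S_1$ with $\mathcal S_2$) to argue that neither neighbouring charged segment can be stretched to also cover the last triple. This is where the non-overlapping hypothesis is genuinely spent a second time, beyond its role inside Observation~\ref{obs:kbends}; your write-up never uses it again. A quicker repair of your route is also available: Observation~\ref{obs:kbends} already provides $t$ distinct bends, i.e.\ $t$ vertices of the closed curve $F$, and a closed polygonal curve with $t$ vertices has $t$ links---so you can bypass the closing-arc discussion entirely.
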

\else
    \begin{restatable}[$\star$]{lemma}{lowerbound}\label{lem:lowerbound}
        Let $\mathcal P$ be an instance of \textsf{BMLF} and 
        $\mathcal F$ a minimum-link fencing for $\mathcal P$,
        then any fence $F \in \mathcal F$ that crosses $t > 0$ non-overlapping non-collinear triples in-order
        consists of at least $t$ straight-line segments.
    \end{restatable}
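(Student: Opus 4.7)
The plan is to combine Observation~\ref{obs:kbends} with the fact that every fence is the boundary of a simple closed polygon. First I would invoke Observation~\ref{obs:kbends} on the fence $F$ and the sequence of $t$ non-overlapping non-collinear triples that it crosses in-order. This yields at least $t$ bends of $F$, one situated in each bend-set $S_{3i-1}$ for $i = 1,\ldots,t$, together with at least $t-1$ straight-line segments that are completely contained in the interval $]S_1,S_{3t}[$.

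Next I would exploit the closedness of $F$. Since $F$ bounds a simple polygon, a cyclic traversal of $F$ that starts at a point $p_1 \in S_1$, follows the arc $]S_1,S_{3t}[$ and arrives at a point $p_{3t} \in S_{3t}$ must continue along a complementary arc in order to return to $p_1$. This complementary arc is non-empty and must contain at least one full segment of $F$, since otherwise $p_1$ and $p_{3t}$ would be endpoints of the same edge on the complementary side, but any such edge is a straight-line segment joining $p_1$ and $p_{3t}$ that cannot realize the in-order crossing of intermediate bend-sets $S_{3i-1}$ prescribed by the non-collinearity and non-overlapping assumptions.

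Finally, I would observe that this closing segment is distinct from the $t-1$ segments counted in the first step. By the definition of \emph{completely} contained given in the excerpt, each of those segments has both endpoints inside $]S_1,S_{3t}[$ for every choice of $p_1$ and $p_{3t}$, so none of them can lie on the complementary arc. Adding the closing segment to the $t-1$ interior segments therefore yields at least $(t-1)+1 = t$ segments on $F$, proving the lemma.

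The main obstacle I anticipate is the bookkeeping around ``completeness'': I need to be sure that the $t-1$ interior segments are genuinely disjoint from one another (this follows from placing one bend in each of the $t$ bend-sets and counting consecutive gaps) and that the closing arc really contributes an honest additional segment rather than being merely an extension of one of the interior segments. Both issues dissolve once one insists on the \emph{complete} containment clause, which pins both endpoints of each counted segment strictly inside $]S_1,S_{3t}[$ and hence off of the closing arc.
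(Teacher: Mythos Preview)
Your overall strategy coincides with the paper's: invoke Observation~\ref{obs:kbends} to obtain $t-1$ complete segments inside $]S_1,S_{3t}[$, and then use that $F$ is a closed curve to produce a $t$-th segment. The paper executes the second half via a contradiction-and-charging argument (assume $|F|=t-1$; charge each of the $t-1$ segments to one of the transitions $S_{3i-1}\to S_{3i}$ for $i=1,\dots,t-1$; then observe that the final transition of the $t$-th triple is left uncharged, and non-overlapping prevents the penultimate segment from being extended to cover it). You instead try a direct ``closing segment on the complementary arc'' argument, and this is where a genuine gap appears.

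In step~2 you assert that the complementary arc must contain at least one full edge of $F$, because otherwise $p_1$ and $p_{3t}$ lie on a single edge that ``cannot realize the in-order crossing of intermediate bend-sets $S_{3i-1}$.'' But that edge lies on the \emph{complementary} side and is under no obligation to meet any bend-set; it is the arc $]S_1,S_{3t}[$, not the closing arc, that visits the $S_{3i-1}$. So nothing you wrote rules out the complementary arc being a proper sub-segment of a single edge $e$, containing no full edge of $F$. Your step~3 then rests on this unestablished premise.

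The repair is short and actually simpler than either your argument or the paper's. Observation~\ref{obs:kbends} already yields $t$ \emph{distinct} bends of $F$ inside $]S_1,S_{3t}[$ (one per triple, in pairwise disjoint sub-arcs by the in-order assumption), hence $F$ has at least $t$ vertices, and a closed polygon with at least $t$ vertices has at least $t$ edges. Equivalently, in the problematic case where the complementary arc sits inside a single edge $e$, that edge $e$ passes through a point of $S_1$ and a point of $S_{3t}$, so for a suitable choice of $p_1\in S_1$ one endpoint of $e$ falls outside $]S_1,S_{3t}[$; thus $e$ is not completely contained and supplies the missing $t$-th segment.
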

\fi
\prooffromappendix{
\begin{proof}
    By Observation~\ref{obs:kbends} the fence $F$ has at least $t-1$ straight-line segments 
    Assume $F$ has also exactly $t-1$ segments.
    Let $\mathcal S_1,\ldots,\mathcal S_t$ be the non-overlapping non-collinear triples and
    $S_1,\ldots,S_{3t}$ be the sets of the non-collinear triple in the order in which $F$ crosses them as above.
    Since $F$ is a simple polygon there exists a polygonal chain $C$ starting at some point in $S_{3t}$ and
    ending at some point in $S_1$.
    By Observation~\ref{obs:kbends} there are $t-1$ straight-line segments 
    which are completely contained in $]S_1,S_{3t}[$ which implies
    that $C$ must consist of segments that are also completely contained in $]S_1,S_{3t}[$.

    Now, we charge every of the $t-1$ segments to the piece of $F$ that 
    connects for each $\mathcal S_i$ with $i = 1,\ldots,t-1$
    some point in $S_{i+1}$ with some point in $S_{i+2}$.
    Since the triples are non-overlapping each such connection also requires a distinct segment.
    Then, for the connection of $S_{3t-1}$ and $S_{3t}$ we require at least one more segment.
    Let $s$ be the segment charged to the connection of $S_{3t-4}$ to $S_{3t-3}$.
    If we could extend this segment to also intersect $S_{3t}$ we would violate
    that $\mathcal S_{t-1}$ and $\mathcal S_t$ are non-overlapping.
    Symmetrically for the segment charged to $S_2$ and $S_3$.
    Consequently, we require at least one more segment.
\end{proof}
}

\subsection{Variable gadget}\label{sec:variable}

\begin{figure}
    \begin{subfigure}[b]{0.45\linewidth}
        \centering
    	\includegraphics[scale=1.2,page=12]{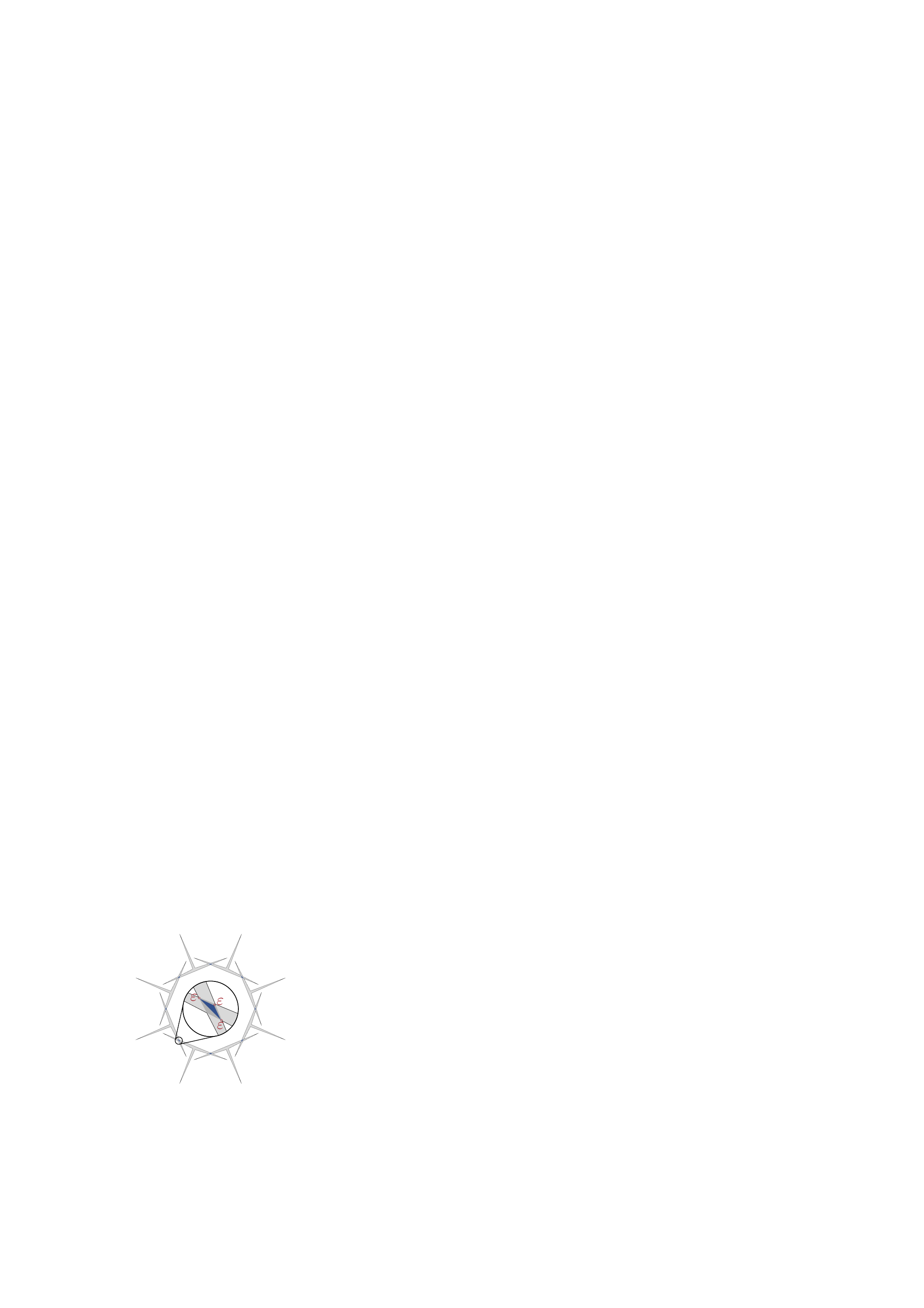}
        \caption{Variable gadget construction with $\varepsilon$-gaps.}
    	\label{fig:reduction1a}
	\end{subfigure}
    \centering
    \begin{subfigure}[b]{0.26\linewidth}
        \centering
        \includegraphics[page=10]{figures/reduction.pdf}
        \subcaption{Fencing for \emph{true}-state.}
        \label{fig:variable_gadget_true}
    \end{subfigure}
    \begin{subfigure}[b]{0.26\linewidth}
        \centering
        \includegraphics[page=11]{figures/reduction.pdf}
        \subcaption{Fencing for \emph{false}-state.}
        \label{fig:variable_gadget_false}
    \end{subfigure}
    \caption{The variable gadget and its two possible fencings.}
    \label{fig:variable_gadget_fencings}
\end{figure}

Every variable gadget consists of eight \emph{T-polygons} \isaacrev{(two per clause in which the variable can occur)}.
\isaacrev{\Cref{fig:reduction1a} illustrates} the construction; T-polygons are marked in gray.
Every T-polygon
has an isosceles triangle as the \textit{arm} of the T (the
horizontal part of the T shape) and a \textit{spike} (alternatively called a \emph{true spike} and a \emph{false spike}) protruding from the arm and two consecutive polygons overlap at the end of their arms.
For every variable $v \in \mathcal{V}$, we construct a variable gadget $ \mathcal{G}(v) $ as a circular arrangement of eight overlapping T-polygons.
For every pair of overlapping T-polygons $A$ and $B$, we place an inner polygon $P$, s.t.\ $P \subset A\cap B$.
Let us fix some $A$, $B$, and $P$ as above,
then we place $P$ such that its three corner points 
have only a very small distance $\varepsilon>0$ 
to some corner point of $A \cap B$.
All three $\varepsilon$-length segments between a corner point of $P$ and the closest corner point of $A\cap B$ have to be crossed by every fence enclosing $P$.

Crucially, the variable gadget has only two minimum link fencings.
These two states are shown in Figure~\ref{fig:variable_gadget_fencings}.
We associate the one shown in Figure~\ref{fig:variable_gadget_true} to the variable gadget encoding 
the value \emph{true} and 
the one shown in Figure~\ref{fig:variable_gadget_false} to encoding \emph{false}.

\ifAppendixProofs
    \begin{restatable}{lemma}{vargadget}\label{lem:var_gadget}
    	There are exactly two minimum link fencings $\mathcal{F}_t$ and $\mathcal{F}_f$ of the variable gadget, 
    	both of which will enclose only triangles in the same T-polygon with each fence, 
    	resulting in a fencing with $ 12 $ links for the whole variable gadget, s.t. every other minimum link fencing is \congruent{} to either $\mathcal{F}_t$ or $\mathcal{F}_f$.
    \end{restatable}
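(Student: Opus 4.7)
The plan is to combine a per-fence lower bound derived from Lemma~\ref{lem:lowerbound} with a perfect-matching argument on the circular arrangement of T-polygons.

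First I would show that every fence in any fencing of the variable gadget must use at least three links. For every inner polygon $P$, the three $\varepsilon$-corridors between each corner of $P$ and its nearest corner of $A \cap B$ form a non-collinear triple whose three sets are cyclically arranged around $P$; by Observation~\ref{obs:onebend} any fence enclosing $P$ crosses each corridor and thus contains at least three bends, hence at least three segments. Since a simple closed polygonal curve has at least three links anyway, this bound is tight for a fence around a single inner polygon.

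Second, I would characterize which sets of inner polygons admit a three-link enclosure. A three-link fence is a triangle and hence bounds a convex region, so two inner polygons $P_i, P_j$ can share such a fence only if some convex subset of the free space contains both of them; this occurs exactly when $P_i$ and $P_j$ lie inside a single T-polygon, i.e.\ come from its two end-overlaps. For the other direction, if $P_i$ and $P_j$ lie in distinct T-polygons, any fence enclosing both must pass through two distinct overlap regions, and I would exhibit two non-overlapping non-collinear triples (one located at the $\varepsilon$-corridors of each overlap region) whose in-order crossings force, via Lemma~\ref{lem:lowerbound}, at least four links. A similar argument rules out three-link fences enclosing three or more inner polygons of the same T-polygon, because the T-shape itself is not convex. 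Consequently every minimum-link fence encloses either a single inner polygon or exactly two inner polygons sharing a single T-polygon.

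Third, I would translate the resulting combinatorics. Label the inner polygons $P_0,\dots,P_7$ in circular order and view each T-polygon as an edge of the cycle $C_8$ joining the two inner polygons it contains. A fencing of the gadget in which every fence uses three links corresponds to a matching $M$ in $C_8$ (paired fences) together with one singleton fence for each unmatched vertex. With $k = |M|$ matched edges the fencing has $8 - k$ fences and hence at least $3(8-k) = 24 - 3k$ links, which is minimized at $k = 4$, i.e., when $M$ is a perfect matching, yielding exactly $12$ links. Since $C_8$ admits exactly two perfect matchings, these correspond precisely to the two configurations $\mathcal F_t$ and $\mathcal F_f$ of Figure~\ref{fig:variable_gadget_fencings}, and any minimum-link fencing is congruent to one of them.

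The main obstacle is Step~2: explicitly identifying the non-overlapping non-collinear triples across the arm of an intermediate T-polygon so as to show that a three-link fence cannot span two distinct T-polygons. The $\varepsilon$-corridors together with the orientations of the T-polygon arms must be exploited so that every candidate triangular fence either fails to enclose a required inner polygon, is forced to bend a fourth time, or violates the non-overlapping property, thereby enabling Lemma~\ref{lem:lowerbound} to deliver the required lower bound.
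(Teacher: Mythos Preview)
Your matching argument on $C_8$ is a clean endgame, but Step~2 has a real gap: you only establish that a fence with \emph{exactly three} links must enclose a singleton or a T-pair, and that any fence spanning two T-polygons needs \emph{at least four} links. From this you conclude ``every minimum-link fence encloses either a single inner polygon or exactly two inner polygons sharing a single T-polygon,'' but that does not follow. A lower bound of four links, independent of how many inner polygons the fence contains, is far too weak: under your bounds a single fence enclosing all eight polygons could conceivably cost only four links, and a partition $4+2+2$ could cost $4+3+3=10<12$. You never rule this out, so your Step~3 optimization over matchings in $C_8$ is unjustified.

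What is actually needed is a lower bound that \emph{scales} with the number of enclosed polygons --- roughly, that a fence enclosing $k$ inner polygons (with $k\ge 3$ or with the pair not sharing a T-polygon) costs strictly more than $3k/2$ links. The paper obtains this by brute-force enumeration: for every subset of the eight polygons (up to symmetry) it exhibits enough non-overlapping non-collinear triples to give tight bounds (e.g.\ $6$ for three consecutive, $8$ for four, $11$ for five, up to $18$ for all eight), and then checks directly that no partition beats $12$. Your two-triple argument only gives the first rung of this ladder. Either strengthen it to produce $\Theta(k)$ non-overlapping triples for a fence spanning $k$ overlap regions, or fall back on the paper's enumeration; without one of these, the perfect-matching step does not go through. (Also: each T-polygon contains exactly two inner polygons, so your remark about ``three or more inner polygons of the same T-polygon'' addresses a case that cannot occur.)
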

\else
    \begin{restatable}[$\star$]{lemma}{vargadget}\label{lem:var_gadget}
    	There are exactly two minimum link fencings $\mathcal{F}_t$ and $\mathcal{F}_f$ of the variable gadget, 
    	both of which will enclose only triangles in the same T-polygon with each fence, 
    	resulting in a fencing with $ 12 $ links for the whole variable gadget, s.t. every other minimum link fencing is \congruent{} to either $\mathcal{F}_t$ or $\mathcal{F}_f$.
    \end{restatable}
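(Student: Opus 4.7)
The plan is to prove the statement in three stages: establish $12$ as an upper bound by construction, establish $12$ as a lower bound via the non-collinear triple framework, and then read off uniqueness from the combinatorics of pairing up inner polygons.

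For the upper bound I would exhibit $\mathcal{F}_t$ and $\mathcal{F}_f$ explicitly. In each fencing, four triangular fences are placed; each such triangle lies strictly inside a single T-polygon $T$ and encloses the two inner polygons sitting in the two overlap regions at the ends of $T$'s arm. The two fencings differ only in which four of the eight T-polygons carry a fence (the odd-indexed versus the even-indexed T-polygons in the cyclic order). Since each triangle contributes exactly $3$ links, both fencings use $4\cdot 3 = 12$ links, which gives the upper bound and simultaneously witnesses the two configurations we want to single out.

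For the lower bound, I would first observe that every fence is a simple closed polygon and therefore has at least $3$ links. The central claim to prove is that a fence with exactly $3$ links encloses at most two inner polygons, and in the case of exactly two, they must be the pair lying inside the same T-polygon. This uses Lemma~\ref{lem:lowerbound}: around each inner polygon $P$, the three $\varepsilon$-gaps at $P$'s corners are a non-collinear triple by construction, and if $P$ and $P'$ live in different T-polygons then the intervening true- and false-spikes let one build a sequence of at least four non-overlapping non-collinear triples that any fence enclosing both $P$ and $P'$ must cross in order, forcing at least $4$ straight segments. The same style of argument, done with $3k/2$ triples around a hypothetical fence enclosing $k\geq 3$ inner polygons, shows that merging more than two inner polygons into one fence never amortises below $1.5$ links per polygon. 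Combined with the existence of $8$ inner polygons, this gives $12$ as a lower bound on the total number of links.

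Uniqueness then follows combinatorially: in a $12$-link fencing each of the four fences has exactly $3$ links and encloses a pair of inner polygons inside a common T-polygon. The eight inner polygons form a cycle in which two are ``same-T'' adjacent iff they are consecutive in the cyclic order, so the required partition into four same-T pairs corresponds to a perfect matching of the $8$-cycle by alternating edges; there are exactly two such matchings, giving $\mathcal{F}_t$ and $\mathcal{F}_f$ up to congruence. The main obstacle I anticipate is the second part of the lower bound: choosing the non-collinear triples along the spikes so that they are provably non-overlapping and so that a single long chord cannot simultaneously cross several of them, so that Lemma~\ref{lem:lowerbound} applies cleanly and rules out fences that enclose polygons from different T-polygons or more than two polygons at a time.
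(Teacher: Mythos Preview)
Your overall strategy matches the paper's: exhibit the two $12$-link fencings, then use the non-collinear triple machinery (Lemma~\ref{lem:lowerbound}) to lower-bound the link count of any fence enclosing a given subset of inner polygons, and combine. The paper, however, carries this out by brute-force enumeration of \emph{all} subsets of the eight inner polygons (modulo the dihedral symmetry of the gadget), producing explicit triple constructions and matching fences in an appendix; this yields the exact minimum link counts $3,6,8,11,13,16,18$ for fences enclosing $2,3,\dots,8$ consecutive inner polygons, and higher counts for every non-consecutive subset. A short case analysis over partitions of eight then shows that only the two pairings into adjacent twos reach $12$.

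Your amortized shortcut has a genuine gap in the uniqueness step. A bound of ``$3k/2$ triples'' for a fence enclosing $k\ge 3$ inner polygons gives only $\ge 3k/2$ links, which for even $k$ is exactly $1.5$ links per polygon. With that bound alone you cannot exclude, for instance, a single fence around four consecutive polygons using $6$ links together with two triangular fences on the remaining two pairs, also totalling $12$; likewise a hypothetical $9$-link fence around six consecutive polygons plus one triangle, or a $12$-link fence around all eight. What actually rules these out is that the true minima ($8$ for $k=4$, $13$ for $k=6$, $18$ for $k=8$) are \emph{strictly} above $3k/2$, and the paper obtains this strictness only via the explicit enumeration, not a general formula. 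So the obstacle you flag is more serious than a matter of arranging triples to be non-overlapping: your target count $3k/2$ is itself too low, and you need at least $3k/2+1$ triples for even $k\ge 4$ (in fact roughly $2k$). You also do not address non-consecutive subsets such as $\{A_1,A_4\}$ or $\{A_1,A_3,A_5\}$, which the paper's enumeration covers and which must be excluded for the lower bound and uniqueness to go through.
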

\fi
\prooffromappendix{
\begin{proof}
    To prove this lemma, we will first number the eight inner polygons of the variable gadget $A_1, \ldots, A_8$.
    Any fence inside the variable gadget can include any combination of $k$ inner polygons.
    We will prove this lemma, by enumerating for every $k \in \{1, 8\}$ all possible (non-symmetric) combinations of including these inner polygons; Note that technically a fence is allowed to include non-consecutive inner polygons.
    Note that any fence has to consist of at least 3 segments and clearly every inner polygon of $\mathcal{G}(v)$ can be fenced alone with 3 segments.
    For any possible fence including multiple inner polygons, we show the existence of a certain number of non-overlapping non-collinear triples, which provide a lower bound on the number of segments for such a fence.
    All triples and the resulting lower bounds on the number of segments in fencings including all possible combinations of 2 to 8 inner polygons are shown in \ifArxiv Appendix~\ref{sec:app:variable_gadget}\else the full paper~\cite{}\fi.
    Enumeration shows that a fence $F$ including $k$ inner polygons, has the minimum amount of segments, if and only if the indices of the contained inner polygons are consecutive (assuming $A_8$ and $A_1$ to be consecutive).
    In particular, such a fence including 2, 3, 4, 5, 6, 7 or 8 inner polygons, requires (and can be realized with) 3, 6, 8, 11, 13, 16 and 18 segments, respectively.

    First observe that there are exactly two minimum link fencings, which include exactly two neighboring inner polygons in one fence, both consisting of 12 segments total, both are shown in Figures~\ref{fig:variable_gadget_true} and~\ref{fig:variable_gadget_false}.
    We can therefore exclude any fence including six or more inner polygons, since they clearly require more segments by themselves, which immediately eliminates the possibility of including 6 or more inner polygons in one fence.
    Any fence including five inner polygons requires at least 11 segments and at least one additional fence is needed, which increases the segment count to at least 14.
    Any fence including four inner polygons requires at least 8 segments. If the remaining four segments are fenced together, we require at least 16 segments. If they include a group of three polygons, we require at least 14 segments. If two remaining polygons are grouped, we require 11 segments, but at least one more fence (14 segments in total) are needed and if all remaining polygons are fenced alone, we need 20 segments in total.

    Next if three segments are fenced together, we require 6 segments and have five inner polygons left. We already know that no fence including four or more segments can be part of a minimum link fencing.
    If three of the remaining five polygons are grouped we arrive at a total of at least 12 segments, with at least one more fence needed.
    If at most two polygons are grouped, we need at least three more fences and arrive at a total of at least 15 segments.
    
    Finally, polygons could be fenced individually.
    Clearly there must be an even number of individually fenced polygons.
    If all eight polygons are fenced alone, we reach at least 24 segments, for six inner polygons fenced alone we get 18 segments, four individually fenced polygons lead to 12 with at least one more fence needed.
    Two inner polygons being fenced alone require at least 6 segments, with at least three more fences needed for the six left over polygons, which require at least 9 segments leading to a total of 15 segments.
    
    Therefore only two minimum link fencings exist, which require exactly 12 segments, and they group two neighboring inner polygons pairwise.
    We call the fencing, which groups inner polygons, which are both contained in a gray triangle with a true spike $\mathcal{F}_t$ and the other $\mathcal{F}_f$.
\end{proof}
}

\subsection{Clause gadget}\label{sec:clause}
For every clause $c \in \mathcal{C}$ in which three variables $v_1, v_2, v_3$ occur either as a positive or a negative literal, we create a clause gadget $\mathcal{G}(c)$.
A clause gadget consists of three chains of an even number of gray triangles.
These triangles are placed s.t.\ their hypotenuses intersect at an angle of at most $\pi$ as shown in \cref{fig:wire_construction-a}.
The triangles are sufficiently long and thin, s.t., we can define two sets in every gray triangle (one to either side of the central line), s.t., the second set $a'$ of the $i$-th triangle and the first set $b$ of the $(i+1)$-th triangle form a non-collinear triple.
By construction the non-collinear triple between the $(i-1)$-th and the $i$-th triangle and the one between the $i$-th and the $(i+1)$-th triangle are non-overlapping.

We place the three chains 
such that the three first triangles of the chains have a common intersection.
Moreover, they intersect in such a way that their hypotenuses pairwise form $\frac{2\pi}{3}$ angles (\cref{fig:wire_construction-b}).
The last gray triangle of the first, second and third chain intersect a spike of $\mathcal{G}(v_1)$, $\mathcal{G}(v_2)$ and $\mathcal{G}(v_3)$, respectively.
They intersect a true or false spike if the variable occurs as a positive or negative literal, respectively.
We refer to each chain of gray polygons as a \emph{wire}.
The \emph{length} of a wire is the number of gray triangles in its corresponding chain.

Let $W_1$, $W_2$, and $W_3$ be the wires of a clause gadget $\mathcal G(c)$ for clause $c$,
where $W_i$ intersects the spike of $\mathcal G(v_i)$ for $i \in \{1,2,3\}$.
We place an inner triangle, denoted the \emph{clause triangle} $B_c$ of $\mathcal G(c)$, 
in the overlap of $W_1$, $W_2$, and $W_3$.
Moreover, for wire $W_i$ with gray triangles $T^i_j$ we place inner triangles $B^i_j$
in the overlap of the $j$-th and $(j+1)$-th gray triangle of the respective wire and 
a final triangle in the intersection with the spike of $\mathcal G(v_i)$.
In the following we write $T_1,\ldots,T_k$ for the gray triangles and $B_1,\ldots,B_k$ for the inner polygons of one wire $W_i$,
if $i$ is clear from the context.
Hence, inner triangle $B_i$ is contained in the gray triangles $T_i$ and $T_{i+1}$ and
gray triangle $T_i$ for $i>1$ contains the inner triangles $B_i$ and $B_{i+1}$.

\begin{figure}[tbp]
    \centering
    \begin{subfigure}[b]{.6\linewidth}
        \centering
        \includegraphics[page=18]{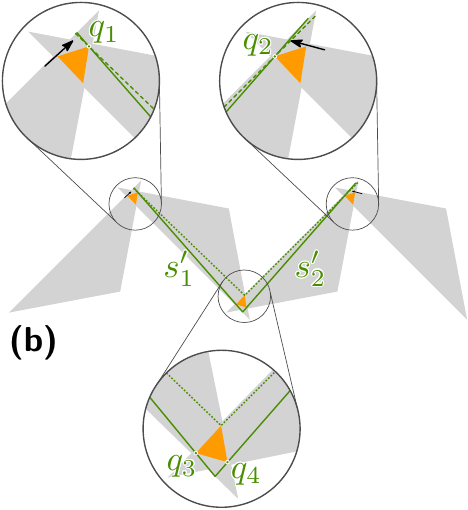}
        \subcaption{}\label{fig:wire_construction-a}
    \end{subfigure}
    \hfill
    \begin{subfigure}[b]{.38\linewidth}
        \centering
        \includegraphics[page=19]{figures/no_bypass.pdf}
        \subcaption{}\label{fig:wire_construction-b}
    \end{subfigure}
    \caption{Wires are constructed from consecutive gray triangles places such that two consecutive triangles always contain a non-collinear triple, which are pairwise non-overlapping.}
    \label{fig:wire_construction}
\end{figure}

Let $B_1,\ldots,B_k$ be the inner polygons of a wire and $F$ a fence 
containing $B_i$ and $B_j$ for some $i < j-1$ and $i = 1,\ldots,k-2$ but 
not $B_z$ for $i < z < j$, then we say $F$ \emph{bypasses} $B_{z}$.
For indices $1 \le i_1 < i_2 < j_1 < j_2 \le k$,
we say two fences $F_1$ and $F_2$ containing some polygons of the wire \emph{interleave} if
$B_{i_1}$ and $B_{j_1}$ are in $F_1$ and $F_1$ bypasses $B_{i_2}$ as well as
$B_{i_2}$ and $B_{j_2}$ are in $F_2$ and $F_1$ bypasses $B_{j_1}$.

Let $F$ be a fence of a minimum link fencing $\mathcal F$ for a clause gadget $\mathcal G(c)$.
Let $s$ be a segment contained in the union of the gray triangles that form $\mathcal G(c)$ such that
$F$ crosses $s$ in two points $p$ and $q$.
Then \emph{splitting} $F$ at $s$ means the following.
Delete $F$ in an $\varepsilon$-region around $p$ and $q$
this creates two polygonal-chains, say $F'$ and $F''$ with endpoints $p'$ and $q'$ on one side of $s$ 
and $p''$ and $q''$ on the other.
Connect $p'$ with $q'$ and $p''$ with $q''$ to form the two new fences $F'$ and $F''$.
Clearly, $|F'| + |F''| = |F| + 2$.

\paragraph*{One isolated wire}
For the following we fix an arbitrary clause $c$.
Let $\mathcal G(c)$ be the clause gadget of $c$ and $W$ one of the wires of $\mathcal G(c)$
with inner polygons $B_1,\ldots,B_k$.
We denote as \emph{isolated wire} the gray triangles of the chain of $W$
that do not contain the clause triangle.

We are interested in how a minimum link fencing of an isolated wire looks like.
Crucially, we first show that a fence of a minimum link fencing of an isolated wire
cannot bypass any inner polygon of an inner polygon. 

\begin{restatable}[$\star$]{lemma}{consecutive}\label{lem:consecutive}
    A minimum link fence $\mathcal F$ of an isolated wire $W$ of $\mathcal G(c)$
    does not contain a fence $F \in \mathcal F$ such that $F$ bypasses 
    an inner polygon $B_i$ with $i \in \{2,\ldots,k-1\}$ of $W$.
\end{restatable}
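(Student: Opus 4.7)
The plan is to argue by contradiction using the splitting operation combined with Lemma~\ref{lem:lowerbound}. Suppose $\mathcal F$ is a minimum-link fencing of the isolated wire $W$ and that some $F\in\mathcal F$ bypasses an inner polygon $B_i$ with $i\in\{2,\ldots,k-1\}$. Shrink to a minimal bypass: pick indices $j<i<\ell$ with $B_j,B_\ell\in F$ and $B_z\notin F$ for every $j<z<\ell$.

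I first want to locate a short cross-section segment $s$ of the wire that lies inside $\bigcup_{z=j}^{\ell} T_z$ and separates $B_j$ from $B_\ell$ in that union --- a natural choice is a segment crossing the common hypotenuse between $T_i$ and $T_{i+1}$ in the interior of $T_i\cap T_{i+1}$. Because $F$ is a simple closed polygon with $B_j,B_\ell$ in its interior but $B_i$ in its exterior, the Jordan curve theorem gives that $F\cap s$ has an even positive number of transversal crossings, and by shortening $s$ I may assume exactly two crossings $p,q$. I then apply the splitting operation to obtain fences $F'$ and $F''$ with $|F'|+|F''|=|F|+2$, where $F'$ and $F''$ inherit the polygons previously enclosed by $F$ on their respective sides of $s$. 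Both remain valid (mono-chromatic, simple) fences, so $(\mathcal F\setminus\{F\})\cup\{F',F''\}$ is a fencing of $W$.

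The decisive step is a link-count comparison via Lemma~\ref{lem:lowerbound}. Between $B_j$ and $B_\ell$ the wire carries $\ell-j$ pairwise non-overlapping non-collinear triples $\mathcal S_{j+1},\ldots,\mathcal S_\ell$ (one per pair of consecutive gray triangles). Tracing the boundary of $F$ once, one must cross this chain of triples an even number of times, and because $F$ has to detour around each excluded $B_z$ with $j<z<\ell$, this number is at least twice (once going towards $B_\ell$, once returning). Hence $F$ crosses the triples in-order with multiplicity at least $2$, giving $|F|\ge 2(\ell-j)+c$ for some constant $c$ that accounts for the closing segments around $B_j$ and $B_\ell$. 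On the other hand, $F'$ and $F''$ each cross the shorter chains of triples only once, so by Lemma~\ref{lem:lowerbound} we obtain $|F'|+|F''|\le 2(\ell-j)+c-1$, strictly improving on $|F|$ and contradicting the minimality of $\mathcal F$.

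The main obstacle will be to make the ``detour forces double crossings'' argument fully rigorous: one must rule out topological shortcuts in which $F$ escapes the wire into the surrounding free space and re-enters further along, thereby avoiding the expected non-collinear triples altogether. This is where the isolated-wire assumption, the long-thin geometry of the gray triangles, and the at-most-$\pi$ hypotenuse angle are crucial, since they guarantee that every free-space corridor between consecutive gray triangles is witnessed by exactly the non-collinear triple the construction supplies, so no such shortcut is available. Once this geometric fact is in place, the splitting step and the link count follow cleanly from Observation~\ref{obs:kbends} and Lemma~\ref{lem:lowerbound}.
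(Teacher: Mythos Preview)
Your argument contains a genuine logical contradiction. You correctly note that splitting $F$ at the segment $s$ yields $|F'|+|F''|=|F|+2$; this is an \emph{increase} of two links. You then assert that Lemma~\ref{lem:lowerbound} gives $|F'|+|F''|\le 2(\ell-j)+c-1<|F|$. These two statements cannot both hold. More fundamentally, Lemma~\ref{lem:lowerbound} provides \emph{lower} bounds on the number of links in a fence crossing a given sequence of non-collinear triples; it never furnishes an upper bound. Knowing that $F'$ and $F''$ cross fewer triples only tells you that their link counts are bounded below by smaller numbers, which is vacuous for your purpose. The split fencing $(\mathcal F\setminus\{F\})\cup\{F',F''\}$ therefore has strictly \emph{more} links than $\mathcal F$, and no contradiction arises.

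The paper's proof proceeds quite differently. It first establishes two auxiliary lemmas: that no fence bypasses two or more \emph{consecutive} inner polygons (Lemma~\ref{lem:consecutive_bypass}), and that no two fences interleave (Lemma~\ref{lem:no_interleave}). Both of these do use splitting, but crucially the splits are chosen so that one of the resulting pieces encloses \emph{no} inner polygon and can be deleted outright; the link-count argument then shows that this deleted piece carried enough segments to offset the cost of the split. With those lemmas in hand, any bypassed $B_j$ satisfies $B_{j-1},B_{j+1}\in F$, and the paper finishes by \emph{rerouting} $F$ locally with four explicit segments so that $B_j$ becomes enclosed without increasing $|F|$; the separate fence that previously enclosed $B_j$ (which cannot interleave with $F$) is then removed, saving at least three links. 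The key mechanism is thus ``absorb the bypassed polygon and discard a redundant fence,'' not ``split the bypassing fence in two.'' If you want to salvage a splitting-based argument, you must identify a split that produces a discardable component and then show that component carried at least three links beyond the splitting cost.
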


In the following we are going to bound the number of consecutive polygons that are contained in one 
minimum link fence of an isolated wire.
We compare this then to a fence containing all inner polygons of an isolated wire.
Such a fence, by construction, contains $2z$ non-collinear triples and hence requires $2z$ segments by Lemma~\ref{lem:lowerbound}.
Figure~\ref{fig:non_collinear_triples_wire} shows these triples.
Constructing such a fence is straight-forward by following these non-collinear triples.
The following lemma summarizes this statement.

\begin{restatable}{lemma}{conspolys}\label{lem:conspolys}
    Let $\mathcal F$ be a minimum link fencing of an isolated wire $W$ of $\mathcal G(c)$, 
    any fence $F \in \mathcal F$ that contains $z>2$ consecutive inner polygons of $W$ 
    has at least $2z$ segments and such a fence exists.
\end{restatable}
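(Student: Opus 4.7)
The plan is to establish the lower bound via Lemma~\ref{lem:lowerbound} by exhibiting $2z$ pairwise non-overlapping non-collinear triples that any such fence $F$ must cross in-order, and then to match this bound by an explicit construction that traces the two sides of the isolated wire.

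First, using Lemma~\ref{lem:consecutive}, since $\mathcal F$ is a minimum link fencing and $F$ contains $z$ consecutive inner polygons, $F$ may not bypass any of them. Hence there is an index $a$ such that $F$ encloses exactly $B_a, B_{a+1}, \ldots, B_{a+z-1}$, and these polygons lie in the overlaps of the $z+1$ consecutive gray triangles $T_a, T_{a+1}, \ldots, T_{a+z}$ of $W$.

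Second, I would identify the $2z$ non-collinear triples. Recall from the construction of a wire that each gray triangle is split by its central line into two sets, and between consecutive gray triangles $T_j$ and $T_{j+1}$ a non-collinear triple is formed on the side of the central line \emph{above} and, symmetrically, on the side \emph{below}. Between $T_a$ and $T_{a+z}$ there are $z$ such joints, yielding $2z$ triples in total. Since consecutive joints are non-overlapping by the assumption on the wire, and the upper triple at a joint is non-overlapping with the lower triple at the same joint (any segment intersecting both would have to cross an inner polygon that $F$ encloses), all $2z$ triples are pairwise non-overlapping. Because $F$ must enclose inner polygons sitting on the central line, $F$ has to pass above and below the central line at every joint in $\{a,\dots,a+z-1\}$, so it crosses all $2z$ triples in-order (first all upper triples along the wire, then all lower triples on the way back). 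Lemma~\ref{lem:lowerbound} then gives $|F| \ge 2z$.

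Third, for existence I would construct a fence $F^\ast$ with exactly $2z$ segments by tracing the boundary of the union of $T_a,\ldots,T_{a+z}$ restricted to the two sides of the wire, using one straight-line segment per triangle on each side and letting the segments at the two extremal triangles close off the polygon. Since the wire contains $z+1$ gray triangles but the two end-triangles can be closed off by extending their side-segments across the short ends, this yields $z$ segments per side and hence $2z$ in total; it manifestly encloses $B_a,\dots,B_{a+z-1}$ and no other polygon of $W$.

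The main obstacle is the bookkeeping around the two ends of the chain in both parts: on the lower-bound side, verifying that the upper and lower triples at a joint really are non-overlapping and that the triples around the first and last joint are not accidentally usable as a single triple, and on the construction side, confirming that the boundary of the two side-traces can be closed into a simple polygon without introducing extra bends, so that the $2z$ upper bound is indeed attained.
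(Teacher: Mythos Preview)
Your proposal is correct and follows essentially the same approach as the paper: exhibit $2z$ pairwise non-overlapping non-collinear triples that $F$ must cross in-order and invoke Lemma~\ref{lem:lowerbound} for the lower bound, then construct an explicit $2z$-link fence tracing the two sides of the wire. The paper's own argument is a one-paragraph appeal to the construction and to Figure~\ref{fig:non_collinear_triples_wire}; your write-up just fills in the bookkeeping the figure is meant to convey (your initial call to Lemma~\ref{lem:consecutive} is harmless but unnecessary, since the hypothesis already hands you a consecutive block).
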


\begin{restatable}{lemma}{sizebound}\label{lem:sizebound}
    Let $\mathcal F$ be a minimum link fencing of an isolated wire $W$ of $\mathcal G(c)$, then 
    every fence of $\mathcal F$ contains at most three consecutive inner polygons.
\end{restatable}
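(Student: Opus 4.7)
The plan is to argue by contradiction, combining the lower bound from Lemma~\ref{lem:conspolys} with an explicit splitting operation that exhibits a strictly cheaper fencing whenever some fence contains too many consecutive polygons.

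Suppose $\mathcal F$ is a minimum link fencing of the isolated wire $W$ and some $F \in \mathcal F$ encloses $z \geq 4$ consecutive inner polygons. By Lemma~\ref{lem:consecutive} the polygons enclosed by $F$ are exactly a consecutive block $B_{i+1},\dots,B_{i+z}$, and by Lemma~\ref{lem:conspolys} we have $|F| \geq 2z$.

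I would construct a new fencing $\mathcal F'$ by deleting $F$ and replacing it with smaller fences obtained by pairing. Partition the block into $\lfloor z/2 \rfloor$ consecutive pairs, plus one singleton at an end of the block if $z$ is odd. For every consecutive pair $B_j, B_{j+1}$ both polygons are contained in the single gray triangle $T_{j+1}$ (since $B_j \subseteq T_j \cap T_{j+1}$ and $B_{j+1} \subseteq T_{j+1} \cap T_{j+2}$), so they can be enclosed by a $3$-link triangular fence inscribed in $T_{j+1}$ and routed arbitrarily close to the two corners that host $B_j$ and $B_{j+1}$. An odd leftover singleton is enclosed analogously by a $3$-link triangle in one of its gray triangles. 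If we always choose the leftover end so that the gray triangles used by the replacement fences have indices of the same parity inside the block, then these gray triangles are pairwise non-adjacent along the wire, so the new fences lie in pairwise disjoint regions of the free space; since every other fence of $\mathcal F$ is, again by Lemma~\ref{lem:consecutive}, confined to gray triangles outside the block, the new fences are also disjoint from those, and $\mathcal F'$ is a valid fencing of $W$.

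The segment count then drops from $|F| \geq 2z$ to $3\lceil z/2 \rceil$, i.e.\ to $3z/2$ for even $z$ and $(3z+3)/2$ for odd $z$. A direct comparison gives a saving of at least $z/2$ (even case) or $(z-3)/2$ (odd case), which is strictly positive for every $z \geq 4$. Hence $|\mathcal F'| < |\mathcal F|$, contradicting the minimality of $\mathcal F$ and forcing $z \leq 3$. The main obstacle in the write-up will be the geometric justification of the $3$-link pair-fence: one has to argue that within each long, thin gray triangle $T_{j+1}$ there is enough room to inscribe a triangle whose three sides together enclose both corner-placed inner polygons without crossing into neighboring gray triangles, which should follow from the construction in Section~\ref{sec:clause} together with the freedom to make the fence arbitrarily tight around the two $\varepsilon$-neighborhoods at the corners.
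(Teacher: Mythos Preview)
Your argument is correct and follows essentially the same approach as the paper: assume some $F$ contains $z\ge 4$ consecutive polygons, invoke Lemma~\ref{lem:consecutive} for consecutiveness and Lemma~\ref{lem:conspolys} for $|F|\ge 2z$, then exhibit a cheaper replacement. The only difference is the choice of replacement: the paper peels off a single pair (a $3$-link triangle on the first two polygons) and keeps the remaining $z-2$ in one fence of $2(z-2)$ links via the existence part of Lemma~\ref{lem:conspolys}, giving $2z-1<2z$, whereas you split all the way into pairs (plus a possible singleton) and compare $3\lceil z/2\rceil$ to $2z$; both work, the paper's single split just avoids the parity/disjointness bookkeeping you flag at the end.
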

\begin{proof}
    Let $B_1,\ldots,B_k$ be the inner polygons of $W$.
    By Lemma~\ref{lem:consecutive} we can assume that the inner polygons of $W$ contained in $F$ are
    consecutive in the sequence of inner polygons.
    Let $F \in \mathcal F$ be a fence containing $z > 3$ inner polygons of $W$.
    
    By Lemma~\ref{lem:conspolys} we know that $F$ consists of $2z$ segments.
    We replace $F$ with a fence $F_1$ including the two first polygons included in $F$ and
    a fence $F_2$ including all $z-2$ following inner polygons.
    Again by Lemma~\ref{lem:conspolys} it follows that $|F_2| = 2z - 4$ and
    it holds that $|F_1| = 3$.
    In sum, we get that $|F_1| + |F_2| = 2z - 4 + 3 = 2z - 1 \leq |F|$,
    a contradiction.
\end{proof}

\begin{figure}[tbp]
    \centering
    \includegraphics[page=2]{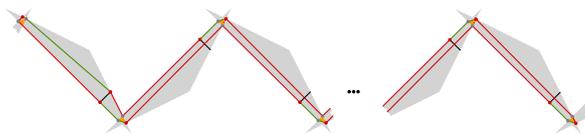}
    \caption{Non-colinear triples in a fence including a series of consecutive inner polygons.}
    \label{fig:non_collinear_triples_wire}
\end{figure}

\cref{lem:sizebound,lem:conspolys} now lead to 
a characterization of minimum link fences of isolated wires.

\begin{restatable}[$\star$]{lemma}{onlytwofences}\label{lem:onlytwofences}
    Let $\mathcal F$ be a minimum link fencing of an isolated wire $W$ of $\mathcal G(c)$ with 
    $k$ inner polygons,
    then $\mathcal F$ has in total $3k/2$ segments and
    $F \in \mathcal F$ contains exactly two consecutive inner polygons $B_i$ and $B_{i+1}$ for $i$ odd.
\end{restatable}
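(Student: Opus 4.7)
The plan is to bound the minimum number of segments of $\mathcal F$ by a short combinatorial optimization over the possible sizes of each fence, using the previous three lemmas as the main tools.

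First, by Lemmas~\ref{lem:consecutive} and~\ref{lem:sizebound}, every fence $F \in \mathcal F$ encloses a set of $1$, $2$, or $3$ consecutive inner polygons of $W$. Next, I determine the minimum segment count per fence size. A size-$1$ fence requires at least $3$ segments since any simple closed polygon has at least three sides, and this is achievable by a small triangle around the inner polygon. A size-$2$ fence also requires at least $3$ segments for the same reason, and by the wire construction the two consecutive polygons $B_i$ and $B_{i+1}$ lie together in a common gray triangle $T$ of $W$, so a slightly shrunken copy of $T$ gives a valid $3$-segment fence. Lemma~\ref{lem:conspolys} handles the size-$3$ case with at least $6$ segments.

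Let $s_1, s_2, s_3$ denote the number of fences of $\mathcal F$ of sizes $1$, $2$, $3$, respectively. Then $s_1 + 2s_2 + 3s_3 = k$, and the total segment count of $\mathcal F$ is at least $3s_1 + 3s_2 + 6s_3 = 3k - 3(s_2 + s_3)$. Minimizing this amounts to maximizing $s_2 + s_3$ subject to $s_1 + 2s_2 + 3s_3 = k$ and $s_i \in \mathbb Z_{\geq 0}$. Substituting $s_1 = k - 2s_2 - 3s_3 \geq 0$ yields $s_2 + s_3 = (k - s_1 - s_3)/2$, which for even $k$ is uniquely maximized at $s_1 = s_3 = 0$ and $s_2 = k/2$. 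Hence the minimum total segment count is exactly $3k/2$, and every fence of an optimal $\mathcal F$ must have size $2$.

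Finally, since these $k/2$ size-$2$ fences must partition $\{B_1, \ldots, B_k\}$ into pairs of consecutive polygons, the only such partition is $(B_1, B_2), (B_3, B_4), \ldots, (B_{k-1}, B_k)$, matching the claim. The main subtlety beyond the earlier lemmas is establishing that size-$2$ fences admit a tight $3$-segment realization; this follows directly from the wire geometry but should be stated explicitly so that the integer optimization above is airtight.
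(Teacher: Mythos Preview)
Your proof is correct and follows essentially the same approach as the paper: use Lemmas~\ref{lem:consecutive} and~\ref{lem:sizebound} to restrict fence sizes to $\{1,2,3\}$, plug in the per-size lower bounds ($3$, $3$, and $6$ via Lemma~\ref{lem:conspolys}), and solve the resulting integer optimization. Your rewriting of the objective as $3k - 3(s_2+s_3)$ and the explicit uniqueness argument (forcing $s_1=s_3=0$ and hence the unique consecutive-pair partition) are a bit more careful than the paper's version, but the underlying idea is identical.
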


\paragraph*{Integrating the clause triangle}
So far we only considered one arbitrary isolated wire of $\mathcal G(c)$.
To put things together we need to consider the interaction of the three wires of $\mathcal G(c)$.
Specifically, we need to show that no fence in a minimum link fencing of $\mathcal G(c)$
contains inner polygons from two different wires.

We extend the definition of bypassing an inner polygon of a wire to a whole clause gadget.
Let $F$ be a fence for $\mathcal G(c)$, 
then $F$ \emph{bypasses} an inner polygon $B_j^i$ of wire $W_i$ of $\mathcal G(c)$ if 
$F$ contains the clause triangle $B_c$ or some inner polygon of a wire $W_{i'}$ with $i' \neq i$ and 
$F$ contains $B_l^i$ for wire $W_i$ with $l > j$.
We say $F$ \emph{bypasses} the clause triangle of $\mathcal G(c)$ if 
$F$ contains inner polygons of at least two different wires of $\mathcal G(c)$ but 
not the clause triangle $B_c$ of $\mathcal G(c)$.

As for an isolated wire we can show that no inner polygon for a whole clause gadget can be bypassed.
This can be seen after observing that no fence can bypass the inner polygons of an isolated wire 
without violating Lemma~\ref{lem:consecutive}.
The remainder of the proof is then a careful case enumeration, \ifArxiv see Appendix~\ref{apx:nphard:clause}\else which can be found in the full paper~\cite{}\fi.

\begin{restatable}[$\star$]{lemma}{nobypassinginclause}\label{lem:no_bypassing_in_clause}
    Let $\mathcal F$ be a minimum link fencing of $\mathcal G(c)$ and
    $B_1,\ldots,B_k$ the inner polygons of one of the wires of $\mathcal G(c)$.
    Then there is no fence $F\in\mathcal F$ that bypasses an inner polygon $B_i$ with $i \in \{1,\ldots,k-1\}$.
\end{restatable}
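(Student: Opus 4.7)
The plan is to argue by contradiction: suppose $\mathcal F$ is a minimum link fencing of $\mathcal G(c)$ that contains a fence $F$ bypassing some inner polygon $B_a$ of a wire $W_i$ with $a \in \{1,\ldots,k-1\}$, and to construct a fencing with strictly fewer total segments.

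I would first dispose of the case in which every inner polygon contained in $F$ already belongs to $W_i$. The non-collinear triples between consecutive gray triangles of a wire, as well as their pairwise non-overlap, are local properties that do not rely on the wire being fully isolated, so the argument of \cref{lem:consecutive} applies almost verbatim to such an $F$ and directly contradicts the minimality of $\mathcal F$. From now on I may therefore assume that $F$ additionally contains $B_c$ or an inner polygon of some other wire $W_{i'}$.

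The key step is to split $F$ at an appropriately chosen segment $s$ lying in the gray triangle of $W_i$ that contains $B_a$. Such a segment must exist because $F$ contains some $B_l^i$ with $l>a$, so the chain of $F$ must enter and leave that gray triangle. Splitting yields two fences $F'$ and $F''$ with $|F'|+|F''|=|F|+2$, one of which becomes a short local fence around $B_a$, while the other contains the polygons witnessing the bypass. I would then apply \cref{lem:lowerbound} to lower bound $|F|$ by the length of the sequence of non-overlapping non-collinear triples that $F$ is forced to traverse, namely those along $W_i$ past $B_a$ together with those needed to detour back to $B_c$ or to $W_{i'}$, and combine this with \cref{lem:onlytwofences} to control the cost of the parts of the fencing that after the split are locally confined to an isolated wire. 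Matching the two bounds against $|F'|+|F''| = |F|+2$ should show that the resulting fencing saves at least one segment, the desired contradiction.

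The main obstacle will be the subcase in which $F$ contains inner polygons from two or three wires but not $B_c$. Here $F$ must thread around the clause triangle through the narrow angular gaps enforced by the $\tfrac{2\pi}{3}$-arrangement of the three wires near the centre (\cref{fig:wire_construction-b}). One therefore has to identify non-collinear triples that span the wire junctions, verify that they are pairwise non-overlapping with the triples internal to each individual wire, and then enumerate the cyclic orderings in which $F$ can visit the three wires while carefully bookkeeping the segments consumed in the central region. This enumeration, rather than any single geometric estimate, is where the real technical effort lies.
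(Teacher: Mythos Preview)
Your proposal has a genuine gap at the ``key step''. You split $F$ at a segment $s$ in the gray triangle containing $B_a$ and assert that one of the two resulting fences ``becomes a short local fence around $B_a$''. But $B_a$ is \emph{bypassed} by $F$, i.e.\ $B_a\notin F$; splitting $F$ therefore produces two fences neither of which contains $B_a$. The polygon $B_a$ sits in some \emph{other} fence $F'\in\mathcal F$, and your accounting never touches $F'$. Consequently the inequality you are aiming for, comparing $|F'|+|F''|=|F|+2$ against a lower bound on $|F|$, cannot by itself produce a strictly smaller fencing: you have added two segments and have not removed anything.

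The paper's argument proceeds quite differently and crucially manipulates \emph{both} the bypassing fence $F$ and the fence $F'$ that contains the bypassed polygon. It first reduces to the case $a=1$ (or $B_a=B_c$) by restricting $F$ to $T_c\cup T_1\cup\ldots\cup T_k$, temporarily closing it with one segment, and invoking the splitting procedures behind \cref{lem:consecutive}; this reduction is what lets one avoid handling a general index $a$. Once $a=1$, the case enumeration is over what the fence $F'$ containing $B_1^1$ \emph{additionally} contains among $B_c$ and the first polygons of the other two wires---not, as you suggest, over the cyclic order in which the bypassing fence $F$ visits the three wires. In each case the resolution is to reroute $F$ so that it absorbs $B_1^1$ at no extra cost (this is possible precisely because the detour of $F$ around $B_1^1$ already forces enough bends near the junction), and then to shortcut $F'$ along suitable sets from the twelve sets $a,a^\star,a',a'',b,\ldots,c''$ defined at the centre, thereby deleting at least as many segments from $F'$ as were added. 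Your outline misses both the reduction to $a=1$ and the role of the second fence $F'$, so as written it does not close.
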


Finally, we show that no minimum link fence of a clause gadget can ever fence 
two polygons that are in different wires.
Again, this is shown essentially via a case enumeration that considers
how a minimum link fence includes the first two to three polygons of each wire together with the clause triangle.
In each case we can conclude that there exists a fence with fewer segments that in fact does
not use the inner polygons of two distinct wires.

\begin{restatable}[$\star$]{lemma}{onlyinonewire}\label{lem:only_in_one_wire}
    Let $\mathcal F$ be an optimal fencing of a clause gadget, then there exists no fence $F \in \mathcal{F}$, which includes inner polygons belonging to two different wires.
\end{restatable}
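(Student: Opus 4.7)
The plan is to argue by contradiction. Suppose $\mathcal F$ is a minimum link fencing of $\mathcal G(c)$ containing a fence $F$ with inner polygons from at least two distinct wires, say $W_i$ and $W_j$. First, I would invoke Lemma~\ref{lem:no_bypassing_in_clause} to rule out bypassing: within each wire the polygons of $F$ must form a prefix $B_1^i,\ldots,B_{a_i}^i$ starting at the polygon closest to $B_c$, and because $F$ touches more than one wire it cannot bypass $B_c$ either, so $B_c \in F$. Hence $F$ is completely described (up to relabelling the wires) by a triple $(a_1,a_2,a_3)$ of non-negative integers, at least two of which are positive, together with the fact that $B_c \in F$. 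The task then reduces to a finite case analysis over such triples.

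Next I would establish a lower bound on $|F|$ in each case, using Lemma~\ref{lem:lowerbound}. At the centre of the gadget the three wires meet at pairwise $2\pi/3$ angles, so any fence that enters two wires and also contains $B_c$ must cross, along its boundary, all non-collinear triples separating consecutive inner polygons on the visited prefixes, plus additional non-collinear triples located at the three-way junction near $B_c$ (the latter come from the pairwise intersections of the hypotenuses of the first gray triangles of the three wires). Combined with Lemma~\ref{lem:conspolys} applied inside a single wire, this yields a lower bound roughly of the form $2(a_1+a_2+a_3)+\Delta$, where $\Delta\ge 2$ accounts for the extra bends forced at the junction when $F$ threads from one wire to another through $B_c$.

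In parallel I would construct an explicit replacement: use one small fence around $B_c$ (of $3$ segments, as guaranteed by the construction), and apply Lemma~\ref{lem:onlytwofences} (or for small prefixes, Lemma~\ref{lem:conspolys}) to fence the prefix of each touched wire optimally, pairing up polygons two by two. This replacement has total size $3 + \sum_{\ell} \lceil 3a_\ell/2\rceil$ on the prefixes, plus the unchanged fences on the suffixes. Comparing with the lower bound on $|F|$ (and remembering that replacing $F$ by its splits costs only the sums of the parts, with the remaining fences of $\mathcal F$ unchanged) a short arithmetic check shows the replacement is strictly cheaper in every configuration with $a_i+a_j\ge 1$ for two distinct wires. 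This contradicts optimality of $\mathcal F$.

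The main obstacle I expect is the boundary cases where at least one $a_\ell$ is very small (e.g.\ $(1,1,0)$ or $(1,1,1)$): here the generic $2(a_1+a_2+a_3)+\Delta$ bound is tightest, and one must verify by hand, using the splitting operation introduced before the isolated-wire analysis and counting the non-collinear triples explicitly at the junction, that any fence enclosing $B_c$ together with even a single polygon from two distinct wires already needs enough bends that splitting it into a $3$-link fence around $B_c$ plus $3$-link fences around the involved pairs (or singletons) strictly saves segments. Once these base cases are handled, the larger $(a_1,a_2,a_3)$ cases follow by a monotonicity argument: every additional polygon on a wire contributes at least two extra segments to $F$ (by Lemma~\ref{lem:conspolys}) but only $3/2$ extra segments on average to the replacement (by Lemma~\ref{lem:onlytwofences}), so the gap only grows.
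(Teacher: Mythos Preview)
Your overall strategy---reduce to prefixes via Lemma~\ref{lem:no_bypassing_in_clause}, lower-bound $|F|$ via non-collinear triples, and compare against an explicit replacement---matches the paper's. However, there are two genuine gaps.

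First, Lemma~\ref{lem:no_bypassing_in_clause} only rules out bypassing inner polygons $B_i$ of a wire; it says nothing about bypassing the clause triangle $B_c$. So your inference ``$F$ touches more than one wire, hence $B_c\in F$'' is not justified. The paper's case analysis in fact treats both possibilities (the fence may or may not contain $B_c$) and verifies the replacement works in either case.

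Second, and more seriously, your handling of the small base cases is wrong. For the configurations $(1,1,0)$ and $(1,1,1)$ the local replacement you describe does \emph{not} strictly save segments: the minimum-link fence enclosing $B_c$ together with $B_1^1,B_1^2$ (resp.\ $B_1^1,B_1^2,B_1^3$) can be realised with exactly as many links as the sum of a $3$-link fence around $B_c$ plus $3$-link singleton fences around the involved $B_1^\ell$'s. The paper explicitly acknowledges this tie and resolves it by a \emph{global} parity argument: in the replacement, some wire now has its first polygon $B_1^\ell$ fenced alone; since each wire has an even number of inner polygons and (by Lemma~\ref{lem:sizebound}) every fence contains at most three, there must be another odd-size fence further along that wire, and re-pairing these two odd pieces strictly reduces the total. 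Without this extra step your contradiction does not close, and your ``monotonicity'' paragraph cannot rescue it because the base of the induction already fails to be strict.
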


We can now use Lemma~\ref{lem:only_in_one_wire} to argue that the clause triangle is only included in a fence together with inner polygons of at most one wire.
\soerenchange{We say that such a wire is in a \emph{satisfying} state.}
The other two wires should therefore, by Lemma~\ref{lem:onlytwofences}, only use fences including two inner polygons; leading to $\frac{3(k_a + k_b + k_c)}{2} + 3$ segments in total ($k_a, k_b$ and $k_c$ being the number of inner polygons in the wires).
If we include the clause triangle in a fence of a wire, we get the same amount of segments, however, we can choose fences, s.t., the last inner polygon of the wire which fences the clause triangle, is fenced alone.
This will be crucial in the argument of how the wires and therefore the clause gadget interacts with the variable gadget.

\paragraph*{Interaction with the variable gadgets}
It remains to describe the interaction between the variable and clause gadgets.
Depending on the state of the variable gadget we can fence the last inner polygon of a wire
in the fence of a variable gadget.
We provide a fence with 5 segments (i.e., two additional ones) for the case, where the variable gadget is in the correct state and the existence of six non-collinear triples for the other case, see Figure~\ref{fig:wire}.%

\begin{restatable}{lemma}{variablecost}\label{lem:variablecost}
    The last inner polygon of a wire can be included in a fence of the variable gadget, whose spike it is connected to for the cost of two additional segments if the variable gadget is in the correct state and at least three additional segments otherwise.
\end{restatable}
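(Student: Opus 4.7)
The plan is to establish the two halves of the lemma separately: the upper bound in the correct state by an explicit construction, and the lower bound in the incorrect state by the non-collinear triple machinery of Lemma~\ref{lem:lowerbound}. Throughout I will denote by $W$ the wire in question, by $B_k$ its last inner polygon (lying in the intersection of the last gray triangle of $W$ with the relevant spike $\sigma$ of the variable gadget $\mathcal G(v)$), and by $T$ the T-polygon of $\mathcal G(v)$ carrying~$\sigma$.

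For the correct state, I would begin from the fence $F_{\text{var}}$ of the minimum link fencing of $\mathcal G(v)$ (either $\mathcal F_t$ or $\mathcal F_f$ of Lemma~\ref{lem:var_gadget}) that encloses the two inner polygons of $\mathcal G(v)$ sitting inside $T$. By the definition of the correct state, $F_{\text{var}}$ is precisely the triangular $3$-link fence whose three segments separate this pair from the surrounding free space. To also enclose $B_k$, I would remove the single segment of $F_{\text{var}}$ that runs along the base of $\sigma$ and replace it by a detour that follows $\sigma$ out to a point just past $B_k$, loops around $B_k$'s three $\varepsilon$-corners with $\sigma$, and returns. A direct inspection shows that this detour uses exactly three new links while removing one old link, so the enlarged fence has $3+3-1=5$ links, i.e.\ two more than the $3$-link baseline for $B_k$ alone.

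For the incorrect state, the T-polygon $T$ is paired in $F_{\text{var}}$ with its \emph{other} neighbor, so the inner polygon of $\mathcal G(v)$ sitting inside $T$ on the $\sigma$-side is now fenced together with a polygon on the opposite side of $T$. Any joint fence $F^{\ast}$ that contains $B_k$ and at least one inner polygon of $\mathcal G(v)$ must then traverse both the $\varepsilon$-gaps of $B_k$ inside $\sigma$ and the $\varepsilon$-gaps of the adjacent inner polygon on its \emph{far} side from $\sigma$, crossing between them through the narrow part of $T$ and back. I would identify six pairwise non-overlapping non-collinear triples that $F^{\ast}$ must cross in order: three at the $\varepsilon$-corners of $B_k$ inside $\sigma$, and three further triples at the $\varepsilon$-corners of the variable-gadget inner polygon which, in the wrong state, are oriented so that a straight segment from any of them to any $B_k$-corner leaves the free space. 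Applying Lemma~\ref{lem:lowerbound} to this sequence of six non-overlapping non-collinear triples immediately yields $|F^{\ast}|\ge 6$, so $F^{\ast}$ uses at least three segments more than the $3$-link baseline for $B_k$.

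The main obstacle will be the second paragraph: pinning down six concrete free-space regions that form pairwise non-overlapping non-collinear triples in the incorrect state, and verifying both non-collinearity (using that the $\varepsilon$-gaps at each inner polygon's corners point into directions blocked by the T-polygon geometry) and non-overlap (using that between any two consecutive triples the free space narrows through a gray-triangle bottleneck that forces a distinct bend). This is a careful but routine geometric case analysis anchored in the spike/T-polygon construction of Section~\ref{sec:variable}; the accompanying Figure~\ref{fig:wire} illustrates the two cases and allows the triples to be read off directly.
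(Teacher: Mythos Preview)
Your proposal is correct and matches the paper's approach: the paper likewise exhibits an explicit $5$-link fence for the correct state and invokes the existence of six non-overlapping non-collinear triples (via Figure~\ref{fig:wire}) together with Lemma~\ref{lem:lowerbound} to obtain the lower bound of $6$ links in the incorrect state. One minor point of phrasing: the ``$3$-link baseline'' you compare against is the variable-gadget fence, not a standalone fence for $B_k$---but since both happen to have three links the arithmetic is unaffected.
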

    
\begin{figure}
    	\centering
    	\begin{minipage}[t]{.45\textwidth}
    		\centering
    		\includegraphics[page=5, width=\linewidth]{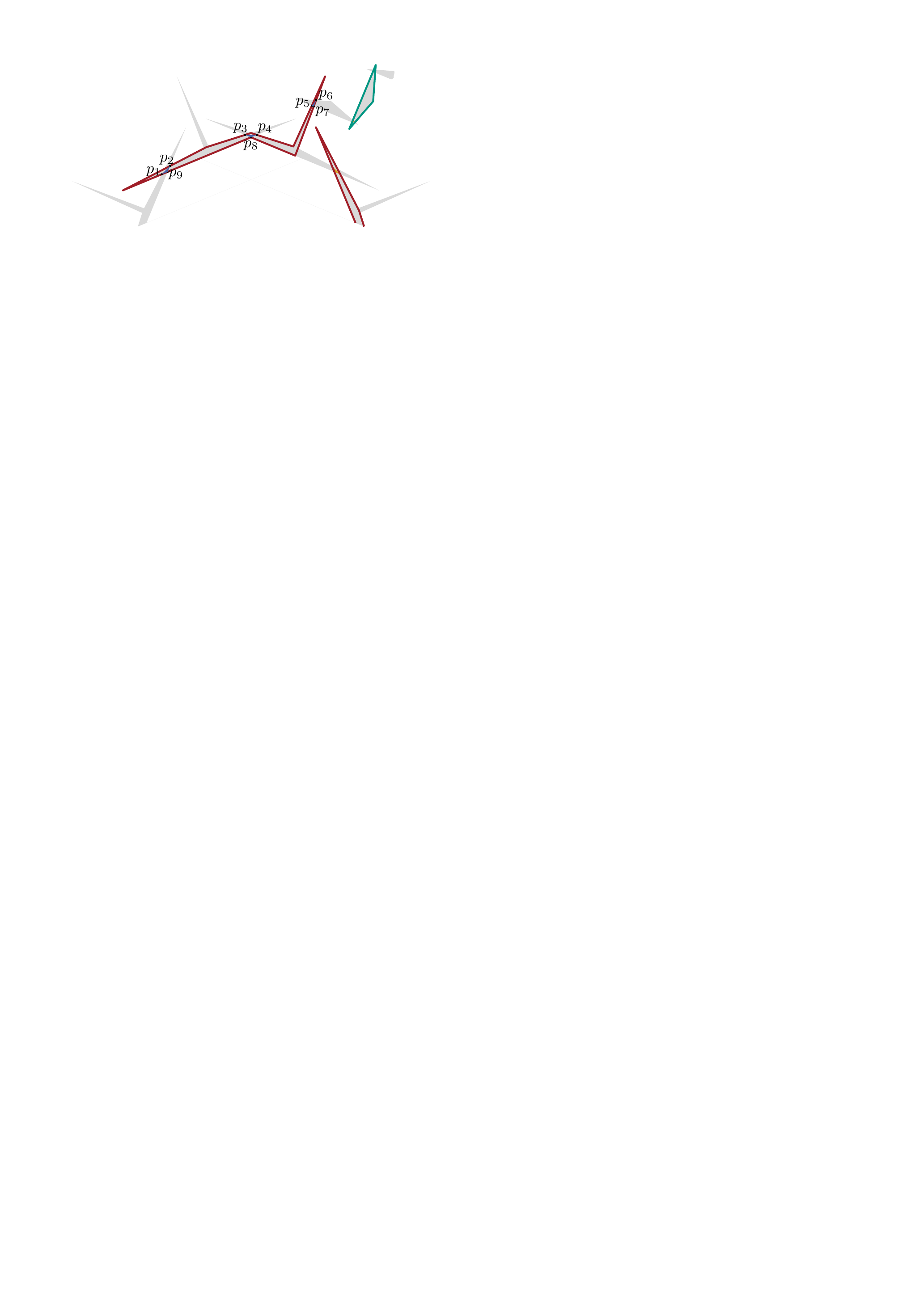}
    		\subcaption{$ \mathcal{G}(v) $ in correct state}
    		\label{fig:reduction_wire_truth}
    	\end{minipage}
    	\hfill
    	\begin{minipage}[t]{.45\textwidth}
    		\centering
    		\includegraphics[page=4, width=\linewidth]{figures/reduction_wire_connection_2.pdf}
    		\subcaption{$ \mathcal{G}(v) $ in incorrect state}
    		\label{fig:reduction_wire_truth_a}
    	\end{minipage}
    	\caption{If $\mathcal{G}(v)$ is in the correct truth state (a) inclusion of the first inner polygon of 
    		$\mathcal{G}(v, c)$ induces an additional cost of two links,
    		otherwise (b) the additional cost is at least three.}\label{fig:wire}
    \end{figure}

Concluding the interaction between clause and variable gadget we show that
given a variable gadget is in the correct state w.r.t.~a clause gadget 
we can fence the inner polygons of the wires of a clause gadget using $3/2$ segments per polygon
and adding only two segments to the fence of the variable gadget. 

\begin{restatable}[$\star$]{lemma}{clausecost}\label{lem:clausecost}
    If and only if at least one of the connected variable gadgets is in the correct state, the clause gadget can be fenced with a total of $\frac{3(k_a + k_b + k_c)}{2}$ segments plus two additional segments to a fence of the variable gadget, which is connected to the wire in the satisfying state.
\end{restatable}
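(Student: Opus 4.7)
The plan proceeds in two directions. For the ``if'' direction, assume that some variable gadget $\mathcal G(v_i)$ connected to wire $W_i$ of $\mathcal G(c)$ is in the correct state. I will construct an explicit fencing: the two non-satisfying wires $W_j$ (with $j\neq i$) are fenced as in Lemma~\ref{lem:onlytwofences}, pairing up their $k_j$ inner polygons into $k_j/2$ fences of two consecutive polygons each, using $3k_j/2$ segments per wire. For wire $W_i$ together with the clause triangle $B_c$, I use a restructured pairing in which $B_c$ is grouped with the clause-side end of $W_i$ and the tip polygon $B_{k_i}^i$ on the variable side is left as a standalone triangular fence of three segments. As noted in the discussion preceding the lemma, this scheme has the same total cost as putting $B_c$ into its own fence, namely $3(k_a+k_b+k_c)/2 + 3$ segments. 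Finally, I invoke Lemma~\ref{lem:variablecost} to absorb the standalone fence around $B_{k_i}^i$ into a fence of $\mathcal G(v_i)$; since $\mathcal G(v_i)$ is in the correct state, this adds only two segments to the variable gadget while removing the three segments of the standalone fence, yielding exactly the claimed count.

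For the ``only if'' direction, suppose a fencing achieves $3(k_a+k_b+k_c)/2$ segments in $\mathcal G(c)$ plus at most two segments charged to some connected variable gadget. By Lemma~\ref{lem:only_in_one_wire} no fence contains polygons of two different wires, and by Lemma~\ref{lem:no_bypassing_in_clause} no fence bypasses any inner polygon; hence the inner polygons of each wire are partitioned into contiguous blocks, at most one of which additionally contains $B_c$. Lemma~\ref{lem:onlytwofences} then implies that any fencing that does not interact with the variable gadgets already uses at least $3(k_a+k_b+k_c)/2 + 3$ segments: the two wires not containing $B_c$ each consume $3k_j/2$ optimally, while the wire containing $B_c$ contributes an additional three segments (either through a solo $B_c$ fence, or, equivalently, through a solo $B_{k_i}^i$ fence after restructuring). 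Thus to reach the target of $3(k_a+k_b+k_c)/2$ one must merge the tip polygon of some wire $W_i$ into a fence of the corresponding variable gadget, saving exactly three segments in the clause gadget; by Lemma~\ref{lem:variablecost} such a merge costs at least three extra segments on the variable side unless that variable gadget is in the correct state, in which case it costs exactly two. Since the budget is two additional segments, $\mathcal G(v_i)$ must be in the correct state, and $W_i$ is a satisfying wire, as claimed.

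The main obstacle is the ``only if'' direction: one has to verify that no alternative fencing scheme (for example pairing $B_c$ with three or more consecutive wire polygons, or shifting the parity of the pairing along a wire so that a middle polygon is left solo) achieves a lower count than $3(k_a+k_b+k_c)/2 + 3$ without engaging the variable gadgets. This is a careful case analysis parallel to that of Lemma~\ref{lem:onlytwofences} but with the clause triangle added; the non-collinear triples counted via Lemma~\ref{lem:lowerbound}, combined with the fence-size bound of Lemma~\ref{lem:sizebound}, should suffice to rule out all such alternatives.
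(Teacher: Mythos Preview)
Your proposal is correct and follows essentially the same approach as the paper: both arguments hinge on the baseline count of $\tfrac{3(k_a+k_b+k_c)}{2}+3$ for the clause gadget in isolation (established via Lemmas~\ref{lem:only_in_one_wire} and~\ref{lem:onlytwofences} in the text preceding the lemma) and on Lemma~\ref{lem:variablecost} to price the merge of the tip polygon into the variable gadget at two versus three segments. The paper's written proof additionally spends a paragraph arguing that a minimum link fencing will never put two or three wires into the satisfying state simultaneously (since the three-segment saving from absorbing $B_c$ can only be realized once); you do not make this point explicitly, but it is not strictly needed for the biconditional as stated and is implicit in your budget-of-two argument.
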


\paragraph*{Correctness}

It remains to argue the correctness of our reduction which then implies our main theorem.

\begin{restatable}[$\star$]{theorem}{nphard}\label{thm:nphard}
    Two-colored \textsf{BMLF} is \NP-hard even when restricting all fences to include at most three polygons.
\end{restatable}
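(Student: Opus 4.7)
The plan is to complete the reduction from planar $3,4$-\SAT{} to 2-colored \textsf{BMLF}. Given a formula $\phi$ with variable set $\mathcal V$, clause set $\mathcal C$, and embedded incidence graph $G_\phi$, I would construct the instance $\mathcal P(\phi)$ by placing one variable gadget $\mathcal G(v)$ per $v \in \mathcal V$ and one clause gadget $\mathcal G(c)$ per $c \in \mathcal C$, routing each wire $W_i$ of $\mathcal G(c)$ along an edge of $G_\phi$ into the true spike of $\mathcal G(v_i)$ if $v_i$ appears positively in $c$ and into the false spike otherwise. The planarity and bounded degree of $G_\phi$ ensure that $\mathcal P(\phi)$ has size polynomial in $|\phi|$ and can be built in polynomial time. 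Writing $k_c^i$ for the number of inner polygons of the $i$-th wire of $\mathcal G(c)$, I would then set
$$
N \;=\; 12\,|\mathcal V| \;+\; \sum_{c \in \mathcal C}\left( \tfrac{3(k_c^1 + k_c^2 + k_c^3)}{2} + 2 \right),
$$
and prove that $\phi$ is satisfiable if and only if $\mathcal P(\phi)$ admits a fencing of at most $N$ segments in which every fence encloses at most three polygons.

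For the forward direction, given a satisfying assignment $\alpha$, I would first fence each $\mathcal G(v)$ in state $\alpha(v)$ via Lemma~\ref{lem:var_gadget}, contributing $12\,|\mathcal V|$ segments in pair fences. For each $c$, I would pick a literal satisfied by $\alpha$, fence the corresponding wire together with the clause triangle in the configuration described before Lemma~\ref{lem:clausecost} so that the last polygon of this wire sits alone, and extend the fence of the connected variable gadget---which by the choice of $\alpha$ is in the correct state---to absorb that polygon at a cost of two additional segments via Lemma~\ref{lem:variablecost}. The remaining two wires of $\mathcal G(c)$ are fenced as in Lemma~\ref{lem:onlytwofences}. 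Summing contributions yields exactly $N$ segments, and by construction each fence so created encloses at most three polygons: two in every variable pair and every wire pair, three in the variable fence that has absorbed a wire polygon, and at most three in the fence joining the clause triangle with its wire neighbors.

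For the backward direction, assume $\mathcal F$ achieves $|F| \le N$. Lemma~\ref{lem:var_gadget} forces every $\mathcal G(v)$ to spend at least $12$ segments and, at equality, to be congruent to either $\mathcal F_t$ or $\mathcal F_f$, which defines an assignment $\alpha$. Lemmas~\ref{lem:no_bypassing_in_clause} and~\ref{lem:only_in_one_wire} restrict every fence inside $\mathcal G(c)$ to polygons of a single wire plus possibly the clause triangle, and combined with Lemmas~\ref{lem:conspolys} and~\ref{lem:onlytwofences} this forces the cost inside $\mathcal G(c)$ to be at least $3(k_c^1+k_c^2+k_c^3)/2 + 3$ unless the last polygon of some satisfying wire is absorbed into the adjacent variable fence. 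Lemma~\ref{lem:variablecost} tells us that such absorption saves exactly one segment when the variable is in the state matching the literal, and costs at least as much as a standalone fence otherwise. Hence $|F| \le N$ forces every clause to have at least one incident variable in the correct state, so $\alpha$ satisfies $\phi$; moreover the bound is attained only by canonical fencings whose fences contain at most three polygons, establishing the sharper claim.

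The main obstacle is the tight segment bookkeeping: every deviation from the canonical scheme---bypassing an inner polygon, merging fences across different wires, choosing sub-optimal states inside a variable gadget, or extending the clause-triangle fence to an unusual number of wire polygons---must be shown to strictly increase the total. This is precisely what the sequence of Lemmas~\ref{lem:var_gadget} through~\ref{lem:clausecost} is designed to rule out, and the remaining work is the careful combination of their inequalities to confirm that the saving of one segment per clause, chargeable only to a satisfied literal, is the unique route to reaching the threshold $N$.
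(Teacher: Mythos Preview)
Your proposal is correct and follows essentially the same approach as the paper: you build the instance from variable and clause gadgets, set the threshold $N = 12|\mathcal V| + \sum_{c}\bigl(3k^c/2 + 2\bigr)$, and argue both directions by invoking Lemmas~\ref{lem:var_gadget}--\ref{lem:clausecost} to show that a satisfying assignment yields a fencing at the threshold and that any fencing meeting the threshold forces each clause to have a wire in satisfying state, hence a satisfying assignment. Your treatment is in fact slightly more explicit than the paper's full proof in verifying that every fence of the canonical solution contains at most three polygons, which is needed for the ``even when restricted'' clause of the theorem.
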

\soerenchange{
\begin{proof}[Proof sketch]
    For an instance $\phi$ of planar $3,4$-\SAT, we construct a variable gadget for every variable and connect the clause gadgets accordingly.
    By construction any fencing with $|\mathcal V| \cdot 12 + \sum_{c \in \mathcal C} (\frac{3(k^c)}{2} + 2)$ segments, requires one wire of every clause gadget to be in a satisfying state.
    The connected variable gadget is forced into the true or false state, depending on the connected spike.
    This implies a satisfying variable assignment for $\phi$.
    
    Conversely, since every variable is either true or false and for every clause there is a true literal, we can set all variable gadgets into the true or false state according to the assignment and are guaranteed to be able to put exactly one wire per clause gadget into a satisfying state for an additional cost of exactly two.
\end{proof}}

\section{An \XP-algorithm for \textsf{BMLF} with at most two polygons in each fence}
\label{sec:XPalgo}

In \cref{sec:nphard} we showed that \textsf{BMLF} is \NP-hard when
there are only two colors, each fence contains at most three polygons, and 
each fence consists of at most five links.
In contrast, we are going to show in this section that \textsf{BMLF} can be solved in \XP-time when 
parameterizing the problem by the maximum number of links in any fence and allowing
at most two polygons per fence, i.e.,
the problem can be solved in polynomial-time when fixing the maximum 
number of links in any fence and restricting each fence to contain at most two polygons.

For our algorithm we make use of the following result derived from the work of Hershberger and Snoeyink%
~\cite{DBLP:journals/comgeo/HershbergerS94}.
It allows us to compute for a given \emph{loop}, i.e., a closed polygonal curve, inside a polygon with holes, a minimum-link loop of the same homotopy
in time $O(nk)$, where $n$ is the complexity of the polygon and $k$ is the size of the resulting fence.

\begin{theorem}[Derived from Section 5.2~\cite{DBLP:journals/comgeo/HershbergerS94}]
    \label{thm:klinkblackbox}
    Given a polygon $P$ without self-intersections but potentially with holes of
    complexity $n$,
    an integer $k$, and
    a loop $\alpha$ lying in the interior of $P$ with $O(nk)$ corners,
    we can decide in time $O(nk)$ if there exists a loop $\alpha'$
    of the same homotopy-class as $\alpha$ with at most $k$ links.

\end{theorem}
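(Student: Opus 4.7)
The plan is to reduce the loop problem to minimum-link \emph{path} computation on a combinatorial encoding of the homotopy class of $\alpha$, for which the linear-time greedy funnel-sweep of Hershberger and Snoeyink applies.

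First, I triangulate $P$ (including its holes) in $O(n)$ time and obtain a triangulation $T$. Walking along $\alpha$, I record the sequence $\sigma$ of edges of $T$ crossed by $\alpha$; since $\alpha$ has $O(nk)$ corners, $\sigma$ has length $O(nk)$ and is produced in $O(nk)$ total time. I then free-reduce $\sigma$ by cancelling adjacent crossings of the same edge in opposite directions, obtaining a word $\hat{\sigma}$ of length $O(nk)$ that uniquely represents the homotopy class of $\alpha$ in $P$. Any loop $\alpha'$ homotopic to $\alpha$ crosses the edges of $T$ in an order whose reduction equals $\hat{\sigma}$, so it suffices to find at most $k$ straight segments whose concatenation traverses $\hat{\sigma}$ in the prescribed cyclic order while staying inside $P$.

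Second, I apply the greedy funnel-sweep to $\hat{\sigma}$: starting at a chosen basepoint, I maintain the set of directions along which the current link can still be extended to cross the upcoming edges; when this sightline-funnel becomes empty I commit a link and open a new one on the edge that closed the funnel. The standard Hershberger--Snoeyink analysis shows that each funnel update is amortized $O(1)$, so the sweep terminates in time $O(nk)$, and the number of committed links is a lower bound on the number of links in any loop of the given homotopy class.

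Third, to handle the loop-closure constraint I iterate over $O(1)$ canonical starting positions on $\hat{\sigma}$. A standard \emph{taut-link} argument shows that in some optimal loop at least one link has an endpoint at a polygon vertex or passes tangent to one, so it suffices to align the start of the sweep with such a link; equivalently, I run the sweep on the doubled word $\hat{\sigma}\hat{\sigma}$ and check whether any cyclic window of length at most $k$ yields a valid closed loop inside $P$. The total running time is still $O(nk)$.

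The main obstacle will be the loop-closure step: for open paths the greedy funnel is known to be optimal via an exchange argument, but for a loop one must ensure that the last committed link connects back to the starting point without an additional wasted bend. I would resolve this by showing that in any minimum-link loop at least one link is \emph{greedy-tight} at some edge of $\hat{\sigma}$ (that is, its extension is blocked by a polygon obstacle rather than by the next link), so that some cyclic rotation of the doubled sweep reproduces that loop's link count exactly.
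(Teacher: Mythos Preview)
The paper does not prove this theorem at all: it is invoked as a black box from Hershberger and Snoeyink, with only a remark that their stated running time $O(C_\alpha + \Delta_\alpha + \Delta_{\alpha'})$ specialises to $O(nk)$ once each of $C_\alpha$, $\Delta_\alpha$, $\Delta_{\alpha'}$ is bounded by $O(nk)$. There is therefore no in-paper argument to compare against; you have instead tried to reconstruct the cited algorithm.

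Your outline (triangulate, record and reduce the crossing word, greedy funnel sweep) is the correct skeleton, but the loop-closure step has a genuine gap. You claim ``$O(1)$ canonical starting positions,'' yet the justification you give---that some optimal link is pinned at a polygon vertex---yields $\Theta(n)$ candidate starts, not $O(1)$; taken at face value that costs $O(n^2k)$. The alternative you mention, running the greedy sweep once over the doubled word $\hat\sigma\hat\sigma$ and scanning for a window of at most $k$ links that closes, is in fact what Hershberger and Snoeyink do, but its correctness hinges exactly on the ``greedy-tight link'' lemma you flag as the main obstacle and leave unproved. That lemma is the crux: one must show that every optimal closed loop contains a link that is maximal in the greedy sense (its extension is blocked by the polygon, not by the next link), so that starting the sweep immediately after it reproduces the optimal count. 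Without this, the doubled sweep could in principle overcount by one at every rotation, and your argument does not close. A minor side issue: deterministic $O(n)$ triangulation of a polygon \emph{with holes} is not standard, though this does not affect the overall $O(nk)$ bound in the regime of interest.
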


\begin{remarkenv}
It is worth noting that in the paper by Hershberger and Snoeyink~\cite{DBLP:journals/comgeo/HershbergerS94} 
    Theorem~\ref{thm:klinkblackbox} is only stated in text.
    The runtime is given as $O(C_\alpha + \Delta_\alpha + \Delta_{\alpha'})$, where 
    $C_\alpha$ is the complexity of $\alpha$, the free space between polygons is assumed to be triangulated and $\Delta_\alpha$ and $\Delta_{\alpha'}$ are the number of triangulation edges intersected by $\alpha$ and the fence $\alpha'$, respectively.
    However an example of an instance with multiple obstacles is given, in which $\Delta_{\alpha'}\in\Omega(nk)$, where $n$ is the number of corners over all polygons.
    Since in our scenario we can find a path $\alpha$ s.t. $C_\alpha \in O(nk)$ and $\Delta_\alpha \in O(nk)$, we can make the assumption that $\alpha'$'s complexity is 
    in $O(nk)$.
\end{remarkenv}

Let $P$ be a polygon without self-intersections.
We denote with $\mathcal T_P = \{T_1,\ldots,T_z\}$ a triangulation of $P$ 
with triangles $T_1,\ldots,T_z$.
Note that we do not require any further properties of $\mathcal T_P$.
If $P$ is clear from the context we omit it and set $\mathcal T = \mathcal T_P$.
Let $T_1,T_2 \in \mathcal T$ be two triangles and let 
$l$ be a line segment with endpoints $p$ and $q$ such that $p \in T_1$ and $q \in T_2$.
We call $l$ a \emph{splitting segment}.
Consider Figure~\ref{fig:bitangents_a} for an example
for $T_1$ and $T_2$ if
$l$ contains no points of $\mathbb R^2 \setminus P$.
Intuitively, a splitting segment separates the holes that intersect the convex hull of $T_1 \cup T_2$
into two sets.
Let $\mathcal H$ be all the holes of $P$ that intersect or are fully contained 
in the interior of the convex hull of $T_1 \cup T_2$.
We say that a hole $H \in \mathcal H$ is
to the \emph{left} (\emph{right}) of $l$ if the from $p$ to $q$ oriented supporting line of $l$
leaves $H$ in the left (right) half-plane.
We call two splitting segments of $T_1$ and $T_2$ \emph{equivalent}
if the same holes of $\mathcal H$ are to their respective left and right. Segments which intersect holes are not splitting segments.

\ifAppendixProofs
    \begin{restatable}{lemma}{equivalenceclasses}
    \label{lem:equivalence_classes}
        Let $P$ be a polygon without self-intersection, $\mathcal H$ a set of holes and 
        $\mathcal T$ a triangulation of $P$.
        Then for every pair of triangles $T_1,T_2 \in \mathcal T$ with $T_1 \neq T_2$
        there are at most $4|\mathcal H|^2$ different equivalence classes of splitting segments.
    \end{restatable}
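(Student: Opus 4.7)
The equivalence class of a splitting segment $l$ depends only on which side of its oriented supporting line each hole of $\mathcal H$ lies on, so the plan is to bound the number of distinct left/right partitions of $\mathcal H$ that can be realized by an oriented line. I would parameterize the space of oriented lines in $\mathbb R^2$ by a two-dimensional space $\Sigma$ (for example, direction and signed offset), and carry out the counting entirely in $\Sigma$.

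For each hole $H \in \mathcal H$ I would define the curve $\gamma_H \subset \Sigma$ as the locus of oriented lines supporting $H$, i.e., lines that touch $H$ but do not cross into its interior. Each $\gamma_H$ separates parameters of lines with $H$ strictly to the left from parameters of lines with $H$ strictly to the right, so the induced left/right partition of $\mathcal H$ is constant on every cell of the arrangement $\mathcal A(\{\gamma_H\}_{H \in \mathcal H})$. Consequently, the number of equivalence classes of splitting segments between $T_1$ and $T_2$ is at most the number of cells of $\mathcal A$.

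To count cells I would observe that an intersection point of $\gamma_{H_i}$ and $\gamma_{H_j}$ for distinct holes corresponds to an oriented line supporting both $H_i$ and $H_j$ simultaneously, i.e., a common bitangent. Since the holes are pairwise interior-disjoint and the left/right position of a hole with respect to a line depends only on the relative position of its convex hull, the counting reduces to the convex case: any two disjoint convex bodies admit at most four common tangent lines (two outer and two inner). This yields at most $4 \binom{|\mathcal H|}{2}$ intersection points in $\mathcal A$, and a standard planar-arrangement count (via Euler's formula applied to the subdivision of $\Sigma$ induced by the curves $\gamma_H$) then gives the bound of at most $4 |\mathcal H|^2$ cells, which is exactly the claim.

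The main obstacle I foresee is giving a clean treatment of the curves $\gamma_H$ when the holes are non-convex polygons, since the supporting-line locus of a non-convex polygon is more intricate than that of a convex one. The resolution is that for the purely combinatorial left/right question, only each hole's convex hull matters; the fact that convex hulls of disjoint holes may themselves overlap does not harm the argument, because such overlaps only rule out some partitions (further decreasing the count) without introducing more than four bitangents per pair. Restricting to cells containing the parameter of at least one actual splitting segment from $T_1$ to $T_2$ can only reduce the count further, so the upper bound $4 |\mathcal H|^2$ persists.
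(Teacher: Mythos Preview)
Your dual-space argument is correct and reaches the same bound, but the paper's proof is primal and more direct: starting from an arbitrary splitting segment, it rotates the supporting line first about its midpoint until it becomes tangent to some hole, then about that contact point until it becomes tangent to a second hole; the left/right partition of $\mathcal H$ is preserved throughout the rotation, so every equivalence class is witnessed by a bitangent of some pair of holes, of which there are at most $4|\mathcal H|^2$. Both arguments ultimately rest on the same four-bitangents-per-pair fact. The paper's rotation avoids setting up the dual arrangement and invoking Euler's formula, at the price of having to handle separately the degenerate situations (e.g.\ all holes on one side of $l$, or only one hole present) that your cell-count absorbs automatically.

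One detail to tighten in your write-up: in the space of \emph{oriented} lines each locus $\gamma_H$ has two connected components (left-supporting and right-supporting), and an unoriented bitangent of two holes gives two points of $\Sigma$, so the intersection count is $8\binom{|\mathcal H|}{2}$ rather than $4\binom{|\mathcal H|}{2}$. Carrying this through Euler's formula with $2|\mathcal H|$ closed curves on the cylinder still gives at most $4|\mathcal H|(|\mathcal H|-1) < 4|\mathcal H|^2$ faces, so the stated bound survives.
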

\else
    \begin{restatable}[$\star$]{lemma}{equivalenceclasses}
    \label{lem:equivalence_classes}
        Let $P$ be a polygon without self-intersection, $\mathcal H$ a set of holes and 
        $\mathcal T$ a triangulation of $P$.
        Then for every pair of triangles $T_1,T_2 \in \mathcal T$ with $T_1 \neq T_2$
        there are at most $4|\mathcal H|^2$ different equivalence classes of splitting segments.
    \end{restatable}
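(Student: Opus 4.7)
My plan is to reduce the question to counting cells in an arrangement in the parameter space of oriented lines. First I would observe that whether a hole $H\in\mathcal H$ lies in the left half-plane, the right half-plane, or is straddled by the supporting line of a splitting segment $l$ depends only on the oriented supporting line of $l$, not on the choice of endpoints $p\in T_1$ and $q\in T_2$. Consequently the equivalence relation factors through the map sending a splitting segment to its oriented supporting line, and it is enough to bound the number of distinct (left, right, straddled)-partitions of $\mathcal H$ realised by oriented lines of the plane.

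Next I would parameterise oriented lines by $(\theta,d)\in[0,2\pi)\times\mathbb R$, angle and signed offset. Since a polygon lies entirely in a half-plane iff its convex hull does, for each $H\in\mathcal H$ I may work with $\mathrm{conv}(H)$. The set of oriented lines with $H$ entirely on the left is then bounded by a single continuous curve $\gamma_H^+$ in $(\theta,d)$-space, traced by the oriented lines that support $\mathrm{conv}(H)$ with $H$ on the right; symmetrically, a second curve $\gamma_H^-$ separates ``$H$ straddled'' from ``$H$ entirely on the right''. Collecting these over all holes gives a family $\mathcal C$ of at most $2|\mathcal H|$ curves, and any two oriented lines lying in the same cell of the arrangement $\mathcal A(\mathcal C)$ induce identical classifications on every hole. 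Hence each equivalence class of splitting segments is contained in a single cell of $\mathcal A(\mathcal C)$.

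The remaining step is to bound the cell count of $\mathcal A(\mathcal C)$. The key geometric fact I would prove is that any two curves in $\mathcal C$ meet in at most two points: an intersection corresponds to an oriented line that is simultaneously a prescribed-side tangent to two disjoint convex sets, i.e.\ a common bitangent, and two disjoint convex sets admit at most two bitangents per side configuration. With $n=2|\mathcal H|$ curves of pairwise intersection number at most $2$, the standard arrangement complexity bound gives at most $2\binom{n}{2}+n+1\le 4|\mathcal H|^2$ cells, as claimed.

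The main obstacle will be rigorously justifying the bitangent-based bound of $2$ on pairwise intersections of the envelope curves $\gamma_H^{\pm}$, together with careful handling of degenerate configurations (two holes sharing a common support line, envelopes meeting tangentially, or holes whose convex hulls overlap in the parameter region of interest). Once this constant is in hand, the reduction to arrangement counting and the final cell-count bound are routine.
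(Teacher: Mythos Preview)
Your approach is correct in essence but takes a genuinely different route from the paper. The paper argues by canonical representatives: given a splitting segment $l$ with holes $L$ to its left and $R$ to its right, rotate $l$ about its midpoint until it touches some hole, then pivot about that tangency point until it touches a second hole; the resulting segment is equivalent to $l$ and its supporting line is a bitangent of two holes, so the number of classes is at most the number of bitangents, namely $4|\mathcal H|^2$. Your dual-space arrangement argument is heavier but also more systematic: it makes explicit that the class depends only on the oriented supporting line, and it handles the ``straddled'' state cleanly, something the paper's rotation step glosses over. Two small slips are worth flagging. First, the implication you need is ``same cell $\Rightarrow$ same classification'' (which you do argue), giving $\#\text{classes}\le\#\text{cells}$; the sentence ``each equivalence class is contained in a single cell'' is the converse and is neither needed nor always true. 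Second, your arithmetic is off by one: with $n=2|\mathcal H|$ one has $2\binom{n}{2}+n+1=n^2+1=4|\mathcal H|^2+1$. This is immaterial, since only an $O(|\mathcal H|^2)$ bound feeds into the \textsf{XP} running time, and a more careful cell count on the cylinder $S^1\times\mathbb R$ (rather than the strip $[0,2\pi)\times\mathbb R$) would recover the constant anyway. In short: the paper's proof is a two-line geometric observation; yours trades brevity for robustness.
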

\fi
\prooffromappendix{
\begin{proof}
    Let $\mathcal H^\cap \subseteq \mathcal H$ be the set of holes, which intersect the joint convex hull $C$ of $T_1$ and $T_2$ or
    are fully contained in it.
    Let $l$ be a splitting  segment with endpoints $p \in T_1$ and $q \in T_2$.
    Observe, that since $l$ is completely contained in $C - \mathcal H$ we,
    by definition of equivalent splitting segments, can disregard any hole $H \not \in \mathcal H^\cap$.

    Let $l \in C$ be a splitting segment of $T_1$ and $T_2$, 
    which has a set $L \subseteq \mathcal H^\cap$ of holes to its left and
    a set $R \subseteq \mathcal H^\cap$ to its right.
    There exists another splitting segment $l_c$ 
    whose supporting line is a bitangent on two holes, say $H_L \in L$ and $H_R \in R$,
    such that $l_c$ has also $L$ to its left and $R$ to its right.
    We obtain the splitting segment $l_c$ from $l$  by
    rotating $l$ first around its center point
    until it touches either $H_L$ or $H_R$ in a point $r$ and 
    then continue to rotate $l$ around $r$ until it touches the second hole, see \cref{fig:bitangents_a}. 
    Since any pair of holes has at most four bitangents (\cref{fig:bitangents_b}) there are at most 
    $4|\mathcal H^\cap|^2$ equivalence classes.
\end{proof}
}

    \begin{figure}
        \centering
        \begin{subfigure}[t]{.7\linewidth}
            \centering
            \includegraphics[page=4]{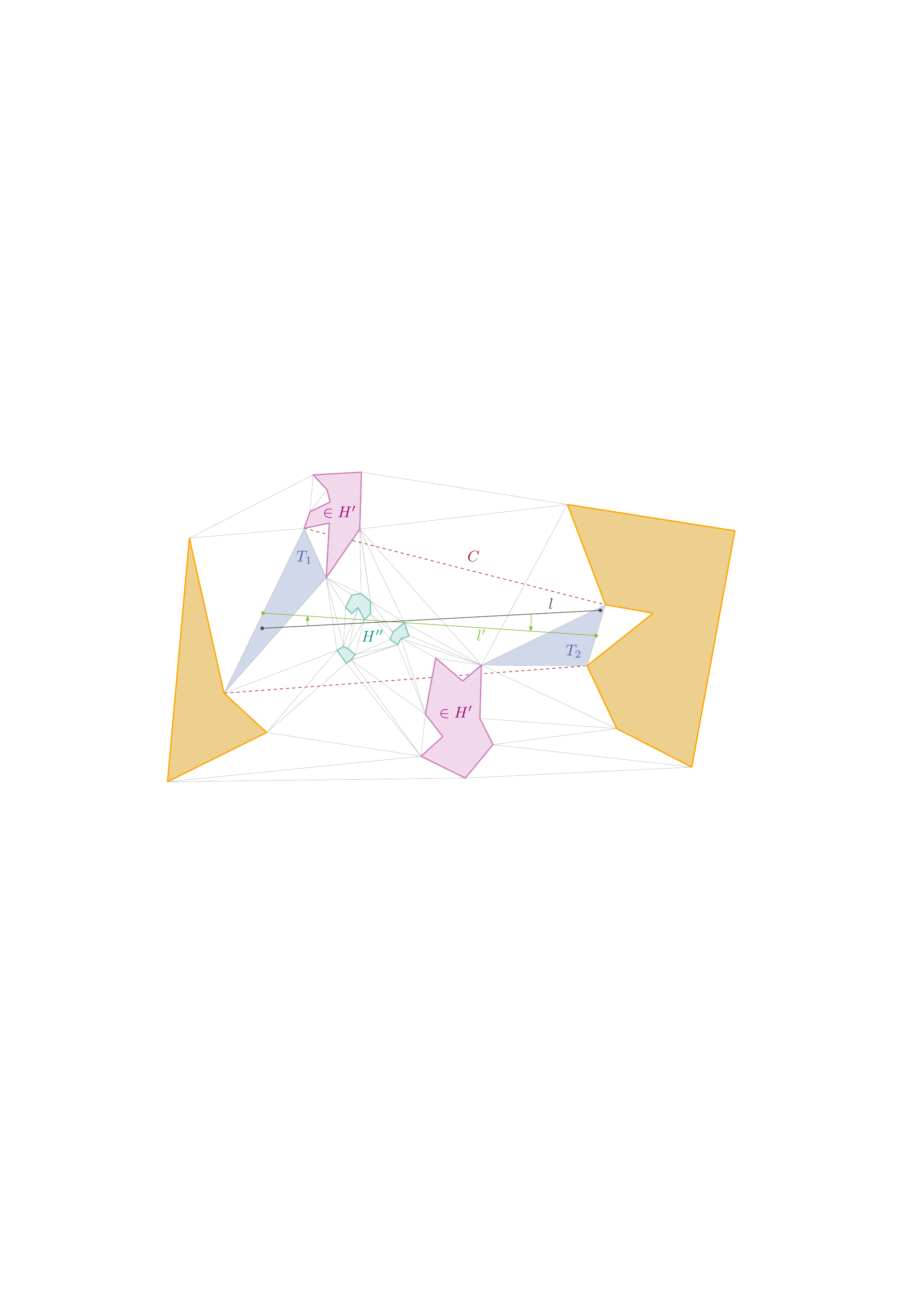}
            \subcaption{A splitting segment $l_c$ equivalent to $l$ and bitangent to two holes.}
            \label{fig:bitangents_a}
        \end{subfigure}
        \qquad
        \begin{subfigure}[t]{.24\linewidth}
            \centering
            \includegraphics[page=2, width=.6\linewidth]{figures/equivalence_classes.pdf}
            \subcaption{Bitangents}
            \label{fig:bitangents_b}
        \end{subfigure}
        \caption{
        \sublab{a} We can obtain the splitting segment $l_c$ from a splitting segment $l$ by rotating $l$ clockwise until it is a bitangent to two holes.
        \sublab{b} Any pair of two polygons admits at most four bitangents, only one of which can not be rotated clockwise without intersecting one of the polygons.
        }
        \label{fig:bitangents}
    \end{figure}

\ifAppendixProofs
    \begin{restatable}{theorem}{xpalgorithm}
        Given an instance $\mathcal P$ of \textsf{BMLF} with outer polygon $Q \in \mathcal P$,
        we can decide in time $O(kn^{2k+4})$ if a minimum link fencing $\mathcal{F}$ of $\mathcal{P}$ 
        exists, in which every fence contains at most two polygons, each fence in the fencing has at most $k$ segments, and
        $n$ is the number of corners in $\mathcal P$.
    \end{restatable}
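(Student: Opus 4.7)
The plan is to enumerate every combinatorially distinct candidate fence of at most $k$ links that encloses at most two input polygons, test feasibility via Theorem~\ref{thm:klinkblackbox}, and then assemble these candidates into an optimal global fencing.

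I would begin by triangulating the interior of $Q$ with the other input polygons treated as holes, obtaining a triangulation $\mathcal T$ with $O(n)$ triangles in $O(n \log n)$ time. For any fence $F$ in an optimal solution enclosing two polygons $P_i, P_j$, there must exist a splitting segment in the sense of Lemma~\ref{lem:equivalence_classes} whose endpoints lie in two triangles $T_1, T_2 \in \mathcal T$ with $T_1$ incident to $P_i$ and $T_2$ incident to $P_j$; singleton fences are handled analogously with a degenerate splitting segment. By Lemma~\ref{lem:equivalence_classes}, for each of the $O(n^2)$ ordered triangle pairs there are at most $4|\mathcal H|^2 = O(n^2)$ equivalence classes of splitting segments, and each class uniquely determines the homotopy type of $F$ with respect to the other input polygons. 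From the canonical representative of an equivalence class I would construct an explicit loop $\alpha$ of complexity $O(nk)$ by concatenating $\partial P_i$, the splitting segment, $\partial P_j$ (if present), and detours around the holes as prescribed by the left/right partition; Theorem~\ref{thm:klinkblackbox} then decides in time $O(nk)$ whether $\alpha$ is homotopic to a $k$-link fence.

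To assemble the candidate fences into a global solution, I would exploit that any valid fencing induces a nesting forest rooted at $Q$. A bottom-up dynamic program over this forest selects, at each node, the cheapest compatible collection of candidate sub-fences: the at-most-two-polygons hypothesis bounds each leaf's content and restricts interactions between siblings, so the DP state is indexed by a node of the forest together with a subset of at most $k$ boundary links. Since each link in a minimum-link fence can be canonically supported by a pair of vertices drawn from the $O(n^2)$ feature points of the arrangement, there are only $O(n^{2k})$ combinatorially distinct fences to consider inside any region. Combined with the $O(n^4)$ enumeration of candidate homotopy classes and the $O(k)$ per-query cost of Theorem~\ref{thm:klinkblackbox}, the total running time is $O(k n^{2k+4})$ as claimed.

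The main obstacle is justifying completeness of the enumeration: I must argue that every fence appearing in an optimal solution is homotopic (inside the complement of the other polygons and $Q$) to some enumerated representative, so that the minimum-link realization returned by Theorem~\ref{thm:klinkblackbox} coincides with the optimum in its class. A secondary difficulty lies in the assembly step: candidate fences constructed independently may cross or wrap around different subsets of holes, and one must show that the nesting-forest DP can always pick a mutually compatible subset without blowing up the state space. Both issues are tractable because splitting-segment equivalence already records the complete left/right partition of holes that any fence of the corresponding homotopy class would induce, which gives the needed canonical labeling to drive the induction on nesting depth.
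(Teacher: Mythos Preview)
Your enumeration of homotopy classes is too coarse. A single splitting segment between a triangle near $P_i$ and a triangle near $P_j$ does \emph{not} determine the homotopy type of a closed loop enclosing $P_i$ and $P_j$ in the complement of the remaining polygons: the loop has up to $k$ links, and each of them may pass on either side of several holes, so the number of homotopy classes realisable with at most $k$ links grows with $k$, not merely with $n$. The paper obtains completeness by guessing, for every vertex of the prospective fence, which triangle of $\mathcal T$ contains it (an ordered $k$-tuple, $O(n^k)$ choices) and, for every link, which equivalence class of splitting segments it belongs to ($O(n^{2})$ choices per link by Lemma~\ref{lem:equivalence_classes}, hence $O(n^{2k})$ in total). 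Only this product of $k$ splitting-segment choices captures all homotopy classes; from each such choice one builds a representative plane loop $\alpha$ (resolving the $2^{O(k)}$ ways in which repeated triangle boundaries can be ordered) and hands it to Theorem~\ref{thm:klinkblackbox}. Your construction of $\alpha$ from a single splitting segment by ``concatenating $\partial P_i$, the splitting segment, $\partial P_j$, and detours around the holes'' produces just one particular class and misses, for instance, fences that wind once around some third polygon.

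The assembly step also does not go through as written. Fences in \textsf{BMLF} may intersect, so there is no nesting forest on which to run a dynamic program; even under a disjointness assumption, the DP state you sketch (a node together with up to $k$ boundary links) has not been shown to decouple across siblings. The paper bypasses all of this: once $\lambda_u$ (the best singleton fence for $P_u$) and $\lambda_{uv}$ (the best pair fence for $P_u,P_v$) are known, the global optimum is a minimum-weight perfect matching on the complete graph over the polygons (with a dummy vertex if $|\mathcal P\setminus\{Q\}|$ is odd) and edge weights $\min\{\lambda_{uv},\lambda_u+\lambda_v\}$. This matching is precisely where the ``at most two polygons per fence'' hypothesis is cashed in, and it runs in $O(n^4)$, dominated by the enumeration above.
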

\else
    \begin{restatable}[$\star$]{theorem}{xpalgorithm}
        Given an instance $\mathcal P$ of \textsf{BMLF} with outer polygon $Q \in \mathcal P$,
        we can decide in time $O(kn^{2k+4})$ if a minimum link fencing $\mathcal{F}$ of $\mathcal{P}$ 
        exists, in which every fence contains at most two polygons, each fence in the fencing has at most $k$ segments, and
        $n$ is the number of corners in $\mathcal P$.
    \end{restatable}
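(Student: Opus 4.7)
The plan is to decompose the problem according to the constraint that every fence contains at most two polygons: if we can compute, for every singleton $\{P_i\}$ and every same-color pair $\{P_i, P_j\}$ from $\mathcal P$, the minimum number of links $w(S)$ of a simple closed curve in the free space that encloses exactly $S$ and no other polygon of $\mathcal P$ (or $\infty$ if no such curve with at most $k$ links exists), then an optimal fencing corresponds to a minimum-weight partition of $\mathcal P$ into singletons and pairs. This partition problem reduces to a minimum-weight perfect matching on an auxiliary graph with $O(|\mathcal P|)$ vertices — representing each polygon together with an artificial ``singleton partner'' — and can be solved in polynomial time by Edmonds' algorithm.

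For the singleton case $w(\{P_i\})$, I would construct a tight initial loop $\alpha$ around $P_i$ in the free space (say, a slightly inflated copy of $\partial P_i$) and invoke Theorem~\ref{thm:klinkblackbox} with parameter $k$; this returns a minimum-link loop in the unique homotopy class winding once around $P_i$ only, in $O(nk)$ time. For a pair $\{P_i, P_j\}$, many homotopy classes of simple loops can enclose both polygons, distinguished by on which side of each other polygon the fence passes. To enumerate these classes, I would apply the splitting-segment framework of Lemma~\ref{lem:equivalence_classes}. After triangulating the free space once, for every ordered pair of triangles $(T_1, T_2)$ with $T_1$ incident to $P_i$ and $T_2$ incident to $P_j$, I enumerate the $O(|\mathcal H|^2)$ equivalence classes of splitting segments between them. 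Each class determines a distinct homotopy class of simple loops enclosing $\{P_i, P_j\}$: a representative initial loop is obtained by doubling the splitting segment and closing it off by short arcs around $P_i$ and $P_j$. Running Theorem~\ref{thm:klinkblackbox} on each representative yields the minimum-link loop within that class, from which we pick the overall best after checking in $O(nk)$ time that the returned loop encloses only $P_i$ and $P_j$.

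The main obstacle is to prove that the splitting-segment enumeration is exhaustive: every simple closed fence enclosing exactly $\{P_i, P_j\}$ must admit a splitting segment whose equivalence class is among those enumerated. This should follow from the simple connectivity of the region enclosed by the fence and from the fact that any straight segment between $P_i$ and $P_j$ inside that region lies in the free space, so its supporting line partitions the other polygons into a left and right set whose combinatorial identity pins down the homotopy class. Combining the enumeration over $O(|\mathcal P|^2)$ pairs of polygons, $O(n^2)$ pairs of triangles, and $O(n^2)$ equivalence classes with $O(nk)$ per call to Theorem~\ref{thm:klinkblackbox}, $O(nk)$ verification of enclosure, and a polynomial-time matching step, yields a running time within the claimed $O(kn^{2k+4})$ bound.
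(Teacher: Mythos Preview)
Your high-level decomposition---compute optimal fence costs for singletons and same-color pairs, then solve a minimum-weight perfect matching---is exactly what the paper does, and your treatment of the singleton case matches as well.

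The gap is in the pair case. Your enumeration uses a \emph{single} splitting segment between a triangle incident to $P_i$ and one incident to $P_j$, builds a ``dogbone'' loop from it, and asserts that its equivalence class pins down the homotopy class of the fence. The justification you sketch requires, for every candidate simple fence $F$, a straight segment inside the interior of $F$ from (near) $P_i$ to (near) $P_j$. That need not exist: already a six-link L-shaped fence with $P_i$ in one arm and $P_j$ in the other has no such straight chord, and in general two polygons may be mutually invisible through the free space, so your enumeration can be empty while a valid fence still exists. More structurally, in a plane with three or more obstacle holes the isotopy classes of simple closed curves enclosing a fixed pair of holes form an infinite family (they are related by Dehn twists about curves enclosing other holes), whereas a single splitting segment only records a left/right partition and hence yields at most $O(|\mathcal H|^2)$ classes. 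A further red flag is that your running time is $O(k n^{7})$ with no $k$ in the exponent, i.e., polynomial rather than \XP; had the enumeration been exhaustive you would have proved a strictly stronger theorem.

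The paper circumvents this by \emph{not} enumerating homotopy classes at all. Instead it guesses, for each of the $k$ links of the sought fence, the triangle of $\mathcal T$ containing its startpoint and the equivalence class of that link as a splitting segment between this triangle and the next. This directly parameterises the space of $\le k$-link fences, at the cost of $n^{O(k)}$ choices; from each choice a concrete plane loop $\alpha$ is assembled (with a $2^{O(k)}$ factor to resolve crossings when the chosen segments share triangulation edges) and fed to Theorem~\ref{thm:klinkblackbox}. That is where the $n^{2k}$ in the bound comes from, and why the exhaustiveness issue you flagged does not arise.
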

    \begin{sketch}
    For each polygon and for each pair of polygons we compute a minimum-link fence.
    Let $\lambda_{uv}$ be the number of links for a minimum link fence containing $P_u,P_v\in\mathcal P$ and
    $\lambda_u$ the number of links for a minimum link fence containing only $P_u \in\mathcal P$.
    Consider a complete graph $G$ containing one vertex $u$ for each polygon $P_u\in\mathcal P$ and
    one more vertex $x$ if $|\mathcal P|$ is odd.
    Set the edge-weights $w(u,v) = \min\{\lambda_{uv},\lambda_u + \lambda_v\}$ and
    $w(x,u) = \lambda_u$ for $P_u,P_v \in \mathcal P$.
    If for some $P_u \in \mathcal P$ or pair $P_u,P_v \in \mathcal P$ 
    no fence with $\leq k$ segments exists we remove that edge.
    Computing a minimum weight perfect matching in this graph yields a minimum link fencing.

    It remains to compute the minimum-link fences for each polygon and for each pair of polygons of $\mathcal P$.
    We consider a triangulation $\mathcal T$ of the free space of $\mathcal P$.
    For one single polygon we construct a plane loop around it by just traversing the incident triangles in the triangulation.
    To compute the minimum link fence for a pair of polygons in $\mathcal P$ we need to do more work.
    Since $\mathcal T$  contains only $O(n)$ triangles we can
    branch over the $O(n^k)$ ordered $k$-tuples of triangles.
    Moreover, by Lemma~\ref{lem:equivalence_classes} we can branch over the $O(n^{2k})$
    different splitting segments.
    If for our choice of triangles all splitting segments between consecutive triangles exist
    we construct a plane loop $\alpha$ if possible or otherwise reject the branch.
    
    Let $T_1,\ldots,T_k$ be the chosen $k$-tuple of triangles and $l_1,\ldots,l_k$ the splitting segments.
    If none of the splitting segments intersect 
    the same triangles in between two consecutive triangles $T_i$ and $T_{i+1}$ this is straight forward.
    If there are triangles that are intersected multiple times we have to evaluate $2^{O(k)}$ choices
    of how to resolve the self-crossings such a repetition induces for the loop $\alpha$.
    For each valid choice we apply Theorem~\ref{thm:klinkblackbox}.
    
    These are only $O(kn^{2k+4})$ choices in total and 
    computing a minimum weight perfect matching can be done in $O(V^2 E)$ time (with 
    $V$ being the number of vertices and $E$ the number of edges) via finding a maximum weight perfect matching (e.g.~\cite{edmonds_1965}) on the same graph with edge weights set to
    maximum edge-weight plus one minus the original edge-weight.
    \end{sketch}
\fi
\prooffromappendix{
\begin{proof}
    Throughout, we consider the triangulation $\mathcal T$ of the free space of $\mathcal P$.
    Let $t+1$ be the number of colors in the given instance and
    $c_1, \ldots, c_t$ be the colors of polygons in $\mathcal P$ with $c_i \neq f(Q)$ for all $i = 1,\ldots,t$.
    Observe that the homotopy of a fence including exactly one polygon $P \in \mathcal P$ is unique.
    We find a path $\alpha$ with this homotopy in the triangulated free space between the polygons in $\mathcal P$ by traversing the boundary of $P$ clockwise and at every corner of the polygon adding every incident triangle of $\mathcal T$ in clockwise order to a list.
    This yields a series of triangles from which we can construct a loop $\alpha$ 
    such that $P$ is contained in $\alpha$.
    This can be done, for example, by connecting all midpoints of triangulation segments of a triangle and its successor in the loop.
    Then, we use \cref{thm:klinkblackbox} to obtain a minimum-link fence for $P$ from $\alpha$
    or determine that no fence with at most $k$ links exists.
    Computing all individual fences requires $O(|\mathcal{P}|kn) = O(kn^2)$ time.
    
    Next, we consider every pair $P_1,P_2 \in \mathcal P$ with 
    $P_1 \neq P_2$, $c_i = f(P_1) = f(P_2)$ and $i \in \{1,\ldots,t\}$.
    Note that in contrast to a fence containing only one polygon, 
    a fence containing exactly two polygons can belong to several different homotopy-classes w.r.t.
    the remaining polygons. 
    
    Now, we describe how to compute a minimum-link fence $F$ for two polygons $P_1$ and $P_2$ of $\mathcal P$.
    Recall that all polygons in $\mathcal P$ together have $n$ corners.
    Since we have only $O(n)$ triangles in $\mathcal T$
    we can iterate over all possible $O({n \choose k})\in O(n^k)$ ordered tuples of $k$ triangles.
    Fix in the following such an ordered $k$-tuple $(T_1,\ldots,T_k)$ of triangles in $\mathcal T$.
    There are only $O(|\mathcal P|^2)$ many non-equivalent splitting segments 
    connecting points in triangles $T_i$ and $T_{i+1}$ by Lemma~\ref{lem:equivalence_classes}.
    Consequently, we can iterate over the $O((|\mathcal P|^2)^k) = O(n^{2k})$ 
    many different combinations of splitting segments
    between consecutive triangles.
    In case there are two consecutive triangles between which no possible splitting line exists we reject
    this tuple of triangles.
    Assume in the following that we fix for every pair $T_i$ and $T_{i+1}$ a splitting segment $l_i$.

    It remains to construct a plane loop $\alpha$ as input for the algorithm of Hershberger and Snoeyink~\cite{DBLP:journals/comgeo/HershbergerS94} or
    decide that no such loop exists for the fixed choices of triangles and splitting segments.
    From the triangles $T_1,\ldots,T_k$ and the splitting segments $l_1,\ldots,l_k$
    we derive a sequence of triangles $\tau_1,\ldots,\tau_z$ of $\mathcal T$ that $\alpha$ has to visit.
    Since the triangulation $\mathcal T$ is defined by the corners of polygons in $\mathcal P$
    each splitting segment $l_i$ gives rise to a unique sequence of triangles.
    We concatenate all these sequences starting with the sequence induced by $l_1$ 
    to obtain the sequence $\Theta = (\tau_1,\ldots,\tau_z)$.
    Observe, that triangles along this sequence may repeat and that $\tau_1 = \tau_z$.
    
    It remains to decide if there exists a plane loop $\alpha$ visiting each triangle of $\Theta$ in order.
    To make the following description simpler let $s_i$ be the shared boundary of $\tau_i$ and $\tau_{i+1}$.
    If for no $i$ with $i \in \{1,\ldots,z-1\}$, we find that $s_i = s_{i+1}$ we create a loop $\alpha$ by
    connecting the centerpoint of $s_i$ with the one of $s_{i+1}$.
    Since no boundary repeats, this is always possible without any centerpoint and 
    hence triangle being used twice.
    Finally, we add the segment from the centerpoint of $s_z$ to the centerpoint of $s_1$ which 
    is also always possible as $\tau_1 = \tau_z$.
    
    Now assume there exist at least two indices $i$ and $j$ with 
    $i \neq j$ and $i,j\in\{1,\ldots,z\}$ such that $s_i = s_j$.
    Build the loop as before and let $\alpha_1,\ldots,\alpha_z$ be the segments in the constructed loop.
    Since we allow repeated boundaries there exist subsequences 
    among the $\alpha_1,\ldots,\alpha_z$ that are repeated.
    In the following we assume that we only consider inclusion maximal repeated subsequences.
    Let $A = \{\alpha_1',\ldots,\alpha_a'\}$ be one occurrence of such a subsequence of the $\alpha_i$'s
    that is repeated at least once and let $\hat A$ be a different occurrence.
    Let $s_1',\ldots,s_a'$ be the subsequence of the $s_i$'s that correspond to the 
    triangle boundaries passed by the segments in $A$ and $\hat A = \{\hat{\alpha_1},\ldots,\hat{\alpha_a}\}$.
    Now observe that in a plane loop the vertices of $A$ and $\hat A$ have to always appear 
    in the same order along $s_1',\ldots,s_a'$.
    If they would not, let $s_i'$ and $s_{i+1}'$ for $i \in \{1,\ldots,a-1\}$ be two segments
    such that the vertices of $A$ and $\hat A$ on $s_i'$ and $s_{i+1}'$ are not in the same order.
    Without loss of generality assume the vertex of $A$ on $s_i'$ is above the one of $\hat A$ and %
    the opposite is true for $s_{i+1}'$, 
    then we find that $\alpha_i$ and $\hat{\alpha_i}$ cross.
    Hence, the only decision to make is to decide, for each pair of repeated subsequences, in which order 
    their vertices appear along the corresponding triangle-boundaries.
    Since every repeated subsequence implies an intersection 
    between two segments $l_i$ and $l_j$ with $i \neq j$ and $i,j\in \{1,\ldots,k\}$
    we find that there are at most $k^2$ such repeated sequences.
    Consequently, there are at most $2^{O(k)}$ possible ways to distribute
    the center points in each shared part along the boundaries.

    To sum up, for one pair of polygons $P_1,P_2$ we have to consider 
    $O(n^{2k})$ possible non-homotopy equivalent fences and
    for each homotopy we can check in $O(n^2 \cdot 2^k)$ 
    if there exists a plane loop $\alpha$ realizing it, leading to a total runtime of $O(2^k\cdot n^{2k+2})$
    to enumerate every potential homotopy of a minimum link fence.
    For each of the $O(n^{2k})$ different homotopies, we can use \cref{thm:klinkblackbox} to compute a minimum link fence in $O(kn)$, hence we can compute a minimum link fences for all pairs of polygons in $O(n^2\cdot kn^{2k+2}) = O(kn^{2k+4})$.

    If for any polygon no fence, alone or in a pair with another polygon, 
    with $k$ or fewer links is found, 
    we return that no solution exists.
    Otherwise, let $\lambda_{uv}$ be the number of links for a minimum link fence containing $P_u,P_v\in\mathcal P$ and
    $\lambda_u$ the number of links for a minimum link fence containing only $P_u \in\mathcal P$.
    Consider a complete graph $G$ containing one vertex $u$ for each polygon $P_u\in\mathcal P$ and
    one more vertex $x$ if $|\mathcal P|$ is odd.
    Set the edge-weights $w(u,v) = \min\{\lambda_{uv},\lambda_u + \lambda_v\}$ and
    $w(x,u) = \lambda_u$ for $P_u,P_v \in \mathcal P$.
    If for some $P_u \in \mathcal P$ or pair $P_u,P_v \in \mathcal P$ 
    no fence with $\leq k$ segments existed we remove that edge.

    To find a minimum-link fencing of $\mathcal P$ it now suffices to compute a minimum weight perfect matching in $G$.
    Let $M$ be such a matching.
    Then, a minimum link fencing $\mathcal{F}$ of $\mathcal{P}$ can be constructed from $M$ in the following way.
    If $uv \in M$, we add the (pre-computed) minimum link fences containing only $P_u$ and $P_v$ to $\mathcal{F}$ if the weight $w(u,v) = \lambda_u + \lambda_v$ or
    the fence containing $P_u$ and $P_v$ if $w(u,v) = \lambda_{uv}$.
    If $|\mathcal P|$ was odd we also find an edge $xu \in M$ and
    we add the fence containing only $P_u$ to the fencing.
    Finding a minimum weight perfect matching in a general graph with $V$ vertices and $E$ edges can be done for example in $O(V^2E)$ time 
    via finding a maximum weight perfect matching (e.g.~\cite{edmonds_1965}) in the same graph with edge weights set to
    maximum edge-weight plus one minus the original edge-weight.
    Since $G$ has $O(|\mathcal{P}|)$ vertices and $O(|\mathcal{P}|^2)$ edges we can compute this matching in $O(|\mathcal{P}|^4) = O(n^4)$, which is dominated by the initial computation of the minimum link fences.
\end{proof}
}

\section{An algorithm for two-colored \textsf{CMLF}}
\label{sec:algorithm}

In this section we present an algorithm for solving two-colored \textsf{CMLF}. 
Computing a minimum-link fence in this setting can be done by computing a fence for the convex hull of the contained polygons with the algorithm by Wang~\cite{wang_finding_1991}
which runs in time $O(n\log n)$ with $n$ being the number of corners of the contained polygons.
Throughout this section an instance of \textsf{CMLF} is given as $(\mathcal P, Q)$
where $Q$ is the outer polygon and $\mathcal P$ is the set of polygons contained in $Q$.

\begin{figure}[tb]
	\centering
	\includegraphics[page=2]{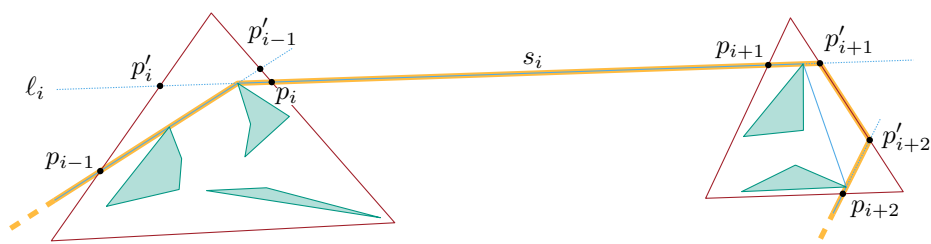}
	\caption{
	Computing a new fence (orange) from the old fences (purple) and the convex hull~(blue).}
	\label{fig:convex_hull_lma}
\end{figure}

\begin{lemma}\label{lmachull}
    Given an instance $(\mathcal{P},Q)$ of two-colored \textsf{CMLF}, let
    $\mathcal F$ be a solution for $(\mathcal P, Q)$.
    There exists a solution $\mathcal F'$ for the two-colored \textsf{CMLF} instance $(CH(\mathcal P),Q)$ with $|F| = |F'|$.
\end{lemma}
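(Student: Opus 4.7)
The plan is to construct a single fence $F'$ enclosing $CH(\mathcal P)$ by taking the outer boundary of the union of $CH(\mathcal P)$ with all the given fenced regions, as suggested by Figure~\ref{fig:convex_hull_lma}. Set $U := CH(\mathcal P) \cup \bigcup_{i} \overline{F_i}$, where $\overline{F_i}$ denotes the closed region bounded by $F_i$. First I would verify that $U \subseteq \mathbb R^2 \setminus Q$: every $\overline{F_i}$ avoids $Q$ because $\mathcal F$ is a valid fencing and $Q$ is the sole polygon of its color, and $CH(\mathcal P)$ avoids $Q$ by the defining \textsf{CMLF} property. Connectedness of $U$ follows because $CH(\mathcal P)$ is convex and contains every inner polygon, while each $\overline{F_i}$ encloses at least one such polygon, so every $\overline{F_i}$ intersects $CH(\mathcal P)$. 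Hence $U$ is a single (simply connected) polygon, and I take $F' := \partial U$ and $\mathcal F' := \{F'\}$. Validity of $\mathcal F'$ for $(CH(\mathcal P), Q)$ is then immediate: $F'$ is a simple closed polygon in $\mathbb R^2 \setminus Q$, and it encloses $CH(\mathcal P) \subseteq U$.

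The remaining task is the link bound $|F'| \le |F|$; any residual slack can then be absorbed by subdividing an edge to obtain $|F'| = |F|$. I plan to split the vertices of $F'$ into three groups: (i) corners of some $F_i$ lying outside $\mathrm{int}(CH(\mathcal P))$; (ii) corners of $CH(\mathcal P)$ lying outside $\mathrm{int}(\bigcup_i \overline{F_i})$; and (iii) transverse intersection points of $\partial F$ with $\partial CH(\mathcal P)$. Group (ii) is empty: every vertex of $CH(\mathcal P)$ is a vertex of some inner polygon, which by definition of a valid fencing lies strictly in the interior of its enclosing fence, hence in $\mathrm{int}(U)$ and not on $\partial U$. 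Group (i) is by definition bounded by the number of fence vertices lying outside $\mathrm{int}(CH(\mathcal P))$.

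The main obstacle will be to bound the type-(iii) intersections against the fence vertices that get \emph{absorbed} into $\mathrm{int}(CH(\mathcal P))$ and hence no longer contribute to group (i). I plan a charging argument that exploits the convexity of $CH(\mathcal P)$: each maximal sub-arc of $\partial F_i$ entering $CH(\mathcal P)$ must also exit (by convexity and the Jordan curve theorem), contributing two intersection points, but such an arc must also turn at enough interior corners for the fence to remain simple while enclosing its inner polygons inside $CH(\mathcal P)$. Matching the two intersection crossings with interior fence vertices by a local amortized count will give $|F'| \le |F|$, completing the proof.
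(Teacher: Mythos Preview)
Your construction $F' = \partial U$ does not achieve $|F'|\le |F|$ in general, and the sketched charging argument cannot be made to work. The problem is precisely in the step you flag as the main obstacle: a maximal sub-arc of some $F_i$ inside $CH(\mathcal P)$ can be a single straight chord of the convex hull, contributing two type-(iii) crossing vertices while containing \emph{zero} interior corners. Nothing about the fence ``remaining simple while enclosing its inner polygons'' forces a bend on such an arc, because enclosure is a global property of the whole fence, not of an individual inside arc. Concretely, take two tiny polygons $P_1,P_2$ far apart so that $CH(\mathcal P)$ is a thin horizontal strip, and fence each $P_j$ by a triangle $F_j$ whose three corners all lie outside the strip; one edge of each triangle then cuts across the strip. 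Each triangle contributes $3$ outside corners and $2$ crossings, so $|\partial U|=6+4=10$ while $|F|=6$. (Type~(ii) is indeed empty here, so the discrepancy comes entirely from type~(iii).)

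The paper's construction avoids this by \emph{not} following $\partial CH(\mathcal P)$ between fences. For each maximal uncovered arc $s_i$ of $\partial CH(\mathcal P)$ it takes the supporting line $\ell_i$ of $s_i$ and extends it outward until it meets $F$ again at points $p_i',p_{i+1}'$; the single segment $p_i'p_{i+1}'$ replaces whatever portion of $F$ lies on the far side of $\ell_i$. Since $\ell_i$ is tangent to $CH(\mathcal P)$, no inner polygon is lost, and for every such new segment at least one full segment of $F$ is discarded, giving $|F^\circ|\le|F|$. The key difference from your approach is that these shortcut segments can absorb several of your type-(iii) crossing points into a single link, which $\partial U$ cannot do.
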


\begin{proof}
As $F$ is a minimum-link fencing of $(\mathcal P, Q)$,
it suffices to consider the case 
\isaacrev{where a minimal link fencing of $(CH(\mathcal P),Q)$ has strictly more segments than $|F|$.}
We will construct a new fence $F^\circ$ from this instance.
Let $(p_1,\dots,p_z)$ be the intersection points between $F$ and $CH(\mathcal P)$ ordered as they appear in a clockwise traversal of the convex hull, and observe that $z$ is even.
Let $p_i$, $p_{i+1}$ be pairs of intersection points between $F$ and $CH(\mathcal P)$ such that the straight-line segment $s_i$ connecting
$p_i$ and $p_{i+1}$ lies on $CH(\mathcal P)$ and completely outside of $F$ (see \cref{fig:convex_hull_lma}).
Consider the supporting line $\ell_i$ of $s_i$.
If the fence lies completely in one of the closed half-planes bounded by $\ell_i$
we add $s_i$ to $F^\circ$.
Assume this is not the case.
As $s_i$ is on 
$CH(\mathcal P)$ we get that $\ell_i$ does not intersect any polygon in $\mathcal P$.
Moreover, as $\mathcal F$ consists of closed simple polygons we find two intersection points $p_i'$ and $p_{i+1}'$ that lie on $\ell_i$, s.t.,
the parts of $F$ appearing in a \isaacrev{clockwise traversal from $p_i'$ to $p_i$}, as well as the ones in a clockwise traversal from $p_{i+1}$ to $p_{i+1}'$ lie outside of $CH(\mathcal P)$.
We add the segment $s_i'$ between $p_i'$ and $p_{i+1}'$ to $F^\circ$.
Doing this for every pair of intersections we obtain a set of segments $F^\circ$, where all segments are on the convex-hull of $\mathcal P$.
Note that it is possible for these segments to intersect; if that is the case
we only keep the parts until their intersection point.
Finally, the start and end-points of connected chains of segments in $F^\circ$
lie on segments of \isaacrev{fences}
in $\mathcal F$.
We can convert $F^\circ$ into a fence of $CH(\mathcal P)$ by connecting these endpoints along the \isaacrev{fences }%
in $\mathcal F$ and that fence will be disjoint from $\mathcal P$ (except possibly touching $\mathcal P$ in corner points).

It remains to argue that indeed $|F^\circ| \leq |F|$.
We partition $F^\circ$ into two categories,
segments that coincide with segments in $F$ and segments that do not.
Each of them is either a full segment of $F$ or 
originates from the intersection of at most two different $s_i'$'s and a segment of $F$.
Furthermore, we add $z/2$ segments $s_i'$ that are not sub-segments of segments in $F$. 
For each such $s_i'$ we find at least one segment of $F$ for which we did not add any sub-segment to $F^\circ$.
These are the segments of $F$ on which $p_i$ and $p_{i+1}$ lie or
that are fully outside of $F^\circ$.
\end{proof}

\begin{theorem}
    \label{thchull}
    \isaacrev{Two-colored \textsf{CMLF}} can be solved in time $O(n\log n)$ where $n$ is the number of corners of polygons in $\mathcal P$.
\end{theorem}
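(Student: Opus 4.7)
The plan is to reduce two-colored \textsf{CMLF} to a single instance of the classical minimum-link polygon nesting problem and then invoke a known near-linear time algorithm for the latter. Lemma~\ref{lmachull} already does the heavy lifting: it guarantees that there is an optimal fencing of $(\mathcal{P},Q)$ that uses the same number of links as some fencing of $(CH(\mathcal{P}),Q)$. Thus it is enough to compute a minimum-link fence separating the single convex region $CH(\mathcal{P})$ from the outer polygon $Q$, and a single closed fence clearly suffices because the inner side consists of exactly one connected component of one color.

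First, I would compute $CH(\mathcal{P})$. Since $\mathcal{P}$ is a set of pairwise interior-disjoint polygons with $n$ corners in total, the convex hull can be computed in $O(n \log n)$ time by any standard convex hull algorithm applied to the $n$ input vertices. By the defining condition of \textsf{CMLF}, $CH(\mathcal{P}) \subset \mathbb{R}^2 \setminus Q$, so $CH(\mathcal{P})$ lies strictly inside the outer polygon $Q$, and the two form a valid nested pair.

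Next, I would invoke the algorithm of Wang~\cite{wang_finding_1991} (as cited in the excerpt) on the nested pair consisting of the inner convex polygon $CH(\mathcal{P})$ and the outer polygon $Q$. This algorithm computes a minimum-link simple closed polygon $F'$ with $CH(\mathcal{P}) \subseteq \overline{F'} \subseteq \mathbb{R}^2 \setminus (\mathbb{R}^2 \setminus Q)$ in $O(n \log n)$ time, where $n$ is the total complexity of the two input polygons. Because $|CH(\mathcal{P})| \leq n$ and $|Q| \leq n$, the overall running time stays in $O(n \log n)$.

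Finally, I would argue correctness in one sentence: by Lemma~\ref{lmachull} the optimal number of links for $(\mathcal{P},Q)$ equals the optimal number of links for $(CH(\mathcal{P}),Q)$, and the fence $F'$ returned by Wang's algorithm achieves the latter; moreover $F'$ is a valid fencing of $\mathcal{P}$ because it encloses $CH(\mathcal{P})$ and therefore all inner polygons, while being disjoint from $Q$. The main obstacle is really only one of book-keeping, namely making sure that the output of Wang's algorithm on $(CH(\mathcal{P}),Q)$ is indeed interpretable as an \textsf{MLF} solution for the original instance; this is immediate since all polygons of $\mathcal{P}$ lie inside $\overline{F'}$ and none of them share a color with $Q$, so the color constraint of Definition~\ref{def:multicut_adapt} is satisfied.
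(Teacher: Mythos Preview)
Your proposal is correct and follows essentially the same approach as the paper: the paper states the theorem without an explicit proof, but the opening paragraph of Section~\ref{sec:algorithm} together with Lemma~\ref{lmachull} already outlines precisely the argument you give---reduce to $(CH(\mathcal P),Q)$ via Lemma~\ref{lmachull}, then apply Wang's $O(n\log n)$ nesting algorithm. The only minor point worth noting is that Lemma~\ref{lmachull} gives only one direction (opt for $(CH(\mathcal P),Q)$ is at most opt for $(\mathcal P,Q)$); you correctly supply the trivial reverse direction yourself.
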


\section{Conclusion}
We have shown \textsf{BMLF} to be \NP-hard even if 
every fence contains at most three polygons,
each fence has at most five links, 
and only two different colors of polygons are present.
Our reduction holds regardless of requiring disjoint fences or not.
Note, that our reduction can be adapted to not require the outer bounding polygon $Q$.
Instead, we can replace $Q$ by one polygon with a narrow and very complex channel,
connecting the ``inside'' with the ``outside''.
On the algorithmic side,
we gave an \XP-algorithm for \textsf{BMLF} 
parameterized by the maximum number of links in a fence and
allowing at most two polygons per fence.
We also showed that \isaacrev{two-colored} \textsf{CMLF} can be solved in polynomial time.

It is open if one can eliminate the exponential dependency on
the number of links in our algorithm for \textsf{BMLF}.
Furthermore, while our reduction holds when replacing the outer bounding polygon,
our algorithm does not since we cannot immediately apply \cref{thm:klinkblackbox}.
Similarly, requiring the fences to be disjoint for \textsf{BMLF} is an interesting open
direction.

\bibliography{fencing_refs}

\clearpage

\ifArxiv\else \end{document}\fi

\appendix

\section{Omitted proofs of Section~\ref{sec:nphard}}\label{apx:nphard}
\lowerbound*
\begin{proof}
    By Observation~\ref{obs:kbends} the fence $F$ has at least $t-1$ straight-line segments 
    Assume $F$ has also exactly $t-1$ segments.
    Let $\mathcal S_1,\ldots,\mathcal S_t$ be the non-overlapping non-collinear triples and
    $S_1,\ldots,S_{3t}$ be the sets of the non-collinear triple in the order in which $F$ crosses them as above.
    Since $F$ is a simple polygon there exists a polygonal chain $C$ starting at some point in $S_{3t}$ and
    ending at some point in $S_1$.
    By Observation~\ref{obs:kbends} there are $t-1$ straight-line segments 
    which are completely contained in $]S_1,S_{3t}[$ which implies
    that $C$ must consist of segments that are also completely contained in $]S_1,S_{3t}[$.

    Now, we charge every of the $t-1$ segments to the piece of $F$ that 
    connects for each $\mathcal S_i$ with $i = 1,\ldots,t-1$
    some point in $S_{i+1}$ with some point in $S_{i+2}$.
    Since the triples are non-overlapping each such connection also requires a distinct segment.
    Then, for the connection of $S_{3t-1}$ and $S_{3t}$ we require at least one more segment.
    Let $s$ be the segment charged to the connection of $S_{3t-4}$ to $S_{3t-3}$.
    If we could extend this segment to also intersect $S_{3t}$ we would violate
    that $\mathcal S_{t-1}$ and $\mathcal S_t$ are non-overlapping.
    Symmetrically for the segment charged to $S_2$ and $S_3$.
    Consequently, we require at least one more segment.
\end{proof}

\subsection{Omitted proofs of Section~\ref{sec:variable}}\label{apx:nphard:variable}
\vargadget*
\begin{proof}
    To prove this lemma, we will first number the eight inner polygons of the variable gadget $A_1, \ldots, A_8$.
    Any fence inside the variable gadget can include any combination of $k$ inner polygons.
    We will prove this lemma, by enumerating for every $k \in \{1, 8\}$ all possible (non-symmetric) combinations of including these inner polygons; Note that technically a fence is allowed to include non-consecutive inner polygons.
    Note that any fence has to consist of at least 3 segments and clearly every inner polygon of $\mathcal{G}(v)$ can be fenced alone with 3 segments.
    For any possible fence including multiple inner polygons, we show the existence of a certain number of non-overlapping non-collinear triples, which provide a lower bound on the number of segments for such a fence.
    All triples and the resulting lower bounds on the number of segments in fencings including all possible combinations of 2 to 8 inner polygons are shown in Appendix~\ref{sec:app:variable_gadget}.
    Enumeration shows that a fence $F$ including $k$ inner polygons, has the minimum amount of segments, if and only if the indices of the contained inner polygons are consecutive (assuming $A_8$ and $A_1$ to be consecutive).
    In particular, such a fence including 2, 3, 4, 5, 6, 7 or 8 inner polygons, requires (and can be realized with) 3, 6, 8, 11, 13, 16 and 18 segments, respectively.

    First observe that there are exactly two minimum link fencings, which include exactly two neighboring inner polygons in one fence, both consisting of 12 segments total, both are shown in Figures~\ref{fig:variable_gadget_true} and~\ref{fig:variable_gadget_false}.
    We can therefore exclude any fence including six or more inner polygons, since they clearly require more segments by themselves, which immediately eliminates the possibility of including 6 or more inner polygons in one fence.
    Any fence including five inner polygons requires at least 11 segments and at least one additional fence is needed, which increases the segment count to at least 14.
    Any fence including four inner polygons requires at least 8 segments. If the remaining four segments are fenced together, we require at least 16 segments. If they include a group of three polygons, we require at least 14 segments. If two remaining polygons are grouped, we require 11 segments, but at least one more fence (14 segments in total) are needed and if all remaining polygons are fenced alone, we need 20 segments in total.

    Next if three segments are fenced together, we require 6 segments and have five inner polygons left. We already know that no fence including four or more segments can be part of a minimum link fencing.
    If three of the remaining five polygons are grouped we arrive at a total of at least 12 segments, with at least one more fence needed.
    If at most two polygons are grouped, we need at least three more fences and arrive at a total of at least 15 segments.
    
    Finally, polygons could be fenced individually.
    Clearly there must be an even number of individually fenced polygons.
    If all eight polygons are fenced alone, we reach at least 24 segments, for six inner polygons fenced alone we get 18 segments, four individually fenced polygons lead to 12 with at least one more fence needed.
    Two inner polygons being fenced alone require at least 6 segments, with at least three more fences needed for the six left over polygons, which require at least 9 segments leading to a total of 15 segments.
    
    Therefore only two minimum link fencings exist, which require exactly 12 segments, and they group two neighboring inner polygons pairwise.
    We call the fencing, which groups inner polygons, which are both contained in a gray triangle with a true spike $\mathcal{F}_t$ and the other $\mathcal{F}_f$.
\end{proof}

\subsection{Omitted proofs of Section~\ref{sec:clause}}\label{apx:nphard:clause}

\paragraph*{Omitted lemmas and proofs for one isolated wire}
\begin{figure}
    \centering
    \includegraphics[page=20]{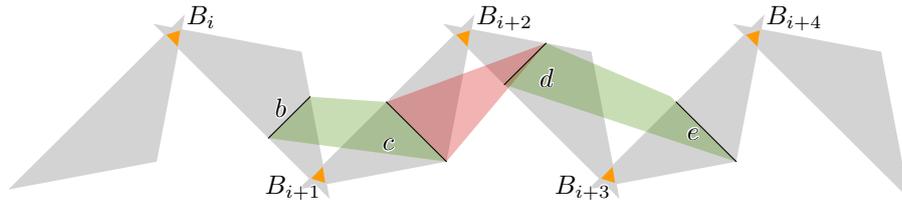}
    \caption{The highlighted areas show that $b$ and $c$, $c$ and $d$, and $d$ and $e$ form non-collinear triples.}
    \label{fig:no_triple_bypass}
\end{figure}
\begin{restatable}[$\star$]{lemma}{consecutivebipass}\label{lem:consecutive_bypass}
    No fence in a minimum link fencing $\mathcal F$ of an isolated wire $W$ of $\mathcal G(c)$
    bypasses two or more consecutive inner polygons of $W$.
\end{restatable}

\begin{proof}
    Let $B_1,\ldots,B_k$ be the inner polygons of an isolated wire $W$,
    $T_1,\ldots,T_k$ the corresponding gray triangles, and
    $F \in \mathcal F$ a fence that bypasses two or more consecutive inner polygons of $W$.
    Also, throughout the proof, let $B_i$ and $B_j$ with $i < j$ be two inner polygons 
    that are contained in $F$ such that $B_{i+1},\ldots,B_{j-1}$ are bypassed by $F$.

    Assume that $i + 1 < j - 1$, i.e., there are at least three consecutive inner polygons bypassed by $F$.
    Let $B_{i+1}, B_{i+2}$ and $B_{i+3}$ be three such consecutive polygons.
    Now, we find one sequence of at least three non-overlapping non-collinear triples by construction.
    Let $b$, $c$, $d$, and $e$ be four sets such that $b$ and $c$, $c$ and $d$, and $d$ and $e$ form such triples.
    Compare Figure~\ref{fig:no_triple_bypass} for an illustration.

    By Observation~\ref{obs:kbends} $F$ contains at least two complete segments in $]b,e[$.
    Moreover, $F$ has to cross the same triples in reverse order since it is a simple polygons and 
    contains both $B_i$ and $B_j$ for $j > i+3$.
    Consequently, we find another two complete segments in $]e,b[$.
    We split $F$ at $b$ and $e$.
    This creates three fences $F_1$, $F_2$, and $F_3$ for which it holds that
    $|F_1| + |F_2| + |F_3| = |F| + 4$
    One fence does not contain any inner polygons and can be deleted,
    let $F_2$ be this fence.
    Sine $F_2$ was created by splitting along $b$ and $e$ it contained at least four complete segments plus
    the two segments introduced in the splitting operation.
    Hence, with $|F| + 4 = |F_1| + |F_2| + |F_3| \geq |F_1| + |F_3| + 6$ it follows that
    $|F| \geq |F_1| + |F_3| + 2 > |F_1| + |F_3|$.
    Since $F$ was part of a minimum link fencing we may conclude that $F$
    at most bypasses two consecutive inner polygons.

    \begin{figure}
        \centering
        \includegraphics[page=7]{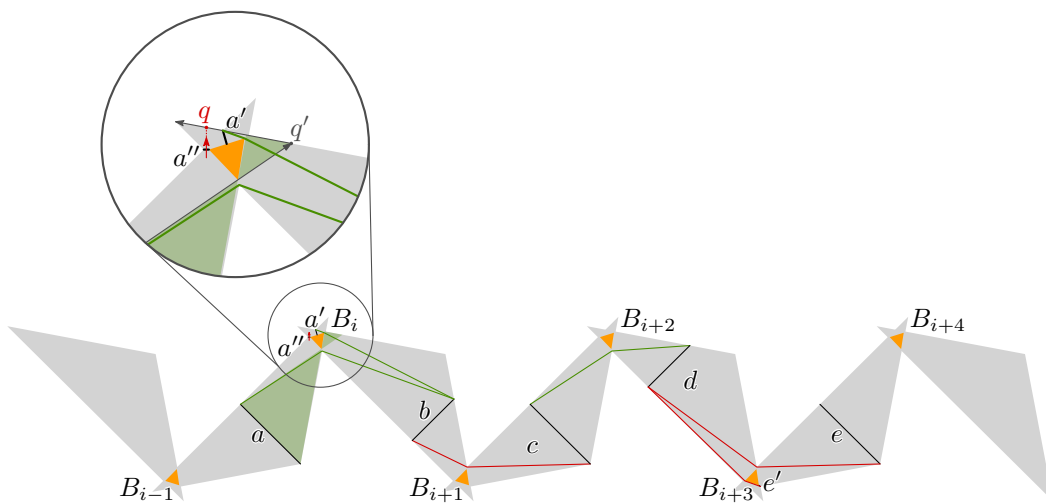}
        \caption{We can locally shortcut any fence $F$ passing $a$ and $b$ in both directions by connecting the two intersection points of $F$ with $a$ ($b$)}
        \label{fig:no_double_bypass}
    \end{figure}
    
    In the following let $B_{i+1}$ and $B_{i+2}$ be the two consecutive polygons bypassed by $F$.
    Consequently, we know that $F$ contains $B_i$ and $B_{i+3}$.
    Figure~\ref{fig:no_double_bypass} illustrates the following cases.
    
    Assume $F$ does not contain any other inner polygons, i.e., $F$ contains only two inner polygons.
    In this case we find a sequence of four non-overlapping non-collinear triples 
    $a'$ and $b$, $b$ and $c$, $c$ and $d$, $d$ and $e'$ which 
    have to be crossed by $F$.
    Moreover, $F$ has to cross each such triple in the reverse direction.
    Consequently, $F$ has crossed eight non-overlapping non-collinear triples and 
    consists by Lemma~\ref{lem:lowerbound} of at least eight segments.
    Replacing $F$ by two fences, both consisting only of triangles, 
    creates a fencing with less segments than $\mathcal F$.
    It remains to argue the case that $F$ contains more than two inner polygons.
    
    Let $F$ be such that there does not exist an inner polygon $B_z$ contained in $F$ with $z > i+3$.
    Then there exists an inner polygon $B_m \in F$ and $m<i$.
    Observe that there is a sequence of four non-overlapping non-collinear triples 
    $a'$ and $b$, $b$ and $c$, $c$ and $d$, $d$ and $e'$, which is as in the case before and
    a sequence of four non-overlapping non-collinear triples
    $e'$ and $d$, $d$ and $c$, $c$ and $b$, and $b$ and $a$ 
    which is almost as before, with the difference that we use a set $a$ which lies in $T_{i-1}$ and
    can be chosen as a segment which $F$ intersects twice.
    By Observation~\ref{obs:kbends} we find that $]a',a[$ 
    completely contains at least seven segments of $F$.
    
    We split $F$ at $a$, let $F_m$ and $F_i$ be the resulting fences such that
    $F_m$ contains $B_m$ and $F_i$ contains $B_i$ and $B_{i+3}$.
    It holds that $|F_m| + |F_i| = |F| + 2$ and $|F_i| \geq 8$ since $F_i$ contains only $B_i$ and $B_{i+3}$ and
    bypasses $B_{i+1}$ and $B_{i+2}$.
    Hence, $|F| \geq |F_m| + 6$ since $|F| + 2 = |F_m| + |F_i| \geq |F_m| + 8$.
    Consequently, replacing $F_i$ by two triangles only yields a fencing with 
    the same number of fences as $\mathcal F$.
    In the following assume that we delete $F_i$ and 
    introduce one triangular fence $F_{\Delta}$ that includes only $B_{i+3}$.
    The goal is then to include $B_i$ into $F_m$ using at most three segments.
    
    Consider $F_m$ and remove the segment introduced when splitting $F$ at $a$.
    Let $p$ and $p'$ be the two intersection points of $F$ with $a$ and
    assume $p$ is closer to the base side of $T_{i-1}$.
    W.l.o.g. we assume that neither $p$ nor $p'$ are vertices of $F$.
    Let $c_i$ be the corner of $T_{i-1}$ that would be disconnected from the component containing $a$
    when removing $T_i$ from the plane.
    If there exists a point $c$ in the component containing $c_i$ such that 
    the straight-line segments $pc$ and $p'c$ both do not intersect $B_i$,
    are completely contained in $T_{i-1}$, and
    their supporting lines leave $B_i$ in different half-planes,
    we just add these two segments to $F_m$ which now contains also $B_i$.
    
    Now assume such a point does not exist.
    By construction there exists a straight-line segment starting at $p$, $p'$ respectively,
    ends at the boundary of $T_i$, and
    also intersects the boundary of $T_i$.
    Let $s$ be such a segment for $p$ and $s'$ one for $p'$ and 
    let $q$ and $q'$ be their endpoints on the boundary of $T_i$.
    Sine $p$ is closer to the base of $T_{i-1}$ we may assume that 
    the supporting line of $s$ leaves $B_i$ to the right and 
    the one of $s'$ leaves $B_i$ to the left, and
    $s$ and $s'$ do not intersect.
    Connect the two endpoints of $s$ and $s'$ with one segment along the boundary of $T_i$.
    This uses at most three segments as required.
    
    In preparation for the next case we are going to remove one additional segment.
    Consider a set $a''$ that is intersected by $s$, such a set is indicated in Figure~\ref{fig:no_double_bypass}.
    Since $F$ contained $B_i$ it had to contain at least one point of $a''$ as well.
    Moreover, since $F$ was a minimum link fence containing $B_i$%
    \footnote{We are only aiming to contradict minimality with respect to inclusion of polygons in the fence,
    for a given set of polygons we may still assume that the initial fence was as short as possible.} 
    we may assume that there is a vertex $v$ of $F$ that is now a vertex of $F_m$ and
    there is a straight-line segment starting at $v$,
    intersecting $a''$,
    intersecting the boundary of $T_i$ twice, and 
    it lies in $T_{i-1}$ and does not intersect $B_i$.
    Let $q''$ be the endpoint of this segment.
    Replace the segments $vp$, $pq$, and $qq'$ by $vq''$ and $qq''$.
    That is only two additional segments.
    
    Finally, assume that there exist inner polygons $B_m,B_l\in F$ such that
    $m < i$ and $l > i + 3$.
    Split $F$ as above at $a$ and also at some set $e$ in $T_{i+3}$.
    This leaves three fences $F_m$, $F_l$, and $F_i$.
    Similarly to above $|F| + 4 = |F_m| + |F_l| + |F_i| \geq |F_m| + |F_l| + 8$ from which we derive
    $|F| \geq |F_m| + |F_l| + 4$.
    By the above argumentation we can hence replace $F$ by two fences $F_m$ and $F_l$ such that
    the new fencing has the same number of segments.
    But now, observe that there is a sequence of non-overlapping non-collinear triples
    which $F_i$ all has to cross, namely
    $a'$ and $b$, $b$ and $c$, $c$ and $d$, $d$ and $e$, $e'$ and $d$, $d$ and $c$, $c$ and $b$, and $b$ and $a$.
    By Observation~\ref{obs:kbends} intervals $]a',e[$ and $]e', a[$ completely contain three segments each.
    Moreover, these sets can be chosen in such a way that the segments introduced when splitting $F$ 
    do only contain points of $a$ and $e$.
    See also Figure~\ref{fig:no_double_bypass} for an illustration.
    Hence, $|F_i| \geq 10$ and consequently $|F| \geq |F_m| + |F_l| + 6$.
  \end{proof}  
    
We now know that no fence of a minimum link fencing of an isolated wire bypasses two or more consecutive inner polygons of that wire.
However, it might still bypass an unbounded number of polygons in total.
Since there is a sequence of four non-overlapping non-collinear triples for a fence that 
includes $B_i$ and $B_{i+2}$ we obtain the following observation for a fence bypassing at least one inner polygon.

\begin{observation}\label{obs:onetriples}
    Let $\mathcal F$ be a minimum link fencing of an isolated wire $W$ of $\mathcal G(c)$,
    let $B_1,\ldots,B_k$ be the inner polygons of $W$, and
    $F\in\mathcal F$ a fence that bypasses $z > 0$ inner polygons,
    then $F$ contains at least $4z + 2$ segments.
\end{observation}

While it is not possible anymore by Lemma~\ref{lem:consecutive_bypass} to bypass multiple consecutive polygons
there could still be multiple fences in a minimum link fencing that interleave and
bypass many individual inner polygons.

\begin{restatable}[$\star$]{lemma}{nointerleave}\label{lem:no_interleave}
    A minimum link fencing $\mathcal F$ of an isolated wire $W$ of $\mathcal G(c)$
    does not contain two distinct fences that interleave.
\end{restatable}
\begin{proof}
    Let $B_1,\ldots,B_k$ be the inner polygons of $W$ and 
    assume that there are fences $F,F'\in \mathcal F$ of $W$ that interleave.
    By Lemma~\ref{lem:consecutive_bypass} we know neither $F$ nor $F'$ 
    can bypass two or more consecutive inner polygons.
    Consequently, there exist inner polygons $B_i$, $B_{i+1}$, $B_{i+2}$, and $B_{i+3}$ such that 
    w.l.o.g.~$B_i, B_{i+2} \in F$ and $B_{i+1}, B_{i+3} \in F'$.
    
    First, assume that $F$ and $F'$ only contain these four polygons.
    By Observation~\ref{obs:onetriples} each fence bypassing one polygon contains at least six segments.
    Hence, $|F| + |F'| \geq 12$. 
    By replacing $F$ and $F'$ by two new fences, 
    consisting of a triangle each, one containing $B_i$ and $B_{i+1}$ and 
    the other containing $B_{i+2}$ and $B_{i+3}$, we create a new fencing with six segments less.
    A contradiction to $\mathcal F$ being a minimum link fencing.
    
    Now consider the case that $F$ and $F'$ might contain more inner polygons.
    First assume all further inner polygons of $F$ are before $B_i$ and 
    all further inner polygons of $F'$ are after $B_{i+3}$.
    Observe that then $|F| + |F'| \geq 14$ since we need to add at least one segment to each fence.
    We split the fences $F$ and $F'$ in the gray triangles $T_{i-1}$, $T_{i+3}$ respectively.
    We first consider just $F$.
    After splitting we now have two fences $F_{<i}$ and $F_i$.
    Let $F_{<i}$ be the fence containing all inner polygons before $B_i$ and
    $F_i$ the fence containing $B_i$ and $B_{i+2}$.
    By Observation~\ref{obs:onetriples} we have $|F_i| \geq 6$.
    Consequently $|F| = |F_{<i}| + |F_i| - 2 \geq |F_{<i}| + 6 - 2 = |F_{<i}| + 4$ and hence
    $|F_{<i}| \leq |F| - 4$.
    Doing the same for $F'$ we obtain $F_{<i}'$ and $F_i'$.
    As before we can replace $F_i$ and $F_i'$ by two triangles $F_\Delta$ and $F_\Delta'$.
    In total the new fencing created like this has 
    $|F_{<i}| + |F_\Delta| + |F_{<i}'| + |F_\Delta'| \leq |F| + |F'| - 8 + 6 = |F| + |F'| - 2$
    segments.
    
    Finally, $F$ and $F'$ could contain polygons both before $B_i$ and after $B_{i+3}$.
    By Lemma~\ref{lem:consecutive_bypass} we know that then the fences repeatedly interleave until
    at some inner polygon one fence stops.
    Let $m < l$ and w.l.o.g. 
    let $B_m$ be the last polygon included by $F$ and $B_l$ the first polygon included by $F'$ and
    assume that between $B_l$ and $B_m$ the two fences interleave, i.e.,
    $F$ includes $B_m$, $B_{m-2}$, etc. and $F'$ contains 
    $B_{m-1}$, $B_{m-3}$
    , etc..
    Moreover, $F$ contains $B_{l-1}$ and $F'$ contains $B_{m+1}$
    We can split $F$ and $F'$ as before in $T_{l-2}$ and $T_{m+2}$ using just four segments.
    Let $F_{<m}$ and $F_m$ be the fences created from splitting $F$ such that
    $F_m$ contains $B_m,\ldots,B_{l-1}$ and $F_{<m}$ the remaining polygons contained in $F$.
    Analogously for $F'$ and $F_{<l}$ and $F_l$.
    Let $M$ be the number of bypassed polygons for $F$ and 
    $L$ the number of bypassed polygons for $F'$.
    We get that $|F| + |F'| = |F_{<m}| + |F_m| + |F_{<l}| + |F_l| - 4$.
    We also know from Observation~\ref{obs:onetriples} that $|F_m| \geq 4M + 2$ and $|F_l| \geq 4L + 2$.
    Hence, $|F| + |F'| \geq |F_{<m}| + |F_{<l}| + 4M + 4L$ and hence $\soerenchange{|}F_{<m}\soerenchange{|} + \soerenchange{|}F_{<l}\soerenchange{|} \leq |F| + |F'| - 4M - 4L$.
    Finally, we can replace $F_m$ and $F_l$ by a series of triangular fences that in sum have $3 \cdot (M+L+2)/2$ which
    results in 
    \begin{align*}
        |F_{<m}| + |F_{<l}| + 3/2(M+L+2) - 4 
        &\leq |F| + |F'| - 4M - 4L + 3/2(M+L+2) - 4\\
        &=    |F| + |F'| - (5/2(M+L) + 1).
    \end{align*}
    This concludes the proof as $\mathcal F$ cannot have been a minimum link fencing for $W$.
 \end{proof}
 
 \begin{figure}
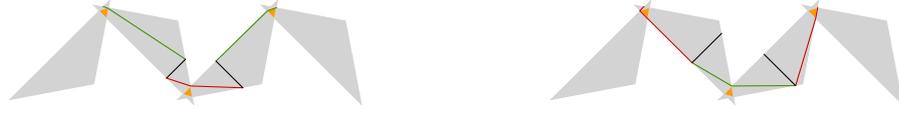

    \centering
    \begin{subfigure}[b]{.45\linewidth}
        \centering
        \includegraphics[page=8]{figures/no_bypass.pdf}
    \end{subfigure}
    \qquad
    \begin{subfigure}[b]{.45\linewidth}
        \centering
        \includegraphics[page=9]{figures/no_bypass.pdf}
    \end{subfigure}
    \caption{All six non-collinear triples of a fence bypassing exactly one inner polygon.}
    \label{fig:no_interleave}
\end{figure}
 
\consecutive*
\begin{proof}
    Let $B_1,\ldots,B_k$ be the inner polygons of $W.$
    Assume that $F$ bypasses an inner polygon $B_j$ for $2\leq j\leq k-1$.
    By Lemma~\ref{lem:consecutive_bypass} we know that $F$ never bypasses more than 
    one consecutive inner polygon at a time.
    Hence, we know that $B_{j-1}$ and $B_{j+1}$ are both contained in $F$.
    We distinguish if $F$ bypasses $B_j$ above or below (both cases are shown in Figure~\ref{fig:no_bypass}).
    
    Assume $F$ bypasses $B_j$ above as shown in Figure~\ref{fig:no_bypass}~(a).
    Then we construct a new fence including also $B_j$ as follows.
    Let $s_1$ and $s_2$ be two segments such that
    $s_1$ lies inside the gray triangle $t_{i-1}$ of $W$ that contains $B_{j-1}$ and $B_j$ and
    $s_2$ such that it lies inside the gray triangle $t_{i}$ of $W$ containing $B_j$ and $B_{j+1}$.
    More specifically, we choose $s_1$ such that its supporting line leaves $B_{j-1}$ and $B_j$ in one and 
    $B_{j-2}$ in the other half-plane.
    Similarly, we chose $s_2$ such that its supporting line leaves $B_j$ and $B_{j+1}$ in one and
    $B_{j+2}$ in the other half-plane.
    The segments $s_1$ and $s_2$ can then be extended such that they meet in a point that is inside
    the overlap of $t_{i-1}$ and $t_{i}$ and below $B_j$.
    Moreover, they can be extended such that they intersect any fence that contains
    $B_{j-1}$ and $B_{j+1}$ at least twice below $B_{j-1}$ and $B_{j+1}$ respectively.
    See Figure~\ref{fig:no_bypass}(a) for an illustration.
    
    Symmetrically we find two segments $s_1'$ and $s_2'$ whose intersection point is above $B_j$ and
    whose supporting lines leave $B_{j-1}$ and $B_j$ in the same half-plane and $B_{j+1}$ in the other,
    $B_{j+1}$ and $B_j$ in the same and $B_{j-1}$ in the other respectively.
    Again, these segments can be extended to intersect $F$ twice, this time above $B_{j-1}$ and $B_{j+1}$.
    See Figure~\ref{fig:no_bypass}(b) for an illustration.
    
    As $F$ contains at least four segments inside $t_{i-1}$ and $t_i$ we can replace those by
    $s_1$, $s_2$, $s_1'$, and $s_2'$ which yields a fence with at most equal number of links.
    Now, deleting the single fence that fenced $B_j$ removes at least three links, a contradiction to
    $\mathcal F$ being a minimum link fencing.

    \begin{figure}[tbp]
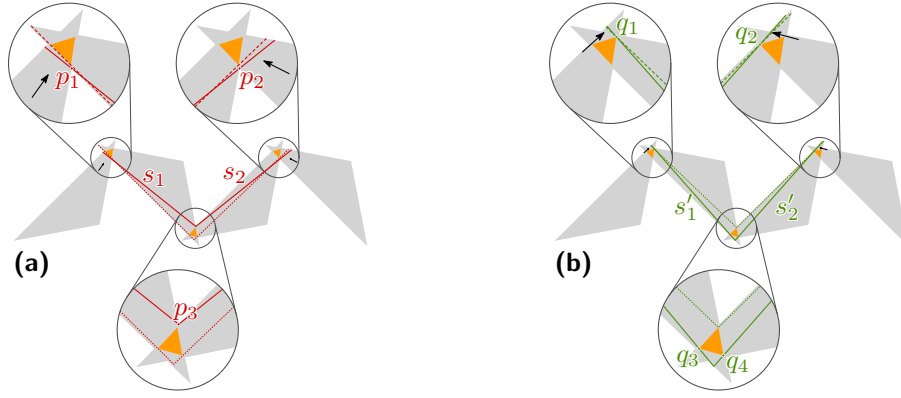

        \centering
        \begin{subfigure}[b]{.45\linewidth}
            \centering
            \includegraphics[page=2]{figures/no_bypass.pdf}
        \end{subfigure}
        \qquad
        \begin{subfigure}[b]{.45\linewidth}
            \centering
            \includegraphics[page=1]{figures/no_bypass.pdf}
        \end{subfigure}
        \caption{Rerouting of segments to include a bypassed inner polygon.}
        \label{fig:no_bypass}
    \end{figure}
    
    We can apply this procedure to all bypassed polygons.
    Since no fence can be interleaving with $F$, we can simply remove all fences, which included any polygon $B_i$ completely and still obtain a valid fencing, contradicting
    that $\mathcal F$ is a minimum link fencing.
\end{proof}

\onlytwofences*
\begin{proof}
    Let $B_1,\ldots,B_k$ be the inner polygons of $W$.
    By Lemma~\ref{lem:sizebound} every fence in $\mathcal F$ contains 
    either one, two, or three consecutive inner polygons.
    Let $f_i$ be the number of fences in $\mathcal F$ containing $i$ inner polygons.
    Then the number of segments of the fencing $\mathcal F$ can be computed as 
    $|\mathcal F| \geq 3f_1 + 3f_2 + 6f_3$ since for one or two inner polygons we always can use one triangle 
    and for three inner polygons we have six segments by Lemma~\ref{lem:conspolys}.
    Now, let $k_i$ be the number of segments
    in a fence containing $i$ inner polygons.
    From this we get that $f_i = k_i/i$.
    Substituting in the previous calculation we get $|\mathcal F| \geq 3k_1 + 3/2k_2 + 2k_3$.
    Hence, maximizing the number of inner polygons in fences containing also two polygons 
    minimizes the number of segments in the fencing.
    Since $k$ is always even the minimum is attained at $3/2k$, i.e., when all inner polygons are part of a fence 
    containing only two inner polygons as claimed.
\end{proof}

\paragraph*{Omitted proofs for integrating the clause triangle}    

\nobypassinginclause*
\begin{proof}
    Assume there exists a fence $F$ that bypasses some polygons of $\mathcal G(c)$.
    If $F$ only bypasses and contains polygons of one isolated wire $W$ and
    is contained in the gray triangles of $W$ we can apply Lemma~\ref{lem:consecutive}
    contradicting the existence of $F$.
    
    Next, assume that $F$ lies not only in the gray triangles of $W$ but still 
    only contains and bypasses inner polygons of $W$.
    This implies that $F$ contains $B_1$ of $W$ as else we could by construction find a fence that
    is restricted to only the gray triangles of $W$.
    Let $T_c$ be the triangle intersecting $T_1$ of $W$ and containing $B_c$.
    If $F$ was not restricted to the $T = T_c \cup T_1 \cup \ldots \cup T_k$ we could
    replace $F$ by a congruent minimum link fence that is only contained $T$ by
    either splitting $F$ if a whole segment is outside of $T$ or
    moving the one corner that lies outside of $T$ which is always possible by construction.
    After that we can again apply Lemma~\ref{lem:consecutive} 
    by considering $T_c$ and $B_c$ part of the isolated wire.
    
    This means that $F$ has to include either $B_c$ or two inner polygons of different wires if
    $F$ is to bypass any polygon of $\mathcal G(c)$.
    Now assume $F$ was bypassing any inner polygon $B_i$ of some wire $W$ and $i > 1$.
    Let $F_T$ be $F$'s restriction to $T = T_c \cup T_1 \cup \ldots \cup T_k$.
    Add one temporary segment to $F_T$ to close the fence, this can always be done in $T_c$.
    Now we can apply Lemma~\ref{lem:consecutive} to $F_T$.
    Each proof of one of the lemmas leading to Lemma~\ref{lem:consecutive}
    implies a procedure of how to split $F_T$.
    This results in two or more fences.
    Among those exists one that contains $B_j$ with smallest $j$.
    Let $F_T^j$ be this fence. 
    Either $F_T^j$ is such that it still contains the segment added initially to $F_T$
    in which case we just remove this segment and obtained a fencing with less segments or
    $F_T^j$ does not contain this segment anymore.
    Recall, that every procedure we implicitly defined saves at least two segments.
    Hence, adding the segment we had added to $F_T$ to the remainder of $F - F_T$ 
    still creates a fencing with one segment less.
    
    Consequently, we may assume that $F$ bypasses any combination of $B_1^1$, $B_1^2$, $B_1^3$, or $B_c$.
    Without loss of generality we assume a fence $F$ bypasses 
    the first inner polygon $B_1^1$ of wire $W_1$.
    We can again show that the amount of non-collinear triples
    between the lines $a$ and $b$ is four, inducing at least four bends in this part of $F$.

    \begin{figure}
        \centering
        \includegraphics[page=14]{figures/no_bypass.pdf}
        \caption{The twelve sets.}
        \label{fig:twelvesets}
    \end{figure}
    
    We define twelve sets $a, a^\star, a', a'', b, b^\star, b', b'', c, c^\star, c'$ and $c''$, as shown in Figure~\ref{fig:twelvesets}.%
    Any fence crossing the following pairs of sets contains a non-collinear 
    triple starting at one set and ending at the other:
    \begin{itemize}
        \item $a$ and $b$
        \item $a$ and $c$
        \item $b$ and $c$
        \item $a$ and $a^\star$
        \item $a$ and $a'$
        \item $a'$ and $a''$
        \item $b$ and $b^\star$
        \item $b$ and $b'$
        \item $b'$ and $b''$
        \item $c$ and $c^\star$
        \item $c$ and $c'$
        \item $c'$ and $c''$
    \end{itemize}

    Recall that $F$ is a fence bypassing $B_1^1$ hence $F$ has to include some $B_j^1$ for $j > 1$.
    More precisely, we may assume that $F$ contains $B_2^1$ as else we could find a shorter fence
    by the above discussion.
    Let $F'$ be the fence that contains $B_1^1$.
    We make a case distinction over which subset of $\{B_c,B_1^2,B_2^1,B_1^3,B_2^3\}$ the fence $F'$ contains as well.

    Note that regardless of inclusion or exclusion of the clause triangle, any fence crossing $a$ and $b$, $a$ and $c$ or $b$ and $c$ has one non-collinear triple starting at one and ending at the other line segment.
    We therefore only analyze the cases, which do not contain the clause triangle.

    We analyze these cases one by one. 
    In all cases we will turn $F$ and $F'$ into shorter fences $\hat F$, $\hat F_1'$, and $\hat F_2'$ where
    $\hat F$ is going to be the fence including at least $B_1^1$ and $B_1^2$ and
    is formed by just including $B_1^1$ instead of bypassing it.
    The fences $\hat F_1'$ and $\hat F_2'$ containing the remaining triangles 
    are going to be formed by splitting $F'$ at some point and removing the remaining empty part.
    We call this \emph{shortcutting} $F'$ at the segment where we split it.
    Note that we can include $B_1^1$ into $\hat F$ at no additional cost.
    This reduces the number of segments in $\mathcal F$ contradicting
    that it is a minimum link fencing.
    To shorten the notation we write $[a,b,c]$ for a sequence of non-collinear triple that $F$ has to cross.

    \subparagraph{$\emptyset$: }
    If $B_1^1$ is contained in its own fence, we could instead reroute $F$ 
    (Figure~\ref{fig:b_1_bypass_triples}), 
    and omit $F'$ entirely, reducing the number of segments.
    
    \subparagraph{$\{B_1^2\}$ and $F'$ has to cross the sets $[a^\star, a, b, b^\star, b, a, a^\star]$: }
    We can shortcut $F$ along $b$ for an additional cost of one segment, 
    which omits three segments, 
    which are necessarily contained in the sequence $[b, a, a^\star, a, b]$.
    
    \subparagraph{$\{B_2^2\}$ and $F'$ has to cross the sets $[a^\star, a, b, b', b'', b', b, a, a^\star]$: }
    This fence can be shortcut along $b'$ for an additional cost of one segment, 
    which omits five segments, 
    which are necessarily contained in the sequence $b', b, a, a^\star, a, b, b'$.
    
    \subparagraph{$\{B_1^2,B_2^2\}$ and $F'$ has to cross the sets $[a^\star, a, b, b', b'', b', b, a, a^\star]$: }
    This case can be resolved in the same way as $\{B_1^2\}$.
    
    \subparagraph{$\{B_2^2,B_2^3\}$ and $F'$ has to cross the sets $[a^\star, a, b, b', b'', b', b, c, c', c'', c', c, a, a^\star]$: }
    This fence can be shortcut along $b'$ and along $c'$, 
    splitting it into two fences containing only inner polygons of the same wire.
    This can be done at an additional cost of two segments saving five segments, 
    which are necessarily contained in the sequence $b', b, a, a^\star, a, c, c'$.
    
    \subparagraph{$\{B_1^2,B_2^2,B_2^3\}$ and $F'$ has to cross the sets $[a^\star, a, b, b', b'', b', b, c, c', c'', c', c, a, a^\star]$: }
    This fence can be shortcut along $b$ and along $c'$, 
    splitting it into two fences containing only inner polygons of the same wire.
    This can be done at an additional cost of two segments saving four segments, 
    which are necessarily contained in the sequence $b, a, a^\star, a, c, c'$ and one segment, 
    necessarily contained in the sequence $c', c, b$.
    
    \subparagraph{$\{B_1^2,B_2^2,B_1^3,B_2^3\}$ and $F'$ has to cross the sets $[a^\star, a, b, b', b'', b', b, c, c', c'', c', c, a, a^\star]$: }
    This fence can be shortcut along $b$ and along $c$, 
    splitting it into two fences containing only inner polygons of the same wire.
    This can be done at an additional cost of two segments saving three segments, 
    which are necessarily contained in the sequence $b, a, a^\star, a, c$.
    
    \subparagraph{$\{B_1^2,B_1^3\}$ and $F'$ has to cross the sets $[a^\star, a, b, b^\star, b, c, c^\star, c, a, a^\star]$: }
    This case can be resolved in the same way as $\{B_1^2,B_2^2,B_1^3,B_2^3\}$.
    
    \subparagraph{$\{B_1^2,B_2^3\}$ and $F'$ has to cross the sets $[a^\star, a, b, b^\star, b, c, c', c'', c', c, a, a^\star]$: }
    This case can be resolved in the same way as $\{B_1^2,B_2^2,B_2^3\}$.
    
    \subparagraph{$\{B_1^2,B_2^2,B_1^3\}$ and $F'$ has to cross the sets $[a^\star, a, b, b', b'', b', b, c, c^\star, c, a, a^\star]$: }
    This case can be resolved in the same way as $\{B_1^2,B_2^2,B_1^3,B_2^3\}$.
 \end{proof}   
 
\begin{figure}
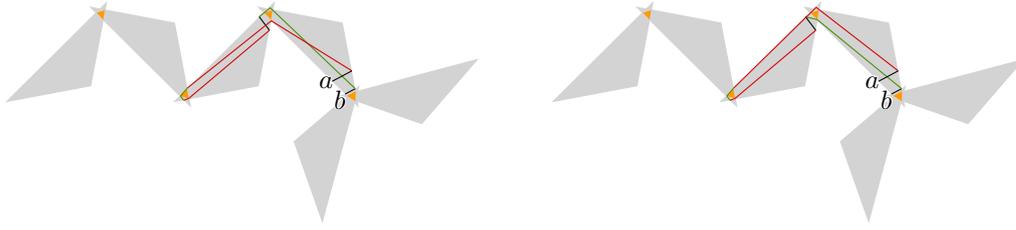

    \centering
    \begin{subfigure}[b]{.48\linewidth}
        \centering
        \includegraphics[page=10]{figures/no_bypass.pdf}
    \end{subfigure}
    \quad
    \begin{subfigure}[b]{.48\linewidth}
        \centering
        \includegraphics[page=11]{figures/no_bypass.pdf}
    \end{subfigure}
    \caption{Non-collinear triples for any fence bypassing $B_1$}
    \label{fig:b_1_bypass_triples}
\end{figure}

\onlyinonewire*
\begin{proof}

    \begin{figure}
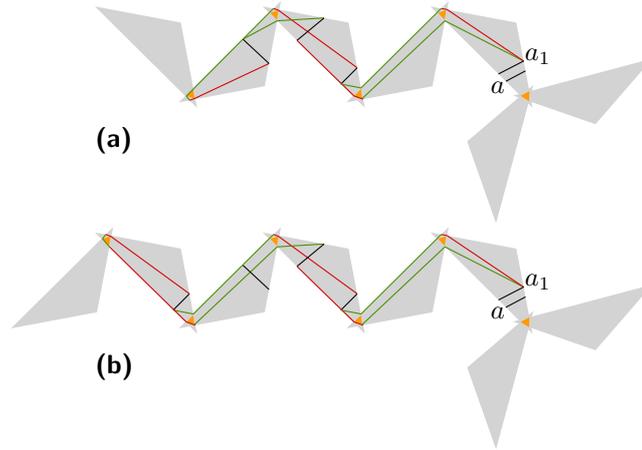

        \centering
        \begin{subfigure}[b]{\linewidth}
            \centering
            \includegraphics[page=21]{figures/no_bypass.pdf}
        \end{subfigure}
        \begin{subfigure}[b]{\linewidth}
            \centering
            \includegraphics[page=22]{figures/no_bypass.pdf}
        \end{subfigure}
        \caption{Non-collinear triples in a fence including at least (a) four or (b) five inner polygons of a wire.}
        \label{fig:non_collinear_for_four}
    \end{figure}
    Assume there is a fence $F$, which includes inner polygons of two wires.
    By Lemma~\ref{lem:no_bypassing_in_clause}, we know that all inner polygons of a wire included in $F$ are consecutive in that wire.
    First also assume that $F$ contains $m>3$ inner polygons of a single wire $W_1$.
    We split such a fence at a set $a$ in $T^1_1$ (again using the sets marked in Figure~\ref{fig:twelvesets}) into two fences $\hat{F}$ containing all $m$ inner polygons of the wire and
    $F_o$ and we have $|\hat{F}| + |F_o| = |F| + 2$.
    By Lemma~\ref{lem:conspolys}, we know that $\hat{F}$ contains at least $2m \geq 8$ segments.
    Assume $m$ to be odd, then we can replace $\hat{F}$ with fences containing two consecutive polygons each (exactly $\frac{m-3}{2}$), plus one containing three inner polygons, resulting in $\frac{3m - 9}{2} + 6$ segments, which is smaller than $2m - 2$ for any odd $m>5$, yielding a better solution and contradicting $\mathcal F$ being a minimum link fence.
    Assume $m$ to be even.
    Then by Lemma~\ref{lem:onlytwofences}, we obtain $\frac{3m}{2}$ segments in total, which is smaller than $2m - 3$ for any even $m > 4$, again yielding a better solution and contradicting $\mathcal F$ being a minimum link fence.
    
    It remains to analyze the cases $m=4$ and $m=5$.
    Assume $m=4$.
    We obtained $F_o$ by shortcutting $F$ at $a$ and therefore $|F_o| = |F| + 1$.
    Since there are seven segments completely contained in the part of $F$ (the part of $F$ starting and ending at $a$, and being completely contained in $W_1$) which is now omitted
    (using Observation~\ref{obs:kbends}), we can cover the four inner polygons of $W_1$ with two triangles, using only six segments, which contradicts $\mathcal F$ being a minimum link fence.
    
    Assume now $m=5$.
    Since there are seven segments completely contained in the part of $F$ which is now omitted
    (using Observation~\ref{obs:kbends}), we can cover the five inner polygons of $W_1$ with two fences including three and two inner polygons, respectively, i.e., $|\hat{F}_1| + |\hat{F}_2| = 9$.
    Therefore we can transform $F$ into a fence, which includes less than 4 inner polygons of $W_1$ (in this case we turned it into $F_o$, which contains none),
    and it suffices to analyze all cases, in which we contain at most the first one, two or three inner polygons of any wire.
    
    We again enumerate all cases, which are not symmetric to each other.
    We label these cases with the included polygon of each wire with the highest index (since no polygon of a wire is bypassed, this completely characterizes the included inner polygons).
    Also, since we are only investigating non-symmetric cases, we assume that the largest index of a polygon included in the first wire is greater or equal to the one in the second one, which in turn is greater or equal to the one in the third.
    
    The possible cases are given below. For every case, we can give a tight lower bound on the number of segments, which are at least needed, for any fence including these inner polygons. This bound is achieved by analyzing non-collinear triples of such a fence and by providing a fence which achieves this bound. The complete enumeration is shown in Appendix~\ref{sec:app:clause_triangle}.
    \begin{itemize}
        \item $\{B_1^3,B_2^3,B_3^3\}$
        \item $\{B_1^3,B_2^3,B_3^2\}$
        \item $\{B_1^3,B_2^3,B_3^1\}$
        \item $\{B_1^3,B_2^3\}$
        \item $\{B_1^3,B_2^2,B_3^2\}$
        \item $\{B_1^3,B_2^2,B_3^1\}$
        \item $\{B_1^3,B_2^2\}$
        \item $\{B_1^3,B_2^1,B_3^1\}$
        \item $\{B_1^3,B_2^1\}$
        \item $\{B_1^2,B_2^2,B_3^2\}$
        \item $\{B_1^2,B_2^2,B_3^1\}$
        \item $\{B_1^2,B_2^2\}$
        \item $\{B_1^2,B_2^1,B_3^1\}$
        \item $\{B_1^2,B_2^1\}$
        \item $\{B_1^1,B_2^1,B_3^1\}$
        \item $\{B_1^1,B_2^1\}$
    \end{itemize}
    
    For the first twelve of these fourteen cases, we show how $F$ can be replaced with a set of fences, which in sum contain less segments than $F$ contradicting that $\mathcal{F}$ is a minimum link fence; for details, we refer again to Appendix~\ref{sec:app:clause_triangle}).
    All replacement fences (including and excluding the clause triangle) are shown in Appendix~\ref{sec:app:clause_triangle}.
    The only exceptions are the two cases $\{B_1^1,B_2^1,B_3^1\}$ and $\{B_1^1,B_2^1\}$.
    In both cases we can still replace $F$ as shown in Appendix~\ref{sec:app:clause_triangle}, however this yields a fencing, which uses the same amount of segments as $F$.
    Note that both fencings only contain fences, which are completely contained in $T = T_c \cup T_1 \cup \ldots \cup T_k$ (reusing the notation of the proof of Lemma~\ref{lem:no_bypassing_in_clause}).
    The replacements for both cases contain at least one fence, which includes only the first polygon of a wire.
    We will show that this implies that we can replace all fences in that wire with a different set of fences, which uses a smaller number of segments and thereby again contradicting that $\mathcal F$ is a minimum link fencing.

    Let $F_1$ be a fence in a wire including only the first polygon of that wire.
    Since every wire contains an even number of inner polygons and no fence includes four or more inner polygons by Lemma~\ref{lem:sizebound}, we know that this wire has to contain at least one other fence of size one or three.
    Let $F_2 \not = F_1$ be a fence in the wire, s.t., $B_j$ is the inner polygon with the smallest index contained in $F_2$ and no other polygon $B_k$ with $1<k<j$ is contained in a fence of size one or three.
    Therefore there are an even number of inner polygons between $F_1$ and $F_2$.
    In particular if $F_2$ contains three inner polygons, we can create new fences, s.t., $B_2, B_3, B_4$ are contained in one fence of size three and $B_5, \ldots, B_{k+2}$ are contained in $\frac{k-2}{2}$ fences of size two.
    $F_1$ and $F_2$ require nine segments in total, while we can replace them with two fences including $B_1, B_2$ and $B_3, B_4$, respectively, requiring only six segments.
    If $F_2$ contains one inner polygon, we can create new fences, s.t., $B_2$ is contained in one fence of size one and $B_3, \ldots, B_{k}$ are contained in $\frac{k-2}{2}$ fences of size two.
    $F_1$ and $F_2$ require six segments in total, while we can replace them with one fence including $B_1, B_2$, requiring only three segments.
 \end{proof}   

\paragraph*{Omitted proofs for interaction with the variable gadgets}
\clausecost*
\begin{proof}
    Note that inclusion of the last polygon of a wire into a fence of a variable gadget always incurs an additional cost (of two or three, depending on the state of the gadget).
    The only reason a minimum link fencing would choose to do so is the fact that, such an inclusion reduces the number of inner polygons, which have to be fenced in a wire from the even number $k$ down to the uneven number $k-1$, which in turn allows the wire to treat the clause triangle as its first inner polygon, and we save the three segments of the clause triangle, leading to a cost of $\frac{3(k_a + k_b + k_c)}{2}$ and the variable gadget segment number increases from $12$ to $14$.
    
    Note further that a minimum link fencing would never choose to do so for two or three wires, since the benefit of including the clause triangle in a fence of a wire can only be achieved once.
    Assume that there is a second wire in the satisfying state, which add a further charge of (at least) two additional segments at its variable gadget.
    By Lemma~\ref{lem:only_in_one_wire}, we know that the clause triangle is only included in a fence of one of the two wires.
    The second wire has to fence an odd number of inner polygons and therefore has to include either a fence of size one or three.
    It therefore still requires $\frac{3k}{2}$ segments.
    Therefore only one wire will ever be put in the satisfying state, even if two or all three variables would allow their connected wires to be put in the satisfying state for an additional charge of two segments each.
\end{proof}

\paragraph*{Full correctness proof}
\nphard*
\begin{proof}
    Given an instance of planar $3,4$-\SAT, we create and place a variable gadget $\mathcal{G}(v)$ for every variable $v\in \mathcal V$ and a clause gadget $\mathcal{G}(c)$ for every clause $c$, as described above.
    The wires of $\mathcal{G}(c)$ connect to a true spike of $\mathcal{G}(v)$ if $v$ appears as a positive literal and to a negative spike if $v$ appears as a negative literal in $c$.
    
    Assume we are given a satisfying variable assignment for the instance of planar $3,4$-\SAT.
    By Lemma~\ref{lem:clausecost} in order to be able to find a fencing with $\frac{3(k_a + k_b + k_c)}{2}$ segments for every clause gadget, at least one wire must be in the satisfying state.
    If one wire is in the satisfying state we can include the last inner polygon of the wire for an additional two segments if and only if the corresponding variable gadget is in the proper state (Lemma~\ref{lem:variablecost}).
    Since every clause has one variable, which satisfies this clause, we choose this variable gadget to be in this state, and fence the variable gadget accordingly, leading to twelve segments plus two segments, per clause, which is connected via a wire in the satisfying state to it.
    Since in the variable assignment, every clause has such a literal and every variable is either true or false, we can do this for every clause and are never required to put a variable gadget both in its true and its false state.
    The final cost is $|\mathcal V| \cdot 12 + \sum \limits_{c \in \mathcal C} (\frac{3(k^c)}{2} + 2)$, where $k^c$ is the number of inner polygons of all three wires of $\mathcal{G}(c)$ summed up.

    Now assume that we are given a fencing of the created \textsf{BMLF} instance with exactly $|\mathcal V| \cdot 12 + \sum \limits_{c \in \mathcal C} (\frac{3(k^c)}{2} + 2)$ segments.
    This implies that for every clause gadget, there is one wire gadget in the satisfying state.
    We follow this wire up to the variable gadget, which has to be in the true state if the wire is connected to a true spike and in a false state otherwise.
    We set the corresponding variable of the variable gadget to true in the former and to false in the latter case.
    Therefore every clause has one variable, which satisfies the clause.
    Since no fencing of a variable gadget exists, in which both a wire connected to a true spike and a wire connected to a false spike can be put into the satisfying state (Lemma~\ref{lem:variablecost}), the implied variable assignment is consistent.
    Finally, this might not necessarily assign all variables to a fixed truth assignment, since even if a clause might be satisfiable with two or even all three literals, we will never set more than one wire into the satisfying state.
    All variables, which do not have a truth value assigned yet, can safely be assigned a random value (e.g., true or false if their variable gadget is true or false).
    
    We conclude that the instance of planar $3,4$-\SAT{} is satisfiable if and only if the constructed two-colored \textsf{BMLF} instance admits a fencing with exactly $|\mathcal V| \cdot 12 + \sum \limits_{c \in \mathcal C} (\frac{3(k^c)}{2} + 2)$ segments.

\end{proof}

\section{Complete enumeration of possible fences in a variable gadget}\label{sec:app:variable_gadget}
This section contains the complete enumeration of all possible cases, which are considered in teh proof of Lemma~\ref{lem:var_gadget}.
In particular, we enumerate all possibilities of which $k$ inner polygons could be included in a single fence, but compensate for rotational and axial symmetry, i.e., two groupings are considered rotationally symmetric if we can construct one from the other, by a combination of shifting all indices of included polygons by the same constant (recall that all computations are considered modulo 8, and we write the index 0 and 8 interchangeably) and relabeling all indices $i$ as $8-i$ (mirroring the instance at a straight line).

Clearly there is exactly one possibility for fencing 0, 1, 7 or 8 inner polygons.
Moreover, the number of choosing $k$ polygons to fence is also characterizing how to chose $8-k$ polygons (which are not fenced) and therefore the number of cases is symmetric for 2 and 6, and 3 and 5. It remains to compute the correct number of cases for 2, 3 and 4.

Exclusively accounting for rotational symmetry, we can compute the number $f^r(k)$ with the formula

\[f^r(k) = \frac{1}{8}\cdot \sum \limits_{d|\text{gcd}(k, 8-k)} \phi(d) {{8/d}\choose{k/d}}\]

where $\phi$ is Eulers $\phi$-function, i.e., the number of co-prime integers smaller than $k$ including 1.
Using this formula we obtain four cases for two polygons (non of which are symmetric to each other), seven cases for three polygons (two of which can be eliminated due to the additional axial symmetry) and ten cases for four polygons (three of which are symmetric).%

All cases are shown below in two individual figures, once illustrating the lower bound (the black numbers) on the left, using non-overlapping, non-colinear triples and once showing that this bound is in fact tight, by providing a fence achieving this exact number of segments (shown in blue).

\begin{figure}[h]
    \centering
    \includegraphics[page=9,scale=.57]{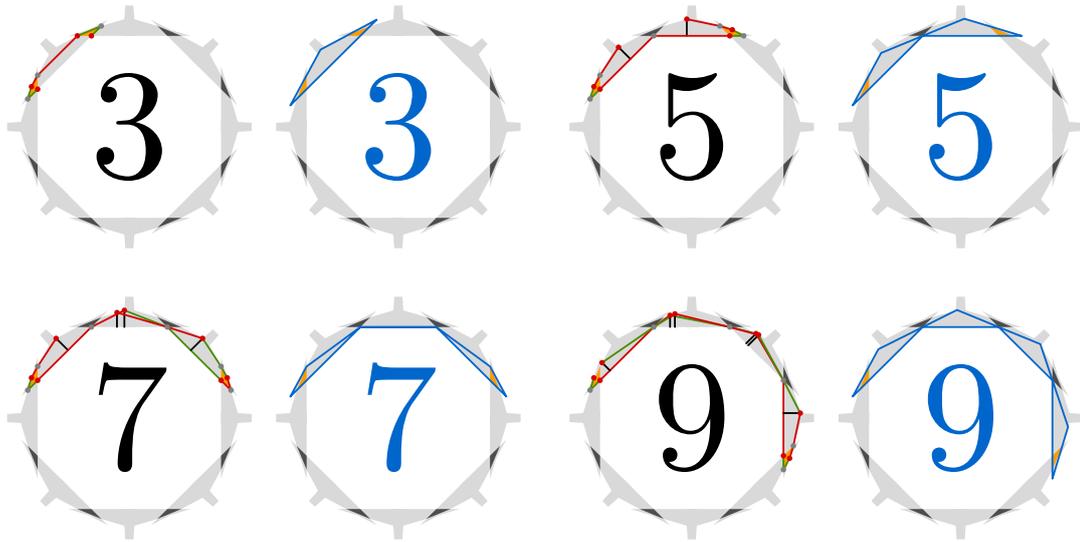}
    \caption{Fences including 2 polygons}
\end{figure}
\begin{figure}[h]
    \centering
    \includegraphics[page=10,scale=.57]{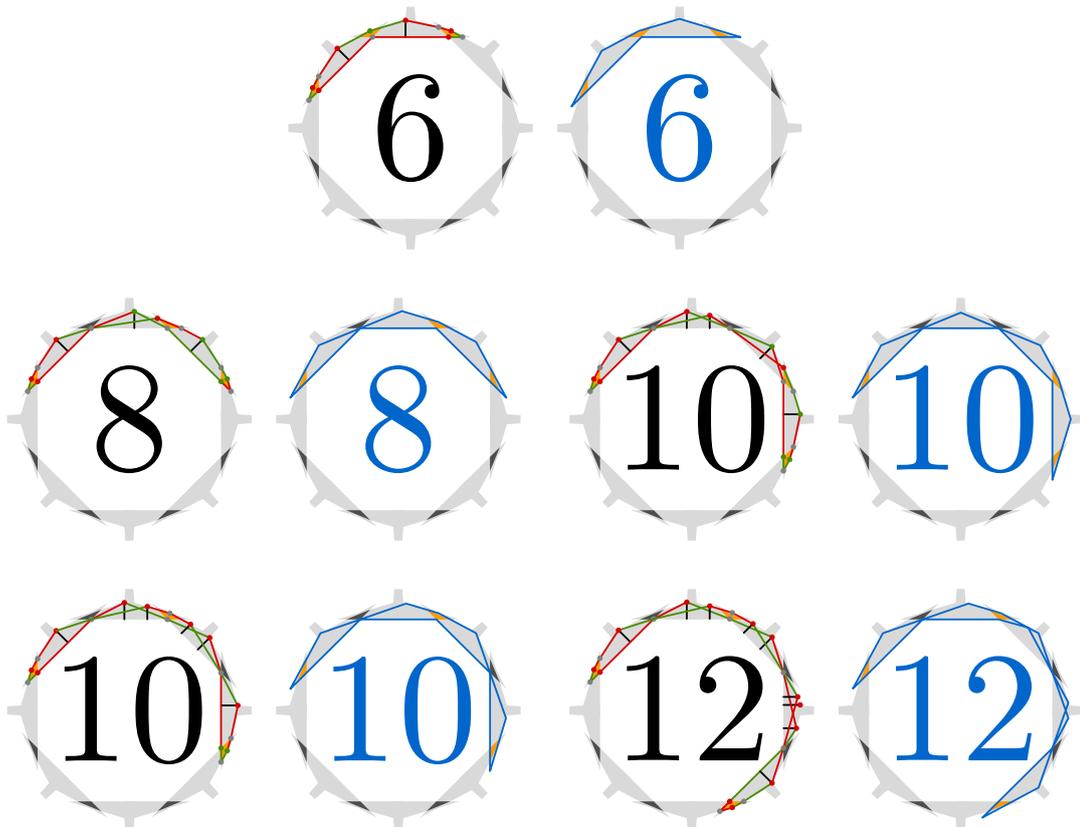}
    \caption{Fences including 3 polygons}
\end{figure}
\begin{figure}[h]
    \centering
    \includegraphics[page=11,scale=.57]{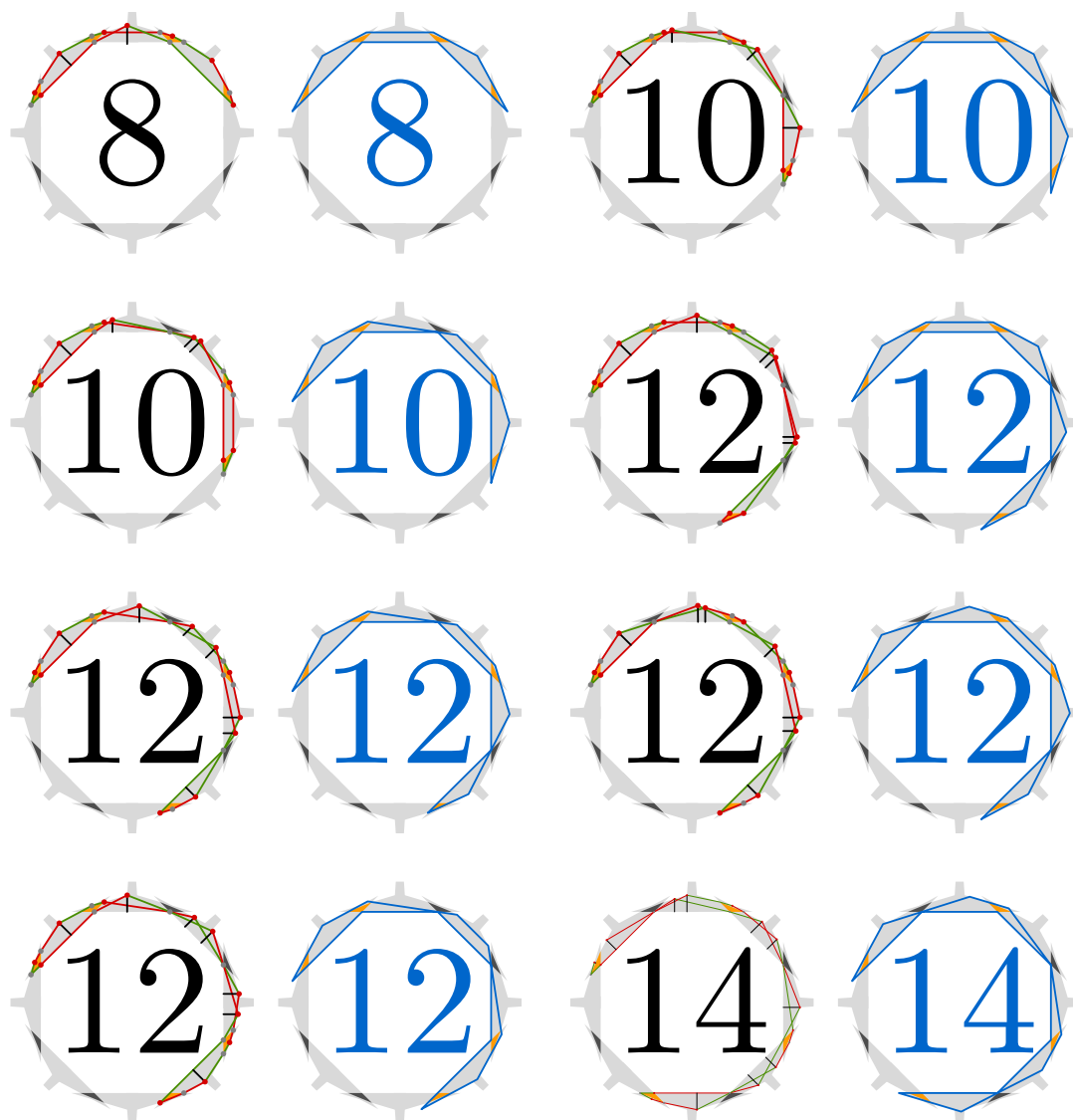}
    \caption{Fences including 4 polygons}
\end{figure}
\begin{figure}[h]
    \centering
    \includegraphics[page=12,scale=.57]{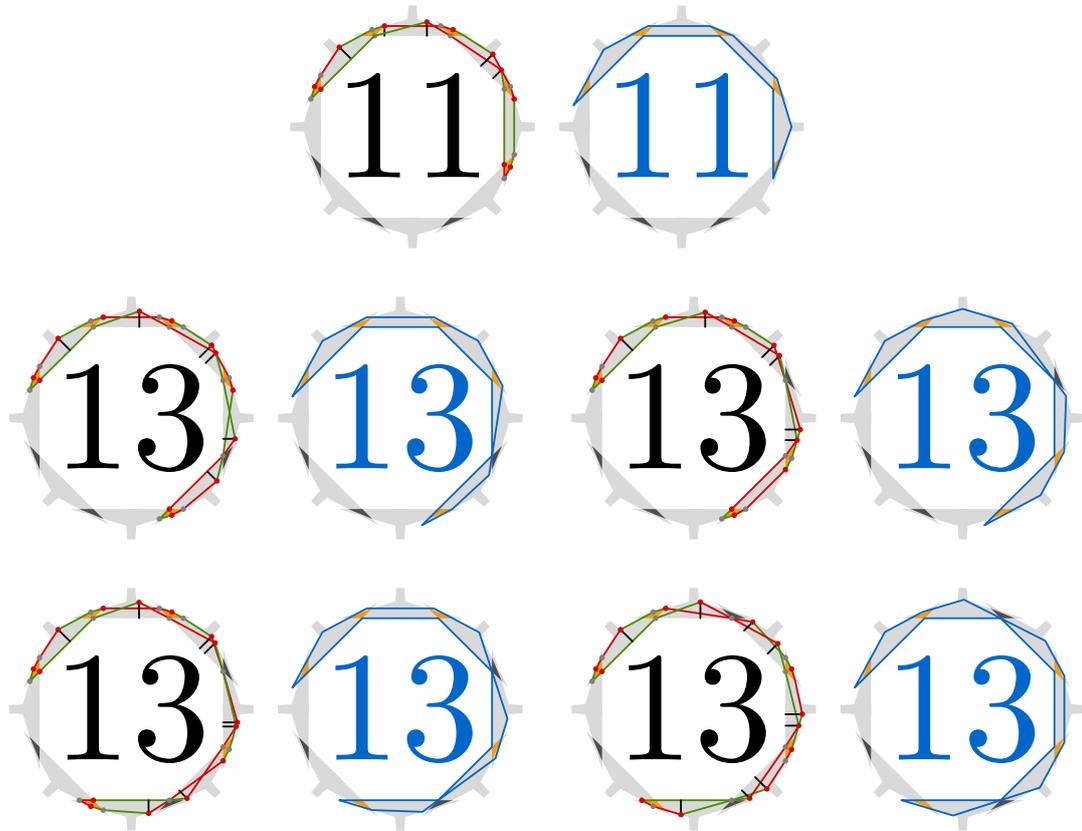}
    \caption{Fences including 5 polygons}
\end{figure}
\begin{figure}[h]
    \centering
    \includegraphics[page=13,scale=.57]{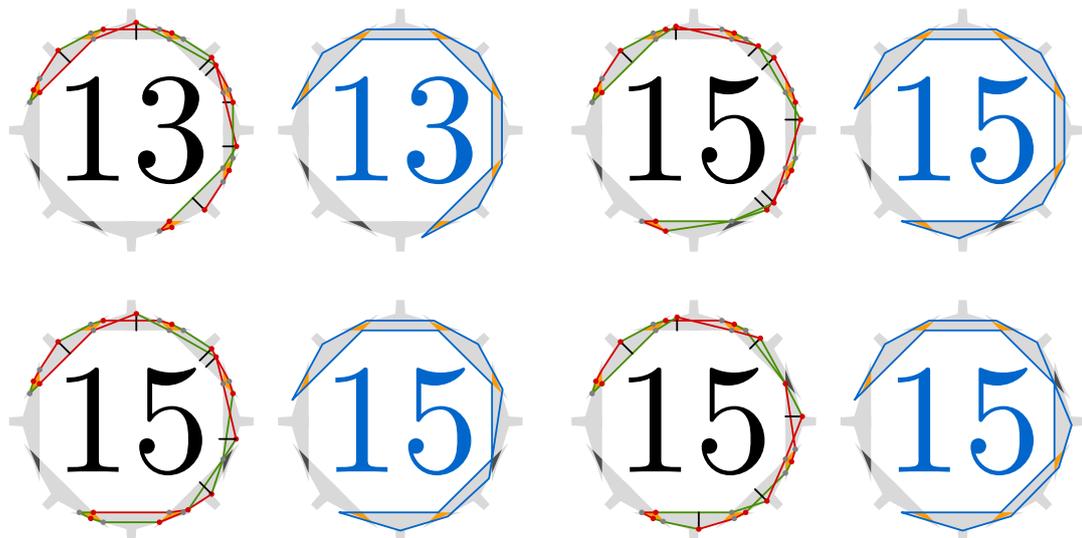}
    \caption{Fences including 6 polygons}
\end{figure}
\begin{figure}[h]
    \centering
    \includegraphics[page=14,scale=.57]{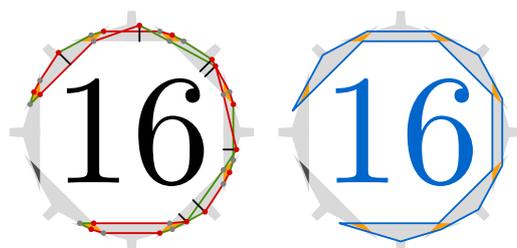}
    \caption{Fences including 7 polygons}
\end{figure}
\begin{figure}[h]
    \centering
    \includegraphics[page=15,scale=.57]{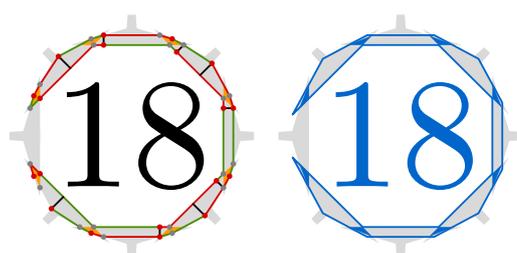}
    \caption{Fences including 8 polygons}
\end{figure}

\clearpage

\section{Complete enumeration of possible fences at the clause triangle}\label{sec:app:clause_triangle}
This section illustrates the enumeration argument of the proof of Lemma~\ref{lem:only_in_one_wire}.
Every case is shown four times in a column.
The first row shows the lower bound on the number of segments for any fence including all polygons of this case, indicated by the three numbers in the upper left corner.
These labels should be read as \textit{XYZ} corresponding to the case $\{B_X^1, B_Y^2, B_Z^3\}$.
The second row shows that all given bounds are in fact tight, as they can be achieved with the shown blue fences.
The third and fourth row show (except for cases 111 and 110), that such a fence can be replaced with a collection fences, which in total achieve a lower number of segments, while either including (3rd row in green) or excluding (4th row in red) the clause triangle, i.e., in both cases, the original fence was not minimal.
\begin{figure}[h]
    \centering
    \includegraphics[page=1,scale=.75]{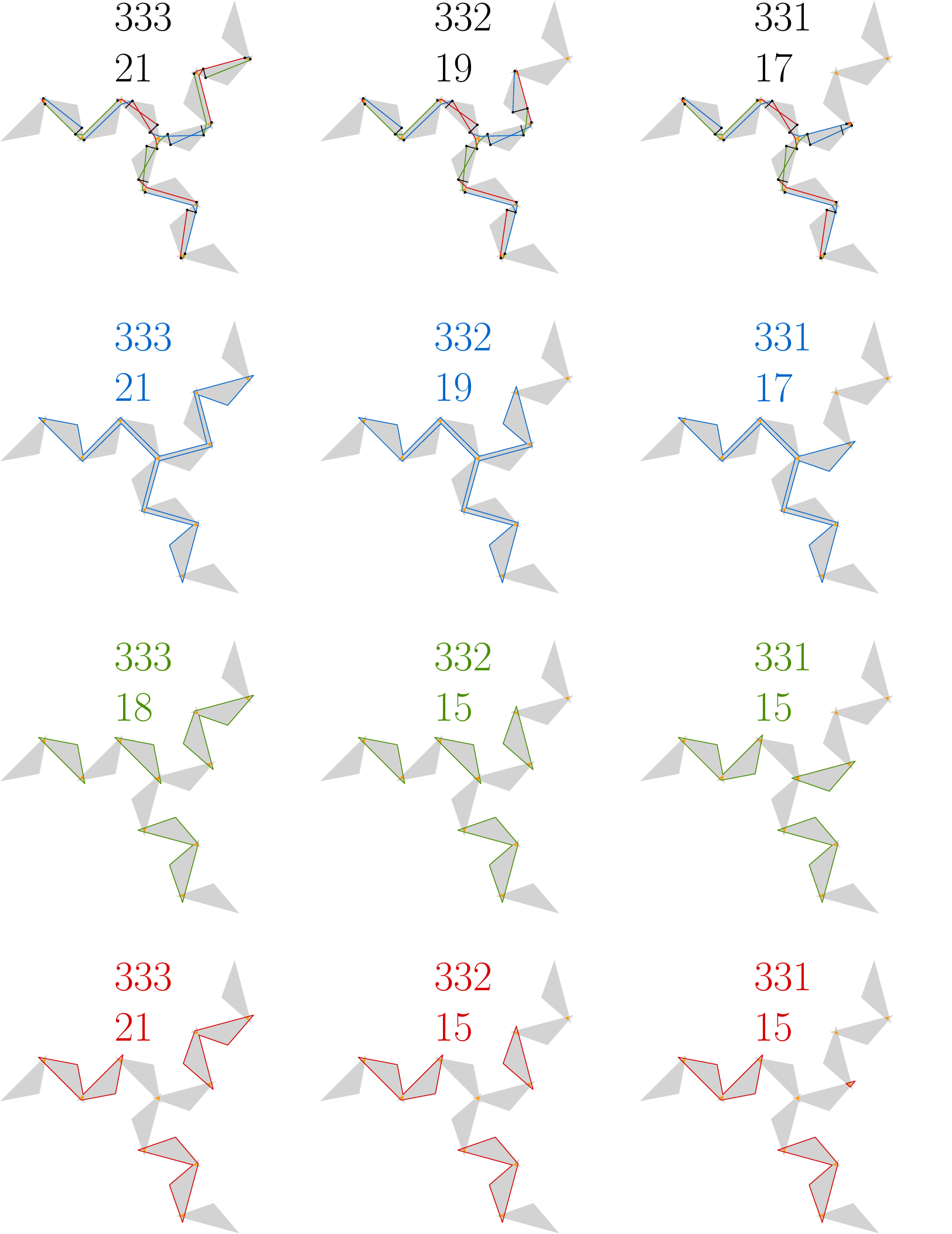}
    \caption{Fences including exactly 3 inner polygons of one and at most 3 of any other wire.}
\end{figure}
\begin{figure}[h]
    \centering
    \includegraphics[page=2,scale=.75]{figures/no_different_wires_new.pdf}
    \caption{Fences including exactly 3 inner polygons of one and at most 3 of any other wire.}
\end{figure}
\begin{figure}[h]
    \centering
    \includegraphics[page=3,scale=.75]{figures/no_different_wires_new.pdf}
    \caption{Fences including exactly 3 inner polygons of one and at most 3 of any other wire.}
\end{figure}
\begin{figure}[h]
    \centering
    \includegraphics[page=4,scale=.75]{figures/no_different_wires_new.pdf}
    \caption{Fences including exactly 2 inner polygons of one and at most 2 of any other wire.}
\end{figure}
\begin{figure}[h]
    \centering
    \includegraphics[page=5,scale=.75]{figures/no_different_wires_new.pdf}
    \caption{Fences including exactly 2 inner polygons of one and at most 2 of any other wire.}
\end{figure}
\begin{figure}[h]
    \centering
    \includegraphics[page=6,scale=.75]{figures/no_different_wires_new.pdf}
    \caption{Fences including exactly 1 inner polygons of one and at most 1 of any other wire.}
\end{figure}

\clearpage

\section{Omitted Proofs of Section~\ref{sec:XPalgo}}

\equivalenceclasses*
\begin{proof}
    Let $\mathcal H^\cap \subseteq \mathcal H$ be the set of holes, which intersect the joint convex hull $C$ of $T_1$ and $T_2$ or
    are fully contained in it.
    Let $l$ be a splitting  segment with endpoints $p \in T_1$ and $q \in T_2$.
    Observe, that since $l$ is completely contained in $C - \mathcal H$ we,
    by definition of equivalent splitting segments, can disregard any hole $H \not \in \mathcal H^\cap$.

    Let $l \in C$ be a splitting segment of $T_1$ and $T_2$, 
    which has a set $L \subseteq \mathcal H^\cap$ of holes to its left and
    a set $R \subseteq \mathcal H^\cap$ to its right.
    There exists another splitting segment $l_c$ 
    whose supporting line is a bitangent on two holes, say $H_L \in L$ and $H_R \in R$,
    such that $l_c$ has also $L$ to its left and $R$ to its right.
    We obtain the splitting segment $l_c$ from $l$  by
    rotating $l$ first around its center point
    until it touches either $H_L$ or $H_R$ in a point $r$ and 
    then continue to rotate $l$ around $r$ until it touches the second hole, see \cref{fig:bitangents_a}. 
    Since any pair of holes has at most four bitangents (\cref{fig:bitangents_b}) there are at most 
    $4|\mathcal H^\cap|^2$ equivalence classes.
\end{proof}

\xpalgorithm*
\begin{proof}
    Throughout, we consider the triangulation $\mathcal T$ of the free space of $\mathcal P$.
    Let $t+1$ be the number of colors in the given instance and
    $c_1, \ldots, c_t$ be the colors of polygons in $\mathcal P$ with $c_i \neq f(Q)$ for all $i = 1,\ldots,t$.
    Observe that the homotopy of a fence including exactly one polygon $P \in \mathcal P$ is unique.
    We find a path $\alpha$ with this homotopy in the triangulated free space between the polygons in $\mathcal P$ by traversing the boundary of $P$ clockwise and at every corner of the polygon adding every incident triangle of $\mathcal T$ in clockwise order to a list.
    This yields a series of triangles from which we can construct a loop $\alpha$ 
    such that $P$ is contained in $\alpha$.
    This can be done, for example, by connecting all midpoints of triangulation segments of a triangle and its successor in the loop.
    Then, we use \cref{thm:klinkblackbox} to obtain a minimum-link fence for $P$ from $\alpha$
    or determine that no fence with at most $k$ links exists.
    Computing all individual fences requires $O(|\mathcal{P}|kn) = O(kn^2)$ time.
    
    Next, we consider every pair $P_1,P_2 \in \mathcal P$ with 
    $P_1 \neq P_2$, $c_i = f(P_1) = f(P_2)$ and $i \in \{1,\ldots,t\}$.
    Note that in contrast to a fence containing only one polygon, 
    a fence containing exactly two polygons can belong to several different homotopy-classes w.r.t.
    the remaining polygons. 
    
    Now, we describe how to compute a minimum-link fence $F$ for two polygons $P_1$ and $P_2$ of $\mathcal P$.
    Recall that all polygons in $\mathcal P$ together have $n$ corners.
    Since we have only $O(n)$ triangles in $\mathcal T$
    we can iterate over all possible $O({n \choose k})\in O(n^k)$ ordered tuples of $k$ triangles.
    Fix in the following such an ordered $k$-tuple $(T_1,\ldots,T_k)$ of triangles in $\mathcal T$.
    There are only $O(|\mathcal P|^2)$ many non-equivalent splitting segments 
    connecting points in triangles $T_i$ and $T_{i+1}$ by Lemma~\ref{lem:equivalence_classes}.
    Consequently, we can iterate over the $O((|\mathcal P|^2)^k) = O(n^{2k})$ 
    many different combinations of splitting segments
    between consecutive triangles.
    In case there are two consecutive triangles between which no possible splitting line exists we reject
    this tuple of triangles.
    Assume in the following that we fix for every pair $T_i$ and $T_{i+1}$ a splitting segment $l_i$.

    It remains to construct a plane loop $\alpha$ as input for the algorithm of Hershberger and Snoeyink~\cite{DBLP:journals/comgeo/HershbergerS94} or
    decide that no such loop exists for the fixed choices of triangles and splitting segments.
    From the triangles $T_1,\ldots,T_k$ and the splitting segments $l_1,\ldots,l_k$
    we derive a sequence of triangles $\tau_1,\ldots,\tau_z$ of $\mathcal T$ that $\alpha$ has to visit.
    Since the triangulation $\mathcal T$ is defined by the corners of polygons in $\mathcal P$
    each splitting segment $l_i$ gives rise to a unique sequence of triangles.
    We concatenate all these sequences starting with the sequence induced by $l_1$ 
    to obtain the sequence $\Theta = (\tau_1,\ldots,\tau_z)$.
    Observe, that triangles along this sequence may repeat and that $\tau_1 = \tau_z$.
    
    It remains to decide if there exists a plane loop $\alpha$ visiting each triangle of $\Theta$ in order.
    To make the following description simpler let $s_i$ be the shared boundary of $\tau_i$ and $\tau_{i+1}$.
    If for no $i$ with $i \in \{1,\ldots,z-1\}$, we find that $s_i = s_{i+1}$ we create a loop $\alpha$ by
    connecting the centerpoint of $s_i$ with the one of $s_{i+1}$.
    Since no boundary repeats, this is always possible without any centerpoint and 
    hence triangle being used twice.
    Finally, we add the segment from the centerpoint of $s_z$ to the centerpoint of $s_1$ which 
    is also always possible as $\tau_1 = \tau_z$.
    
    Now assume there exist at least two indices $i$ and $j$ with 
    $i \neq j$ and $i,j\in\{1,\ldots,z\}$ such that $s_i = s_j$.
    Build the loop as before and let $\alpha_1,\ldots,\alpha_z$ be the segments in the constructed loop.
    Since we allow repeated boundaries there exist subsequences 
    among the $\alpha_1,\ldots,\alpha_z$ that are repeated.
    In the following we assume that we only consider inclusion maximal repeated subsequences.
    Let $A = \{\alpha_1',\ldots,\alpha_a'\}$ be one occurrence of such a subsequence of the $\alpha_i$'s
    that is repeated at least once and let $\hat A$ be a different occurrence.
    Let $s_1',\ldots,s_a'$ be the subsequence of the $s_i$'s that correspond to the 
    triangle boundaries passed by the segments in $A$ and $\hat A = \{\hat{\alpha_1},\ldots,\hat{\alpha_a}\}$.
    Now observe that in a plane loop the vertices of $A$ and $\hat A$ have to always appear 
    in the same order along $s_1',\ldots,s_a'$.
    If they would not, let $s_i'$ and $s_{i+1}'$ for $i \in \{1,\ldots,a-1\}$ be two segments
    such that the vertices of $A$ and $\hat A$ on $s_i'$ and $s_{i+1}'$ are not in the same order.
    Without loss of generality assume the vertex of $A$ on $s_i'$ is above the one of $\hat A$ and %
    the opposite is true for $s_{i+1}'$, 
    then we find that $\alpha_i$ and $\hat{\alpha_i}$ cross.
    Hence, the only decision to make is to decide, for each pair of repeated subsequences, in which order 
    their vertices appear along the corresponding triangle-boundaries.
    Since every repeated subsequence implies an intersection 
    between two segments $l_i$ and $l_j$ with $i \neq j$ and $i,j\in \{1,\ldots,k\}$
    we find that there are at most $k^2$ such repeated sequences.
    Consequently, there are at most $2^{O(k)}$ possible ways to distribute
    the center points in each shared part along the boundaries.

    To sum up, for one pair of polygons $P_1,P_2$ we have to consider 
    $O(n^{2k})$ possible non-homotopy equivalent fences and
    for each homotopy we can check in $O(n^2 \cdot 2^k)$ 
    if there exists a plane loop $\alpha$ realizing it, leading to a total runtime of $O(2^k\cdot n^{2k+2})$
    to enumerate every potential homotopy of a minimum link fence.
    For each of the $O(n^{2k})$ different homotopies, we can use \cref{thm:klinkblackbox} to compute a minimum link fence in $O(kn)$, hence we can compute a minimum link fences for all pairs of polygons in $O(n^2\cdot kn^{2k+2}) = O(kn^{2k+4})$.

    If for any polygon no fence, alone or in a pair with another polygon, 
    with $k$ or fewer links is found, 
    we return that no solution exists.
    Otherwise, let $\lambda_{uv}$ be the number of links for a minimum link fence containing $P_u,P_v\in\mathcal P$ and
    $\lambda_u$ the number of links for a minimum link fence containing only $P_u \in\mathcal P$.
    Consider a complete graph $G$ containing one vertex $u$ for each polygon $P_u\in\mathcal P$ and
    one more vertex $x$ if $|\mathcal P|$ is odd.
    Set the edge-weights $w(u,v) = \min\{\lambda_{uv},\lambda_u + \lambda_v\}$ and
    $w(x,u) = \lambda_u$ for $P_u,P_v \in \mathcal P$.
    If for some $P_u \in \mathcal P$ or pair $P_u,P_v \in \mathcal P$ 
    no fence with $\leq k$ segments existed we remove that edge.

    To find a minimum-link fencing of $\mathcal P$ it now suffices to compute a minimum weight perfect matching in $G$.
    Let $M$ be such a matching.
    Then, a minimum link fencing $\mathcal{F}$ of $\mathcal{P}$ can be constructed from $M$ in the following way.
    If $uv \in M$, we add the (pre-computed) minimum link fences containing only $P_u$ and $P_v$ to $\mathcal{F}$ if the weight $w(u,v) = \lambda_u + \lambda_v$ or
    the fence containing $P_u$ and $P_v$ if $w(u,v) = \lambda_{uv}$.
    If $|\mathcal P|$ was odd we also find an edge $xu \in M$ and
    we add the fence containing only $P_u$ to the fencing.
    Finding a minimum weight perfect matching in a general graph with $V$ vertices and $E$ edges can be done for example in $O(V^2E)$ time 
    via finding a maximum weight perfect matching (e.g.~\cite{edmonds_1965}) in the same graph with edge weights set to
    maximum edge-weight plus one minus the original edge-weight.
    Since $G$ has $O(|\mathcal{P}|)$ vertices and $O(|\mathcal{P}|^2)$ edges we can compute this matching in $O(|\mathcal{P}|^4) = O(n^4)$, which is dominated by the initial computation of the minimum link fences.
\end{proof}

\end{document}